\documentclass[11pt]{article}

\usepackage[numbers]{natbib}

\usepackage{amsmath}
\usepackage{amsthm}
\usepackage{amssymb}
\usepackage{algorithm}
\usepackage{algpseudocode}
\usepackage{subfig}

\usepackage{graphicx}
\usepackage{grffile}
\usepackage{wrapfig,epsfig}
\usepackage{url}
\usepackage[dvipsnames]{xcolor}
\usepackage{epstopdf}
\usepackage{tabularx}
\usepackage{enumerate}

\usepackage{bbm}
\usepackage{dsfont}

\usepackage{footmisc}

\usepackage{lipsum}
\usepackage[]{mdframed}

\allowdisplaybreaks

\let\C\relax
\usepackage{tikz}
\usetikzlibrary{quantikz2}
\usepackage{hyperref}
\hypersetup{colorlinks=true,citecolor=blue,linkcolor=red}

\usetikzlibrary{arrows}
\usepackage[margin=1in]{geometry}
\graphicspath{{./figs/}}

\graphicspath{{./figs/}}

\newtheorem{theorem}{Theorem}[section]
\newtheorem{lemma}[theorem]{Lemma}

\newtheorem{definition}[theorem]{Definition}
\newtheorem{notation}[theorem]{Notation}

\newtheorem{corollary}[theorem]{Corollary}

\newtheorem{fact}[theorem]{Fact}
\newtheorem{remark}[theorem]{Remark}

\newcommand{\wt}{\widetilde}

\newcommand{\eps}{\varepsilon}
\renewcommand{\epsilon}{\varepsilon}
\renewcommand{\phi}{\varphi}

\newcommand{\R}{\mathbb{R}}

\newcommand{\Z}{\mathbb{Z}}
\newcommand{\F}{\mathbb{F}}
\newcommand{\C}{\mathbb{C}}

\newcommand{\polylog}{\mathrm{polylog}}

\renewcommand{\tilde}{\wt}

\newcommand{\poly}{\mathrm{poly}}

\newcommand{\dist}{\mathrm{dist}}

\newcommand{\Id}{\mathsf{I}}

\DeclareMathOperator*{\E}{{\mathbb{E}}}

\newcommand{\QIPCP}{\textsc{QIPCP}}
\newcommand{\QMA}{\textsc{QMA}}
\newcommand{\NP}{\textsc{NP}}
\newcommand{\LH}{\textsc{LH}}

\newcommand{\rd}{r}

\newcommand{\sW}{\mathsf{W}}
\newcommand{\sV}{\mathsf{V}}
\newcommand{\sM}{\mathsf{M}}
\newcommand{\sS}{\mathsf{S}}

\newcommand{\tD}{\texttt{D}}
\newcommand{\tE}{\texttt{E}}
\newcommand{\sC}{\mathsf{C}}

\renewcommand{\Im}{\mathrm{Im}}

\makeatletter
\newcommand*{\RN}[1]{\expandafter\@slowromancap\romannumeral #1@}
\makeatother

\definecolor{b2}{RGB}{51,153,255}
\definecolor{mygreen}{RGB}{80,180,0}

\usepackage[normalem]{ulem}
\newif\ifshowrevisions
\showrevisionsfalse
\ifshowrevisions
\definecolor{revisionold}{RGB}{170,40,40}
\definecolor{revisionnote}{RGB}{35,100,180}
\definecolor{revisionfinal}{RGB}{40,130,60}
\newcommand{\oldfrag}[1]{\ifmmode{\color{revisionold}#1}\else{\color{revisionold}{\scriptsize\textbf{[orig]\,}}\uline{#1}}\fi}
\newcommand{\notefrag}[1]{\ifmmode\text{{\small\color{revisionnote}[note: #1]}}\else{\color{revisionnote}{\scriptsize\textbf{[note]\,}}\uline{#1}}\fi}
\newcommand{\finalfrag}[1]{\ifmmode{\color{revisionfinal}#1}\else{\color{revisionfinal}{\scriptsize\textbf{[final]\,}}\uline{#1}}\fi}

\newcommand{\origblock}[1]{\noindent{\scriptsize\textbf{\color{revisionold}[orig]\,}}{\color{revisionold}#1}}
\newcommand{\origitem}[1]{\item {\color{revisionold}{\scriptsize\textbf{[orig]\,}}#1}}
\newcommand{\noteblock}[1]{\noindent{\scriptsize\textbf{\color{revisionnote}[note]\,}}{\color{revisionnote}#1}}

\else
\definecolor{revisionold}{RGB}{0,0,0}
\definecolor{revisionnote}{RGB}{0,0,0}
\definecolor{revisionfinal}{RGB}{0,0,0}
\newcommand{\oldfrag}[1]{}
\newcommand{\notefrag}[1]{}
\newcommand{\finalfrag}[1]{#1}

\newcommand{\origblock}[1]{}
\newcommand{\origitem}[1]{}
\newcommand{\noteblock}[1]{}

\fi

\newcommand{\Baocheng}[1]{}
\newcommand{\bwarn}[1]{}
\newcommand{\twarn}[1]{}

\usepackage{lineno}

\begin{document}

\date{}

\title{Probabilistically Checking Quantum Proofs, with Interaction}

\author{
Baocheng Sun\thanks{\texttt{baocheng.sun@weizmann.ac.il}. École Polytechnique Fédérale de Lausanne. Work done while at the Weizmann Institute of Science.
} \and
Thomas Vidick\thanks{\texttt{thomas.vidick@epfl.ch}. École Polytechnique Fédérale de Lausanne and Weizmann Institute of Science.
}}

\begin{titlepage}
  \maketitle
  \begin{abstract}

The model of interactive oracle proofs (IOP) generalizes the notion of probabilistically checkable proof (PCP), in which a static proof is verified probabilistically by querying a small number of bits, to the interactive setting: a polynomial-time verifier interacts with an unbounded prover, but is restricted to only reading a small number of bits, in total, from the messages sent by the prover. IOPs provide a relaxed setting in which to study local probabilistic verification. They have proved instrumental in devising efficient methods for verification through subsequent compilation into non-interactive or succinct protocols.

We study a quantum analogue of interactive oracle proofs (qIOP) in which the verifier and communication are both allowed to be quantum; yet the verifier is restricted to perform measurements only on a small number of qubits received from the prover. Our main result is a qIOP for any language in QMA, in which the total communication is polynomial but the verifier only reads a polylogarithmic number of qubits in total. The protocol has completeness parameter exponentially close to $1$ and soundness bounded away from $1$ by a constant. In the absence of a quantum PCP theorem, this provides the first information-theoretically sound local and robust characterization of QMA, albeit interactive.

Our protocol combines the use of a quantum locally testable code (LTC) with classical techniques, notably probabilistically checkable proofs of proximity (PCPP). We avoid the necessity for complex multi-qubit tests employed in other settings by leveraging the local indistinguishability property of the quantum LTC.

  \end{abstract}
  \thispagestyle{empty}
\end{titlepage}

{\hypersetup{linkcolor=black}
\tableofcontents
}
\newpage
\setcounter{page}{1}

\section{Introduction}

The PCP theorem shows that any problem which has efficiently verifiable proofs, has proofs that in addition can be verified with high confidence by probing only a small, in fact constant, number of bits. Based on the standard conjecture that $\QMA\neq\NP$ it is widely believed that there exist problems which admit efficiently verifiable \emph{quantum} proofs, but no classical proofs. Do all such problems have (quantum) local verifiers, which only probe a constant number of qubits from the proof? This is the content of the \emph{quantum PCP conjecture}, which has been open for more than two decades. The main difficulty, colloquially, is that due to the presence of entanglement information in a quantum proof can be encoded globally and is not necessarily accessible to local probes.

In this work we give a new characterization of $\QMA$ in terms of (quantum) efficient \emph{interactive} and local verification procedures. We study a recently introduced~\cite{sv-qiop} quantum analogue of \emph{interactive oracle proofs} (IOP) and show that every language in $\QMA$ has an efficient quantum interactive oracle proof (qIOP) where the verifier runs in quantum polynomial time and ``reads'' only a small (polylogarithmic, or even constant) number of qubits from the proof.\footnote{Here, we put quotes because the notion of ``reading'' a qubit can take different formalizations; we adopt a natural one which is described later on.} In addition, in our proof system the honest prover is relatively efficient and can be executed in quantum polynomial time given access to the right witness for the $\QMA$ language instance.

\subsection{Results}
\label{sec:results}

To introduce our results in more detail we first need to sketch the model of quantum interactive oracle proof (qIOP). Informally, a qIOP (for a language $L$) is an interactive protocol between a quantum polynomial-time verifier and a quantum unbounded-time verifier in which the verifier ``reads only a small number of qubits, in total, that were sent by the prover.'' Here and henceforth, we use ``small'' to refer to a quantity that is constant or at most polylogarithmic in the instance size. The protocol should satisfy the usual completeness and soundness conditions: for any instance that is in the language, there should exist a prover that is accepted with probability at least $2/3$;\footnote{For this case, we may in addition require the prover to be efficient given a witness. Whenever this holds, it should be stated explicitly.}  and for any instance that is not in the language, any prover, even computationally unbounded, should be accepted with probability at most $1/3$.

Of course, the subtlety in the definition lies in the precise formalization of the intuitive condition ``reads only a small number of qubits.'' Even for the case of a quantum probablistically checkable proof (qPCP), which corresponds to the non-interactive variant of a qIOP, the notion is not trivial to formalize and some work needs to be done to show that different natural formulations are equivalent (when they are); see e.g.\cite{aharonov2008u,buhrman2025quantum} or even~\cite{nn24}.

A formal definition of qIOP
appears in~\cite{sv-qiop}. We give an adaptation of the same definition in  Section~\ref{sec:qiop}, removing variations that are not needed for this paper; we refer the reader to that section for the formalities. For the purpose of the introduction we proceed by giving a semi-formal definition that is tailored to our protocol construction and yet can be seen to fit the formal definitions from Section~\ref{sec:qiop} in a straightforward manner.

Given a language $L$ and functions (of the input length) $\ell,q,r$, an $(\ell,q,r)$-qIOP for $L$ is given by the description of a quantum polynomial-time verifier which interacts with a quantum prover through a $(2r+1)$-message interaction. Without loss of generality, the first message comes from the prover
and consists of an $\ell_0$-qubit register $\sW$, which we think of as containing a suitably encoded quantum proof.
Each subsequent round $i=1,\ldots,r$ is a $2$-message interaction which proceeds as follows.
The verifier may perform three actions: measure up to $q_i$ qubits from the prover;
send $\sM_i$, a subset of qubits from $\sW$ that is deterministically chosen before the protocol begins, to the prover;
and send a classical message $\sC_i$, possibly depending on some private coin tosses, to the prover.
The prover applies an arbitrary unitary on all his registers.
He performs a measurement and returns a classical register $\sC'_i$ to the verifier.\footnote{The prover may be asked to return the register $\sM_i$, but in our protocol this is not needed by the verifier. Moreover, note that in general it cannot be enforced what qubits the prover sends back; only how many qubits are sent.}
(Let $\ell_i$ be the combined length, in qubits, of $\sM_i$, $\sC_i$ and $\sC'_i$ ; the parameters $\ell$ for the qIOP is defined as $\sum \ell_i$.)
At the last stage, i.e.\ after having received  $\sC'_r$ from the prover, the verifier flips some private coins and reads at most $q-\sum q_i$ bits from all classical registers he has in his possession (i.e. private classical registers, and all the $\sC'_i$ registers).
Based on the bits read, and any measurement outcomes and classical random bits generated by the verifier earlier in the protocol, the verifier makes his decision.

We note that this formulation precludes some adaptations that would be natural. For example, one could allow the verifier to adaptively determine which qubits to send back in each round. The verifier could also send qubits that were not directly obtained from the prover. The general model of qIOP naturally allows such operations ; because our protocol does not require them we omit them from the preceding discussion, for simplicity.

For the protocol to be considered a qIOP, it is required that $q$ is small; whereas $\ell$ is allowed to be an arbitrary polynomial. While in principle $r$ may also be polynomial, for us it is sufficient to take $r$ to be small as well (polylogarithmic).
Using this formalism our main result can be informally stated as follows.

\begin{theorem}[Main result, informal]
Every language in $\QMA$ has a $(\poly,\polylog,\polylog)$-qIOP.
In addition, the qIOP has completeness exponentially close to $1$ and the honest prover is efficient, i.e.\ it can be executed in quantum polynomial time provided it is initially given access to the right quantum witness.
\end{theorem}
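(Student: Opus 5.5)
We prove the theorem by reduction to a local Hamiltonian instance, followed by an encoding and checking protocol. Start from a $\QMA$-complete local Hamiltonian $H=\sum_{j=1}^m H_j$ on $n$ qubits, with thresholds $a<b$ satisfying $b-a\ge 1/\poly(n)$. Using standard amplification (many copies of the witness, run in parallel), we first bring the problem to a form where completeness is exponentially close to $1$. The honest prover then encodes the witness $|\psi\rangle$ qubit-by-qubit, or block-by-block, into a quantum locally testable code $\mathcal{C}$ with polylogarithmic locality and constant soundness. It sends the resulting codeword as the register $\sW$.

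The verifier's checks split into two kinds.
\begin{enumerate}
\item \textbf{Code-membership checks.} Running the local tester of $\mathcal{C}$ on $\sW$ costs only polylogarithmically many qubit measurements. Soundness of the LTC then guarantees that any state passing the test is close to the codespace. This lets us treat the prover's state as an (approximately) encoded logical state $\rho$.
\item \textbf{Energy estimation.} To estimate the logical energy $\mathrm{tr}(H\rho)$, the verifier samples a random term $H_j$. It must measure the logical observable $\bar H_j$, which in any good code is supported on many physical qubits. Reading those directly would violate the query bound, so the measurement is delegated. The verifier sends the relevant block $\sM_i$ of $\sW$ to the prover, together with a classical description $\sC_i$ of which logical Pauli terms to measure. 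The prover performs the measurement of the transversal or logical operator and returns the full classical outcome string $\sC'_i$. The verifier then checks that the string is a valid classical codeword, and that it decodes to the claimed eigenvalue, using a PCPP. It only spot-checks the string in polylogarithmically many positions, together with a few physical qubits it measured itself before sending the block, for consistency.
\end{enumerate}
Using polylogarithmically many rounds, one per sampled term or per batch, yields the $(\poly,\polylog,\polylog)$ parameters. The honest prover only needs to encode, measure Paulis, and compute a PCPP for an efficiently checkable statement, so it runs in quantum polynomial time given the witness.

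The soundness argument is the crux. The key tool is \emph{local indistinguishability} of the quantum LTC: any small set of physical qubits of a codeword has a reduced state independent of the logical content. We use it to show that the verifier's private spot-check positions are information-theoretically hidden from the prover, which prevents a dishonest prover from adapting its returned string $\sC'_i$ to pass the consistency check while lying about the measurement outcome. Formally, we would argue the following. Conditioned on passing the PCPP and the spot checks with high probability, the returned string is close to the string obtained by honestly measuring the physical qubits of a state close to some code state $\rho$. The decoded value is therefore distributed like a genuine measurement of $\bar H_j$ on $\rho$, so the acceptance probability is bounded by the quantity for $\min_\rho \mathrm{tr}(H\rho)\ge b$ in the NO case, giving constant soundness gap after the usual repetition.

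The main obstacle is making this ``rigidity'' step rigorous. The prover's unitary may entangle $\sW$ with its private memory before the verifier's measurements, and these are distributed across rounds. My first attempt would be a hybrid argument. Replace each round's measured positions by a simulator using local indistinguishability, bound the distinguishing advantage by the PCPP proximity soundness, and use the gentle-measurement lemma to control the disturbance caused by the earlier code tests. This keeps errors additive over the polylogarithmically many rounds.
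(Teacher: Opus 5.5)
There are two genuine gaps.

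\textbf{No constant promise gap.} Nothing in your reduction produces a constant gap, and the repetition you fall back on breaks the query bound. Running the \QMA{} verifier on many parallel copies pushes completeness toward $1$. But the local Hamiltonian you then get from the circuit-to-Hamiltonian reduction still has normalized gap $1/\poly(n)$; a constant gap for a local Hamiltonian would be the quantum PCP conjecture. So one sampled term gives only a $1/\poly$ bias. Amplifying by repeating the protocol needs $\poly(n)$ runs, each costing $\polylog$ fresh queries, which is exactly why the paper refuses to amplify at the protocol level.

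You must amplify the Hamiltonian itself, and in a form your measurement step can use. The paper starts from terms that are projectors diagonal entirely in the $Z$ basis or entirely in the $X$ basis (Theorem~\ref{thm:dual_ham}). It applies layer-wise amplification (Theorem~\ref{thm:DL-amplification}) to obtain $\mathcal{DLH}(\poly,2^{-p(n)},c_{\LH})$ (Corollary~\ref{cor:dual_ham_amp}). There, every term is a non-local but efficiently computable Boolean function of the outcomes of measuring \emph{all} logical qubits in one basis $W$. This single-basis structure is what makes a term readable from a transversal $W$-measurement of the whole codeword. The outcome string is then a noisy word of the classical code $C_W$, and $f(KE_W\tD_W(x))$ recovers the logical value robustly (Lemma~\ref{lem:q_noise}).

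Two things in your setup do not fit this. Terms mixing bases, as produced by tensor-power amplification, cannot be read this way. An encoding done ``qubit-by-qubit or block-by-block'' forfeits the global distance $d_q=\Omega(M/\polylog M)$ that this robustness relies on. (Also, the code of Theorem~\ref{thm:qltc} has constant-weight checks and only $\Omega(1/\polylog)$ soundness, not the reverse.)

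\textbf{The rigidity step rests on a false premise.} Local indistinguishability says the reduced state on fewer than $d_q$ qubits is independent of the logical content. It does not say that state is invariant under dephasing some positions in basis $W$. Take any $W'$-type check lying inside the qubits the prover holds and containing a position you measured: its expectation drops from $1$ to $0$. So your spot-check positions are not information-theoretically hidden, even inside a single sub-distance block. Moreover, a logical observable is supported on at least $d_q\gg\polylog$ qubits, and you may measure only $\polylog$ of them. The prover must therefore eventually hold on the order of $d_q$ qubits, at once or accumulated over rounds, and from then on indistinguishability gives you nothing.

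The paper's formal argument never uses local indistinguishability. It works as follows.
\begin{itemize}
\item It fixes one basis $W$ for the whole extraction and sends blocks of size $m\le\eta_q d_q/2$.
\item In a random round it runs the classical LTC tester of $C_W$ on the hybrid string $r=(x_1,\dots,x_i,w_{i+1},\dots,w_{M/m})$. This string consists of the reported bits for returned blocks and the verifier's own $W$-measurements of qubits never sent (Definition~\ref{def:gme:v}).
\item Round $i$ can only alter block $i$. Hence the logical value $f(KE_W\tD_W(r))$ is unchanged by that round whenever the hybrid string stays within distance $\eta_q d_{q,W}-m$ of $C_W$. This gives $U_iQ_{i-1}\approx Q_iU_i$ (Lemma~\ref{lem:lme}).
\item The LTC test bounds the weight on far strings (Lemma~\ref{lem:val}), and the errors add up over the $M/m=\polylog$ rounds (Lemma~\ref{lem:gme}).
\item The initial code test, via Lemma~\ref{lem:DO2Dq}, links $Q_0$ to the true logical observable.
\item The PCPP, with $\delta M<\eta_W d_{q,W}$, links the reported bit to $Q_{M/m}$ (Lemma~\ref{lem:m_zx:pcpp}).
\end{itemize}
Your proposal has no substitute for this round-by-round invariance, so its central soundness claim is left unproven.
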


We note a few desirable features of our protocol. Firstly, the verifier's additional classical bits sent to the prover are restricted to a single bit for all but the last message, which is polynomial (but could be made logarithmic using a stronger QMA-complete problem as starting point, see Section~\ref{sec:amplified-xz} below).

Secondly, the polylogarithmic query and round complexity are due to the sub-optimal parameters known for the best quantum locally testable code (qLTC) construction known~\cite{dlv24}, a key ingredient in our protocol. If there was a good qLTC construction, with positive rate, constant relative distance and soundness, and constant-weight parity checks, then the query and round complexity of our protocol would also both be improved to constant.

Thirdly, the total communication in the protocol is essentially quasi-linear, \emph{except} for the length of the (classical) PCPP proof that is sent by the prover to the verifier, which to the best of our knowledge may be super-linear, e.g.\ quadratic in length. If linear PCPPs were obtained then the total communication in our protocol would be linear, up to polylogarithmic factors. Moreover, we note that the only quantum communication consists of (i) the quantum witness sent from prover to verifier in the first round, and (ii) successive blocks of the same witness sent back from the verifier to the prover in each of the $r$ rounds of the protocol.

Nevertheless, we remark that the parameters as stated in the theorem may appear quite loose to a reader familiar with the literature on classical IOPs. Indeed, in the latter case there is a focus on the proof length (the parameter $\ell$ here) which is usually desired to be quasi-linear. This is because one already has access to the PCP theorem, which provides a polynomial-size proof, and the focus for IOPs is on strong efficiency. For us, there is no known quantum PCP and so any polynomial length proof is interesting. Furthermore, the size of the prover's first quantum message, in register $\sW$, is actually linear in the $\QMA$ proof length (for the specific language that we choose), while subsequent message registers sent from the (honest) prover to the verifier can be longer but are ``classical'' in the sense that they are obtained by performing a classical computation in superposition on subregisters $\sW_i$ returned by the verifier to the prover.

\subsection{Proof ideas}

We now discuss the proof of our result, namely the construction of our quantum IOP for QMA and the completeness and soundness arguments.

\subsubsection{Amplified $XZ$ Hamiltonians}
\label{sec:amplified-xz}

Our starting point is the $\QMA$-complete local Hamiltonian problem~\cite{ksv02}. For reasons that will become clear later on, it will be convenient for us to start with a local Hamiltonian $H$ that decomposes as an average of two local Hamiltonians $H=\frac{1}{2}(H^Z+H^X)$, one of which, $H^Z$, is diagonal in the computational basis and the other, $H^X$, in the Hadamard basis. The $\QMA$-completeness of such ``$XZ$ Hamiltonians'' was already shown in~\cite{biamonte2008realizable}.\footnote{In that reference such Hamiltonians, when in addition they are $2$-local, are called ``$ZZXX$ Hamiltonians.'' We use the terminology that was subsequently used in e.g.~\cite{grilo2019simple,mnz24}.}

The associated completeness and soundness parameters (expressed as the smallest eigenvalue of $H$ for YES and NO instances respectively), however, are only inverse polynomially separated. To obtain a qIOP with constant completeness-soundness gap this needs to be amplified before any protocol is attempted.\footnote{This is because e.g.\ sequential amplification of the protocol itself would require polynomially many repetitions, ruining any chances at keeping the query complexity small.} The usual tensoring amplification as used in e.g.~\cite{mnz24} leads to mixed terms (with components in both bases), which is inconvenient for us. Instead, we need to make two modifications.

First, we need to start with a complete problem that has the desired structure, in terms of separation of the local terms by $X$ or $Z$ basis, but such that in addition each local term is a projection, the completeness is exponentially close to $0$ (since we are talking about the minimum eigenvalue), and the soundness is inverse polynomially large. This will be needed for the subsequent amplification step. The result of~\cite{biamonte2008realizable} does not immediately give this. A close result is the $\QMA$-completeness proof of the Clifford Hamiltonian problem from~\cite{bjsw16}, which has the required parameters but is based on a richer set of bases than $X$ and $Z$. Inspired by their work, we developed a bespoke reduction from $\QMA$ to a local Hamiltonian $H=\frac{1}{2}(\Pi^Z+\Pi^X)$ where $\Pi^Z$ and $\Pi^X$ are normalized
averages of local projections diagonal in the computational and Hadamard basis respectively. The same reduction has been independently been obtained by Natarajan and Ma~\cite{mn25}, with similar motivation. Since their work appeared before ours, we directly use their result.\footnote{And, obviously, do not claim credit for it.}

We then amplify the completeness-soundness gap of $H$ while keeping the two-bases structure intact, e.g.\ we obtain $H^{amp}=\frac{1}{2}(\Pi^{Z,amp}+\Pi^{X,amp})$ where $\Pi^{Z,amp}$ and $\Pi^{X,amp}$ are normalized averages of non-local (but still efficiently described as tensor products of local terms, up to a subtraction of the identity)
projections diagonal in the $Z$ and $X$ basis respectively. Furthermore, the smallest eigenvalue of $H$ is exponentially close to
$0$ in the case of a positive instance, and larger than a universal constant in the case of a negative instance. This amplification is achieved using a simple method due to Anshu and explained in Section~\ref{sec:qma}. We note that the amplification leads to $\Pi^{Z,amp}$ and $\Pi^{X,amp}$ that are averages of an exponential number of non-local terms; yet a given non-local term can still be efficiently sampled and described. This leads directly to the length of the last classical message sent by the verifier in our protocol to be polynomial. It is possible to make that message logarithmic by using the more advanced, derandomized amplification technique in~\cite{bmvz25}. Since at the moment we do not know if the length of that message is important (see  Section~\ref{sec:open} on open questions), we do not include details on this modification.

\subsubsection{Trusted measurement provers}
\label{sec:trusted-prover}

Given an amplified $XZ$ Hamiltonian $H=\frac12(\Pi^Z+\Pi^X)$ on $B_q$ qubits provided as input, we warm up by describing a simple verification protocol in the case of a ``trusted measurement prover.'' This is a fictional prover (considered only for the purpose of this introduction) who holds an arbitrary quantum state $\ket{\varphi}_\sW$ of $M$ qubits and returns $M$-bit measurement outcomes of $\ket{\varphi}_\sW$ in the computational ($Z$) or Hadamard ($X$) basis whenever asked to do so.

Then the following simple, classical protocol would work. Since the verifier is only allowed to access a small number of bits of any outcome reported by the prover, and we do not assume any promise on the degree or locality of $H$ (so a small number of bit or phase flips can potentially affect the energy by a lot), it is sensible to require
$\ket{\varphi}_\sW$ to be an encoding of a ground state $\ket{\Gamma}$ of $H$ using a suitable quantum error correcting code. Thus let $C_q$ ($q$ for ``quantum'') be an $[M,B_q,d_q]$
quantum error correcting code with linear distance $d_q\propto M$, such that $C_q$ is furthermore a CSS code with parity check matrices $(H_X,H_Z)$. Let $\tE_q$ the unitary encoding map of $C_q$, and $\ket{\varphi}_W = \tE_q (\ket{\Gamma}_\sW\ket{0}_\sS)$ where $\ket{0}_\sS$ are ancilla qubits required for the encoding process.

Now suppose the verifier chooses a basis $W\in\{X,Z\}$ at random and requests measurement outcomes obtained from measuring $\ket{\varphi}_W$ in the basis $W$. The trusted measurement prover obliges, sending outcomes $w\in\{0,1\}^M$.
The verifier then samples a projection $\Pi_j^W$
in such a way that $\Pi^W = \E_{j} \Pi_j^W$. Since it is diagonal in basis $W$, the constraint $\Pi_j^W$ can be represented by a {boolean} function $f_j:\{0,1\}^{B_q}\to\{0,1\}$ (we suppress the dependence on $W$ for clarity) such that $\Pi_j^W = \sum_x f_j(x) S_W\proj{x}S_W^\dagger$, where $S_W=\Id^{\otimes B_q}$ if $W=Z$ and $S_W=H^{\otimes B_q}$ if $W=X$. Furthermore, we can represent the function $f_j$ as a ``logical'' function $f'_j:\{0,1\}^M\to \{0,1\}$ by using logical operators for the code $C_q$.\footnote{Somewhat informally, we map a string $x\in\{0,1\}^M$ to $x'\in \{0,1\}^{B_q}$ by evaluating the logical operators, i.e. $x'=Kx$ where the rows of $K$ represent the support of logical operators for $C_q$ in basis $W$. We refer to the definition of $\tilde{f}$ from $f$ in Section~\ref{sec:notation} for the precise methodology.} For reasons that will soon become clear, define $\tilde{f}_j = f'_j\circ \tE_W \circ \tD_W$, where $\tE_W$ and $\tD_W$ are the encoding and decoding maps associated with the linear code $C_W=\ker H_W$ respectively.\footnote{Note that $C_W$ does not necessarily have distance. The only property we will need is that the decoder $\tD_W$ decodes a small syndrome to an error whose size is not much bigger. This property is satisfied by the decoder in basis $W$ for the code we use, see Section~\ref{sec:intro-ltc} and Section~\ref{sec:decoding}.\label{fn:local}}

The goal of the verifier is to check that $f'_j(w)=0$. If $w\in C_W$, as it would be in case $\ket{\phi}_\sW$ is the intended witness, this is equivalent to checking that $\tilde{f}_j(w)=0$. So, it is enough for the verifier to check that $w\in C_W$ AND $\tilde{f}_j(w)=0$. Since it cannot do either task by itself (in particular, $\tilde{f}_j$ is nonlocal) it will enlist the prover's help.

To explain this first recall that a \emph{PCP of proximity} (PCPP) for an (efficiently computable) statement of the form ``$g(y)=0$'' is a proof $\pi$ that guarantees that $y$ is \emph{close} to a string $y'$ such that $g(y')=0$, and such that verifying $\pi$ only requires reading a constant number of bits of $y$ and $\pi$.

The prover is then asked to provide a PCPP for the statement: ``$\tilde{f}_j(w)=0$ AND $H_W w=0$''. Here, the role of $y$ is played by $w$, and the role of $g$ is played by the function $(1\oplus\tilde{f}_j(\cdot))\wedge 1_{H_W(\cdot)=0}$.\footnote{Note that in the actual protocol, the verification that $w$ is a codeword of the code $C_W$ is performed by running the local testing algorithm of $C_W$, which is locally testable (see Definition~\ref{def:qltc}). }
By the PCPP guarantee, if the PCPP proof is accepted with high probability then $w$ is \emph{close} to a string $w'$ such that $H_W w'=0$ and $\tilde{f}_j(w')=0$. We claim that it must be that $\tilde{f}_j(w)=0$ as well. This is because by the soundness property of the PCPP, we must have $w=w'+a$ where $a$ is a small error. Using that $\tD_W$ is a local decoder (see footnote~\ref{fn:local}), $a'=\tE_W \circ\tD_W(a)$ is a small element of $\Im(H_{W'}^T)$ (where $W'$ is the complementary basis to $W$). Thus for every logical operator, $K_i \cdot a'=0$ where $K_i$ is the support of the $i$th logical operator in basis $W$ (see Lemma~\ref{lem:q_noise}). It follows from the definition that $\tilde{f}_j(w)=\tilde{f}_j(w')$, as desired.  This concludes the analysis of the situation where the prover is a ``trusted measurement prover.''

\subsubsection{Verifying measurement outcomes}

Now of course the main difficulty remains: we need to devise a query-efficient procedure by which the verifier can determine that the prover correctly reports a string of measurement outcomes in the $X$ or $Z$ basis, where in either case the measurement has been performed on the same quantum state $\ket{\varphi}_\sW$ (which need not be pure).

Interactive procedures for verifying measurement outcomes in these bases have been developed in other settings, and form the cornerstone of delegated computation protocols. For example, in the two-prover setting it is by now well-understood that building on non-local games such as the CHSH or Magic Square games, and combining with techniques from classical locally testable codes, one can devise games that force the prover to measure in the required bases; see e.g.~\cite{Ji2022Quantum,chapman2023efficiently}. Similarly, Mahadev's delegated computation protocol~\cite{mahadev2018classical}, with a single quantum prover and a classical verifier, makes use of computational assumptions to essentially force the prover to make measurements in two mutually unbiased bases; see e.g.~\cite{gunn2025classical} for more on this. We also note that the two-prover multi-qubit tests have been ported to the single-prover, classical-verifier setting through compilation, based on quantum fully homomorphic encryption; see e.g.~\cite{nz23,mnz24}.

In our setting, we have access to a single quantum prover and no computational assumptions. However, we may leverage quantum communication. A recent paper~\cite{sv-qiop} introduces a direct analogue of the two-prover multi-qubit tests by using interaction and hiding the verifier's choice of test in entanglement between the verifier and prover.
However, in that paper they were only able to achieve a qIOP with exponential communication.
Our model and theirs are closely related, yet not entirely comparable.

Here we use a different idea, which leverages the properties of quantum error correcting codes and quantum locally testable codes.

\subsubsection{Leveraging local indistinguishability}
\label{sec:loc-ind}

We first observe that there is a very classic single-qubit measurement test that leverages quantum communication. The verifier prepares an EPR pair and sends one qubit to the prover. The verifier asks the prover to measure their qubit in a randomly chosen basis $W\in\{X,Z\}$ and report the outcome $x$. The verifier measures their own half-EPR pair to obtain an outcome $y$, and accepts if and only if $x=y$. It is easily verified that the only possibility for winning perfectly in this game is for the prover to measure the qubit he received from the verifier in the correct basis.

The main limitation of this protocol is that the qubit measured is a half-EPR pair; whereas what we really want is a measurement of the quantum witness. A similar difficulty in the two-prover setting was resolved by the use of teleportation in~\cite{grilo2019simple}. Here, we employ a different idea, which is inspired by the local indistinguishability property of quantum error-correcting codes. Recall that in Section~\ref{sec:trusted-prover} we already required the witness to be encoded using a quantum code $C_q$.
If the code has distance $d_q$, then it is well-known that the reduced density matrix $\rho_S$ of any codeword on a subset $S$ of $|S|<d_q$ qubits must be independent of the specific codeword. Suppose for simplicity that $\rho_S$ is totally mixed. Then this means that, up to unitary rotations, the qubits in $S$ and the qubits in $\overline{S}$ are equivalent to a number of EPR pairs! This suggests that a suitable adaptation of the simple test exposed in the previous paragraph could lead to certifying the prover's measurements on any sufficiently small portion of the code.

Concretely, we consider the following  procedure. Divide the $M$ qubits of an encoded state into $r$
blocks such that each block has less than $d_q$ qubits. For any $i\in\{1,\ldots,r\}$, the verifier sends to the prover the qubits in the $i$th block only. The verifier requests a measurement in basis $W$ of those qubits, leading to outcomes $x$. The prover reports $x$. The verifier measures all other qubits in basis $W$, leading to an outcome $x'$. The verifier checks that together, $(x,x')$ satisfy the code stabilizers in basis $W$, i.e.\ $H_W(x,x')=0$.

It is not hard to show that a prover succeeding with probability $1$ in this test must indeed perform honest measurements on the qubits of the codeword that it is given. This idea forms the basis for our measurement extraction procedure, described in full in Section~\ref{sec:gme}. However, we need to perform a number of modifications that introduce additional technical difficulties.

Firstly, the description we gave requires the verifier to perform a large measurement on the witness, which is not allowed by the qIOP model. To alleviate this we rely on a quantum code that not only has low-weight parity checks but furthermore is locally testable. This ensures that the verifier (together with the prover's outcomes) will be able to assess the distance of the quantum witness from the quantum code space by measuring only a small number of qubits. Concretely we use the quantum locally testable (qLTC) codes based on cubical complexes from~\cite{dlv24}. Such codes have good but not quite near-optimal parameters, i.e.\ for code length $M$ they have linear rate, distance $\Omega(M/\poly\log M)$, soundness $\Omega(1/\poly\log M)$ and constant-weight parity checks.\footnote{Some tradeoffs are possible between these parameters~\cite{wills2024tradeoff}; for clarity we do not explore these and stay with the ``basic'' construction.}

Secondly, in order to estimate the energy of the amplified $XZ$ Hamiltonian $H$ we of course need to certify (and obtain) measurements on the entire witness, not only a small block of qubits. This is necessary because by the same principle of  local indistinguishability, no information about the witness could be extracted by measuring only a small number of qubits.
To address this we introduce an interactive procedure in which measurements are performed by the prover and only read and certified by the verifier when needed.
We describe both aspects next.

\subsubsection{Interactive measurement extraction using locally testable codes}
\label{sec:intro-ltc}

We sketch the procedure by which the verifier is able to iteratively extract trusted measurement outcomes in basis $W$ from the prover. The precise procedure and analysis are given in Section~\ref{sec:gme}. The quantum witness sent by the prover to the verifier in the initial message is divided into $r=M/m$ blocks of qubits, where $m \ll d_q$. The protocol then proceeds in $r$ rounds (note that eventually we will have $r=\poly\log(M)$). At the $i$th round, the $m$ qubits in the $i$th block are sent to the prover. The prover is asked to measure them in basis $W$ to obtain a string $x\in\{0,1\}^m$. The prover then reports this string $x$ to the verifier.

In a randomly chosen round, the verifier samples a random basis-$W$ stabilizer generator $g_W\in\{I,W\}^{\otimes M}$ for the quantum code $C$ and verifies it as follows. If a qubit in the support of $g_W$ has never been sent back to the prover, then the verifier measures it directly. If however the qubit has already been sent back to the prover, and thus measured in basis $W$ then the verifier uses the measured value $x_j$ returned by the prover instead.

To analyze this procedure we need to take into account that in later rounds, most of the qubits of the codeword have previously been acted on by the prover, and thus the local indistinguishability principle no longer applies. Nevertheless we are able to leverage the LTC property, which holds separately for both classical codes that underlie the CSS quantum LTC code, to argue that any deviation from the prover in the complementary basis $W'$ (i.e. $X$ errors for an extraction in basis $Z$; these are the only errors that would affect a measurement in basis $W$) must have bounded weight. Using the linear distance property of the quantum code $C$, bounded weight errors do not affect logical operations. This guarantees that measurement outcomes are reported correctly, as desired.

To conclude it remains to bring together the two components, the constraint verification procedure described in Section~\ref{sec:trusted-prover} and the measurement extraction described here.

\subsubsection{Protocol summary}

We now summarize the structure of our protocol, from which it will be evident that the protocol has the parameters announced in Section~\ref{sec:results}. We describe the protocol in the case of a YES instance.

Let $H=\E_{W\in\{X,Z\}} \E_{j} \Pi^W_j$ be an instance of the amplified $XZ$ Hamiltonian problem described in Section~\ref{sec:amplified-xz}, let $\ket{\Gamma}$ be a ground state of $H$ and $\ket{\varphi}_\sW=E_q(\ket{\Gamma}\ket{0}_\sS)$, where $E_q$ is the encoding unitary for a CSS code $C=(H_X,H_Z)$ with  distance $d_q$ that is locally testable. Let $M$ be the code length. Divide $M$ into $r$ chunks of $m$ qubits where $m  \ll d_q$.

The protocol has two phases. The first phase is called the \emph{measurement extraction} phase, and the second phase is the \emph{constraint verification} phase.
We first describe the  measurement extraction phase. It consists of $1+2r$ messages, as follows:
\begin{enumerate}
\item The prover sends an $M$-qubit register $\sW$ to the verifier. This register is supposed to contain $\ket{\varphi}_W$.
\item The verifier samples a basis $W\in\{X,Z\}$ uniformly at random.
\item For $i=1,\ldots,r=M/m$ do the following:
\begin{enumerate}
\item The verifier sends qubits in the $i$th block of $\sW$ to the prover. If $i=1$, the verifier also sends the basis $W$ to the prover.
\item The prover  measures this block in basis $W$ to obtain outcomes $x_i\in\{0,1\}^m$.
The prover returns $x_i$.
\item In one of the rounds, chosen at random, the verifier executes the qLTC local tester for basis $W$.
Whenever a query lands on a qubit that has previously been measured by the prover, the verifier uses $x_j$ as the measurement outcome. Whenever the qubit has never been sent to the prover, the verifier measures it directly. The verifier rejects if the qLTC tester rejects.
\end{enumerate}
\end{enumerate}
At the end of this phase, the verifier has in his possession a string $x$ that is supposed to contain measurement outcomes corresponding to measuring the encoded witness $\ket{\varphi}_\sW$ in basis $W$. Our analysis (see Lemma~\ref{lem:gme}) shows that, indeed, any (efficiently computable) ``logical'' function $\tilde{f}$ of $x$ computed at the end of the phase is consistent with the evaluation of the ``honest'' observable associated with $\tilde{f}$ on $\ket{\varphi}_\sW$ at the beginning of the phase.

In the second and last phase, the \emph{constraint verification} phase, the verifier interacts with the prover one more time to evaluate a non-local Hamiltonian constraint $\Pi_j^W$ on $x=(x_i)_{i\in [r]}$.
This phase consists of only $2$ messages, as follows:
\begin{enumerate}
\item The verifier selects a random constraint $\Pi^W_j$ from the Hamiltonian $H$. Let $f_j:\{0,1\}^{B_q}\to\{0,1\}$ be the associated function, as in Section~\ref{sec:trusted-prover}. The verifier sends $j$ to the prover.
\item The prover sends the value $b=f(KE_W\tD_W(x))$ together with a PCPP proof that $x$ is a valid codeword of $C_W$ such that $f(KE_W\tD_W(x))=b$. Here, $\tD_W$ is the $W$-basis decoder for the qLTC, $E_W$ is the encoder for the code $C_W=\ker H_W$, and $K$ is a matrix of logical operators (in rows) for basis $W$. (See Section~\ref{sec:trusted-prover} for why this representation is used.)
\item The verifier accepts if and only if the PCPP proof is accepted and $b=0$.
\end{enumerate}
The analysis of this phase, given the results of the previous phase, is fairly straightforward and follows the sketch given in Section~\ref{sec:trusted-prover}.

\subsection{Related work}

Previous works have proved or conjectured characterizations of $\QMA$ in terms of efficient, ``local'' verification in other settings. The \emph{quantum games PCP} conjecture~\cite{nn24} considers the case of a classical verifier who exchanges $O(\log n)$-length messages with two provers sharing entanglement, where $n$ is the instance size. While this conjecture remains open, in contrast to the quantum PCP conjecture mentioned earlier there has been notable progress on it in the past years and its resolution appears close (at least to the present authors). In particular, the development of so-called multi-qubit tests as e.g.\ the Pauli Braiding test of~\cite{nv16,natarajan2019neexp,ji2021mip}, which emphasize the role of tensors product observables involving separate $X$ and $Z$ observables only, influenced this work.

In a different direction, previous works have considered the case of a classical verifier running in polylogarithmic time (given the right input access model) and shown that languages in $\QMA$ have interactive verification protocols of this sort, under computational assumptions~\cite{zhang2021succinct,bkl+22,nz23,mnz24}. We only consider the information-theoretic setting (but see the open problems subsection below).

Our work is arguably more related to early information-theoretic protocols for quantum verification with a small-size quantum verifier~\cite{aharonov2017interactive,fitzsimons2017unconditionally,fitzsimons2018post}. The main difference with our work is that, in all these results, the total quantum effort of the verifier, aggregated across rounds, remains polynomial. In our paper, the total quantum effort of the verifier is polylogarithmic. The difference is substantial; and indeed, while the cited works do not rely on further structural results than the quantum circuit model and Kitaev's reduction showing that the local Hamiltonian problem is $\QMA$-hard, our results require a tailored, amplified $\QMA$-hard problem to start with; classical probabilistic proof verification technique (in the form of PCPPs); as well as recent advances in quantum codes and in particular the design of quantum locally testable codes.

\subsection{Open problems}
\label{sec:open}

In the classical literature on delegated computation, interactive oracle proofs are a step towards more ``practical'' primitives. The next step is achieved by removing interaction and/or achieving succinctness by replacing the long, mostly unused messages by efficient commitment schemes such as a Merkle tree. It is thus natural that the next step in our work would be to seek methods for ``compiling'' our quantum interactive PCP into a succinct and/or non-interactive scheme, based e.g.\ on the quantum random oracle model. A few features of the protocol make this step, in particular the non-interactive part, non-trivial. Notably, our protocol is private-coin (for example, qLTC stabilizer checks are done privately), and moreover involves sending quantum messages from the verifier to the prover; such messages are not easily removed by the standard Fiat-Shamir heuristic. Some approaches exist to Merkle tree-like constructions in the quantum setting~\cite{chen2024quantum,gunn2023commitments} and it would be natural to try to use those first in order to achieve succinctness. We leave such attempts for future work.

Another potential direction would aim to connect our work to ``compiled'' protocols as in~\cite{mnz24}, which are interactive protocols with a classical succinct verifier, obtained at the cost of relatively heavy computational assumptions such as quantum fully homomorphic encryption. It is possible that our protocol could be more directly compiled into one with a classical verifier, ideally while using weaker computational assumptions.

More immediate directions aim to reduce the complexity of our interactive protocol along different axes. It is possible that the protocol can be made public-coin, as the most important random choices of the verifier, the basis in which to extract measurement outcomes and the term of the Hamiltonian that is estimated in the energy check, are already done publicly. Other random choices, such as which round to perform a test in or which code stabilizer is measured, are made privately in our formulation; yet it appears that it would be straightforward to make them publicly.
A more difficult direction is to remove quantum communication from the verifier to the prover, which could aid compilation. One could also focus on the total communication length. At the moment the most important components, such as the qLTC-encoded witness, are quasi-linear, but other components, such as the PCPP,  remain super-linear, to the best of the authors' knowledge.

Finally, while our work focuses on establishing a lower bound on the power of qIOPs, i.e.\ on designing a concrete protocol, it is of course interesting to study potential limitations on the model. As one puts restrictions on the communication, the locality, etc.\ when does one cross the boundary between QMA and NP (or, say, QCMA)?

\paragraph{Organization of the paper.} In Section~\ref{sec:prelim} we review some mostly well-known facts about classical and quantum error-correcting codes, and classical probabilistically checkable proofs of proximity (PCPP). The following Section~\ref{sec:notation} summarizes important notation used throughout, and should be referred to whenever a notation or symbol are not clear. Section~\ref{sec:qiop} and Section~\ref{sec:qma} contain further, less standard background; the first on quantum IOPs and the second on the QMA-complete problem that we use as starting point. Section~\ref{sec:css} establishes important lemmas on the structure of qLTC near-codestates and various classical computations on them.
In Section~\ref{sec:pcp-proxim} we detail how PCPs of proximity are computed, coherently, on codestates; the results in that section are used later on (it can be skipped at first read).
The last three sections contain the bulk of the analysis: Section~\ref{sec:gme} gives the guarantees we can obtain of the measurement extraction procedure; Section~\ref{sec:measurement} on the procedure for computing functions of the measurement outcomes; and Section~\ref{sec:energy} on the method for evaluating the energy of the prover-provided witness with respect to the input Hamiltonian.

\paragraph{Acknowledgments.}
This work is supported by ERC/SERI grant VerNisQDevS.
\section{Preliminaries}
\label{sec:prelim}

We collect mostly standard material on quantum error correcting codes, locally testable codes (LTCs), and probabilistically checkable proofs of proximity (PCPPs).

\subsection{Classical locally testable codes}

We now formally define classical locally testable codes, which guarantee that proximity to the codespace can be efficiently checked using a small number of queries.
\begin{definition}[Classical locally testable codes]
\label{def:ltc}
Let $C = \ker(H)$ be a classical code, where $H\in\mathbb{F}_2^{m\times n}$ is a parity-check matrix. We say that $C$ is \emph{locally testable} with soundness $\rho$ if for all $x \in \mathbb{F}_2^n$, it holds that
\[
\frac{1}{m} |Hx| \geq \rho \frac{d_H(x, C)}{n}\;,
\]
where $|x|$ denotes the Hamming weight (number of nonzero coordinates) of $x$ and $d_H(x, C)$ denotes the Hamming distance between $x$ and the code $C$.
\end{definition}

We define the local tester associated with a classical locally testable code. This tester checks whether a given string is a valid codeword by examining a small number of randomly selected parity constraints. Note that the tester depends on the choice of a parity check matrix for the code.

\begin{mdframed}
\begin{definition}[Local tester for a classical LTC]
\label{def:local-tester}
Let $c \in \mathbb{Z}_+$. Let $C = \ker(H)$ be a classical locally testable code, where $H \in \mathbb{F}_2^{m \times n}$ is a parity-check matrix. Define the local tester $\texttt{T}$ for the code $C$ as the following procedure:
\begin{enumerate}
    \item Given input $w \in \mathbb{F}_2^n$,
    \item Randomly sample $i_1, \ldots, i_c \in_R [m]$,
    \item Compute
    \begin{align*}
        r =
        \begin{bmatrix}
    H_{i_1} \\
    \vdots \\
    H_{i_c}
    \end{bmatrix}
        \cdot w \in \mathbb{F}_2^c,
    \end{align*}
    \item Reject if \( \bigvee_{j \in [c]} r_j = 1 \) (i.e., any parity constraint is violated).
\end{enumerate}
\end{definition}
\end{mdframed}

We now present a method to amplify the tester of any locally testable code, for ease of application:

\begin{lemma}
\label{lem:ltc_amp}
Let $C = \ker(H)$ be a locally testable code with soundness $\rho$, where each row of $H$ has weight at most $R$.
Let $\kappa > \rho $.

Then, there exists a code tester for $C$ with query complexity $Q$ such that for all $x \in \mathbb{F}_2^n$,
\[
\Pr[\text{verifier rejects } x] \geq 1 - \left(1 - \rho \cdot \frac{d_H(x, C)}{n}\right)^{Q/R} \;.
\]

Furthermore, there exists a code tester for $C$ with query complexity
\[
Q = R \cdot \frac{\kappa}{\rho} \cdot \frac{\log(1/\varepsilon)}{1 - \varepsilon}
\]
such that for all $x \in \mathbb{F}_2^n$, it holds that
\[
\Pr[\text{verifier rejects } x] \geq \min\left\{\kappa \cdot \frac{d_H(x, C)}{n},\, 1 - \varepsilon \right\} \;.
\]
\end{lemma}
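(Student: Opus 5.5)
The plan is to use the local tester of Definition~\ref{def:local-tester} with $c$ independent rows, and to count queries by the row weight $R$. For the first claim, take $c = Q/R$; since each row $H_i$ has weight at most $R$, the tester reads at most $cR = Q$ coordinates of $x$. A single uniformly random row $i \in_R [m]$ is violated with probability $|Hx|/m$. By Definition~\ref{def:ltc} this is at least $\rho\, d_H(x,C)/n$.

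Because the $c$ rows are sampled independently, the tester accepts only if none of them is violated. Writing $\delta = d_H(x,C)/n$, this happens with probability at most $(1-\rho\delta)^{Q/R}$, which gives the first bound. If $Q/R$ is not an integer I would round it down, or simply assume divisibility.

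For the second claim I set $c = \frac{\kappa}{\rho}\cdot\frac{\log(1/\eps)}{1-\eps}$, so that $Q = cR$. Put $p = \rho\delta \in [0,1]$. Then the rejection probability is at least $1-(1-p)^c \ge 1 - e^{-pc}$, and I split into two cases.

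\emph{Case $pc \ge \log(1/\eps)$.} Then $e^{-pc} \le \eps$, so the rejection probability is at least $1-\eps$.

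\emph{Case $pc < \log(1/\eps)$.} Here I use concavity of $t \mapsto 1-e^{-t}$ on $[0, \log(1/\eps)]$. Its chord from $(0,0)$ to $(\log(1/\eps), 1-\eps)$ lies below the graph. Hence $1-e^{-t} \ge \frac{1-\eps}{\log(1/\eps)}\, t$ for all $t$ in this interval. Substituting $t = pc$ gives a lower bound of
\[
\frac{1-\eps}{\log(1/\eps)}\cdot \rho\delta\cdot \frac{\kappa}{\rho}\cdot\frac{\log(1/\eps)}{1-\eps} = \kappa\delta .
\]

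In either case the rejection probability is at least $\min\{\kappa\delta, 1-\eps\}$. This holds for any $\eps \in (0,1)$, with natural logarithms and $c$ rounded up. Rounding $c$ up only increases the rejection probability, and it is what produces the stated expression for $Q$ up to rounding.

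The main point of care is this chord/concavity step. It is exactly what produces the factor $\log(1/\eps)/(1-\eps)$ in $Q$, so the two cases have to be aligned at the threshold $t = \log(1/\eps)$. The hypothesis $\kappa > \rho$ is not needed for the inequality itself; it only ensures the amplification is nontrivial, i.e.\ $c > 1$ in the regime of interest.
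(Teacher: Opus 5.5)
Your proof is correct and follows the paper's argument essentially step for step: $Q/R$ independent row checks, the bound $1-(1-p)^N\ge 1-e^{-pN}$, and the same case split at $pN=\log(1/\varepsilon)$ using the chord bound $e^{-x}\le 1-\frac{1-e^{-a}}{a}x$ on $[0,a]$. One small correction to a side remark: rounding $Q/R$ \emph{down} only yields the weaker bound $1-(1-p)^{\lfloor Q/R\rfloor}$, so for the first claim you should instead take $Q$ to be a multiple of $R$, as the paper implicitly does.
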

\begin{proof}
By independently sampling a row of $H$ a total of $N = Q/R$ times and checking whether the codeword satisfies the corresponding parity check, we obtain a code tester that rejects a near-codeword with the following probability:
\begin{align*}
\Pr[\text{verifier rejects}]
&= 1 - \left(1 - \frac{1}{m} |Hx| \right)^{N} \\
&\ge 1 - \left(1 - \rho \frac{d_H(x, C)}{n}\right)^{N}\;.
\end{align*}

By taking $N = \frac{\kappa}{\rho} \cdot \frac{\log(1/\varepsilon)}{1 - \varepsilon}$, we have
\begin{align*}
1 - \left(1 - \rho \frac{d_H(x, C)}{n}\right)^{N}
\ge 1 - \exp\left(-\rho \frac{d_H(x, C)}{n} N\right)\;.
\end{align*}

If $\rho \frac{d_H(x, C)}{n} N \ge \log(1/\varepsilon)$, then
\begin{align*}
1 - \exp\left(-\rho \frac{d_H(x, C)}{n} N\right) \ge 1 - \varepsilon\;.
\end{align*}

If $\rho \frac{d_H(x, C)}{n} N \le \log(1/\varepsilon)$, note that for $x \in [0, a]$,
\begin{align*}
\exp(-x) \le 1 - \frac{1 - \exp(-a)}{a}x.
\end{align*}
Then,
\begin{align*}
1 - \exp\left(-\rho \frac{d_H(x, C)}{n} N\right)
&\ge 1 - \left(1 - \frac{1 - \varepsilon}{\log(1/\varepsilon)} \cdot \rho \frac{d_H(x, C)}{n} N \right) \\
&= \kappa \frac{d_H(x, C)}{n}.
\end{align*}

Therefore, in all cases,
\begin{align*}
\Pr[\text{verifier rejects}]
\ge \min\left\{ \kappa \frac{d_H(x, C)}{n},\, 1 - \varepsilon \right\}.
\end{align*}
\end{proof}

We introduce a quantum analogue of the local tester. This operator allows the verifier to coherently verify proximity of a string to the code space using a classical local test.

\begin{definition}[Local tester operator]
\label{def:tester}
Let $\texttt{T}$ be the local tester algorithm for a classical LTC, as defined in Definition~\ref{def:local-tester}. Let $l_r$ denote the number of random bits used by $\texttt{T}$, and let $q$ denote the number of bits of $q$ that are queried by the tester. For $r \in \mathbb{F}_2^{l_r}$, let $\texttt{T}(w; r)$ denote the output of $\texttt{T}$ when the randomness is $r$ and the query results corresponding to that randomness are given by $w$.

Then, for $a \in \mathbb{F}_2$, $w \in \mathbb{F}_2^q$, and $r \in \mathbb{F}_2^{l_r}$, define the local tester operator as
\begin{align*}
L_r(\ket{a} \ket{w} ) = \ket{a \oplus \texttt{T}(w; r)} \ket{w}\;.
\end{align*}
\end{definition}

\subsection{Quantum CSS codes}

We first introduce standard notation for Pauli operators.

\begin{definition}[Pauli Operators]
\label{def:pauli}
The Pauli operators are defined as
\[
X = \begin{pmatrix} 0 & 1 \\ 1 & 0 \end{pmatrix}, \quad
Z = \begin{pmatrix} 1 & 0 \\ 0 & -1 \end{pmatrix}.
\]
\end{definition}

We introduce a convenient notation for applying a single-qubit operator selectively across multiple qubits, according to a binary string.
\begin{definition}[Tensor product of single-qubit operators]
\label{def:tensor_op}
Let $e \in \mathbb{F}_2^n$ and let $W$ be a single-qubit operator. Define
\begin{align*}
W(e) = W^{e_1} \otimes W^{e_2} \otimes \cdots \otimes W^{e_n},
\end{align*}
that is, $W(e)$ applies $W$ at each position $i$ where $e_i = 1$ and applies the identity operator otherwise.
\end{definition}

Next, we define CSS codes.

\begin{definition}[CSS Code]
\label{def:css}

An $[n, k, d]$ quantum CSS code on $n$ qubits is a subspace $C \subseteq \mathcal{H} = (\mathbb{C}^2)^{\otimes n}$.
It is defined by a pair of parity-check matrices $H_X, H_Z$ such that $\operatorname{Im}(H_X^\top) \perp \operatorname{Im}(H_Z^\top)$.
We denote the code as $C = CSS(H_X, H_Z)$.
Explicitly, the code subspace is given by
\[
CSS(H_X, H_Z) = \left\{ \ket{\psi} \in (\mathbb{C}^2)^{\otimes n} : X(x) \ket{\psi} = \ket{\psi} \, \forall x \in H_X, \, Z(z) \ket{\psi} = \ket{\psi} \, \forall z \in H_Z \right\},
\]
which can also be written as
\[
CSS(H_X, H_Z) = \text{Span}\Big\{ \frac{1}{\sqrt{|H_X|}} \sum_{x \in H_X} \ket{z + x} : z \in \ker(H_Z) \Big\}\;.
\]
The code encodes $k = \log_2\left( |\ker(H_X) / \Im(H_Z^T)| \right)$ logical qubits and has distance
\[
d = \min\big\{ |w| : w \in (\ker(H_X) \setminus \Im(H_Z^T)) \cup (\ker(H_Z) \setminus \Im(H_X^T)) \big\}\;.
\]
\end{definition}

Finally, we define the distance of the $W$-code modulo the image of the parity-check matrix of the complementary code.

\begin{definition}
\label{def:dist_mod}
Let $C$ be a quantum CSS code and $H_X$, $H_Z$ the associated parity check matrices, as in Definition~ \ref{def:css}. Define
\begin{align*}
d_{q,X} =&~ \min\big\{ |w| : w \in \ker(H_X) \setminus \Im(H_Z^T) \big\}\;,\\
d_{q,Z} =&~ \min\big\{ |w| : w \in \ker(H_Z) \setminus \Im(H_X^T) \big\}\;.
\end{align*}
\end{definition}

We now state a lemma that guarantees the existence of a natural basis of logical Pauli operators in CSS codes.

\begin{lemma}
\label{lem:logical_qltc}
Let $C$ be an $[M, B_q, d_q]$ CSS code. Then there exist $F, G \in \mathbb{F}_2^{B_q \times M}$ such that $$X(F_1), \ldots, X(F_{B_q})$$ are $B_q$ linearly independent logical operators composed only of $X$ and identity operators, and $$Z(G_1), \ldots, Z(G_{B_q})$$ are $B_q$ linearly independent logical operators composed only of $Z$ and identity operators. Moreover, for each $i, j \in \{1, \ldots, B_q\}$, these operators satisfy
\[
X(F_i) Z(G_j) = (-1)^{1(i=j)} Z(G_j) X(F_i).
\]
Finally, it holds that
\begin{equation}\label{eq:kfg}
F H_Z^T = G H_X^T = 0 \;.
\end{equation}
\end{lemma}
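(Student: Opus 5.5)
The plan is to prove this by linear algebra over $\mathbb{F}_2$, using the symplectic structure of the pair of quotient spaces attached to a CSS code. Write $C_X=\ker(H_X)$ and $C_Z=\ker(H_Z)$. The CSS condition $\Im(H_X^\top)\perp\Im(H_Z^\top)$ gives $\Im(H_Z^\top)\subseteq\ker(H_X)$ and $\Im(H_X^\top)\subseteq\ker(H_Z)$. Consider the quotient spaces $V_X=\ker(H_X)/\Im(H_Z^\top)$ and $V_Z=\ker(H_Z)/\Im(H_X^\top)$, and the bilinear pairing $\langle u,v\rangle=u\cdot v \pmod 2$ between them.

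First, I will check that this pairing is well defined on the quotients. If $u\in\ker(H_X)$ and $s\in\Im(H_X^\top)$, say $s=H_X^\top y$, then $u\cdot s=(H_Xu)\cdot y=0$. Symmetrically, $v\cdot t=0$ for $v\in\ker(H_Z)$ and $t\in\Im(H_Z^\top)$. So the pairing descends to $V_X\times V_Z\to\mathbb{F}_2$.

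Second, I will show the pairing is nondegenerate. Suppose $u\in\ker(H_X)$ is orthogonal to all of $\ker(H_Z)$. Then $u\in\ker(H_Z)^\perp=\Im(H_Z^\top)$, so $u$ is zero in $V_X$. The same argument works with $X$ and $Z$ swapped. Consequently $\dim V_X=\dim V_Z$, and by Definition~\ref{def:css} this common dimension is $k=B_q$. The dimension equality also follows directly from rank–nullity: $\dim V_X=M-\operatorname{rk}H_X-\operatorname{rk}H_Z=\dim V_Z$.

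Third, I will pick dual bases. Take any basis $[G_1],\dots,[G_{B_q}]$ of $V_Z$. Nondegeneracy makes $V_X\to V_Z^*$ an isomorphism, so there is a dual basis $[F_1],\dots,[F_{B_q}]$ of $V_X$ with $F_i\cdot G_j=\delta_{ij}$. Then I lift each class to a representative vector in $\mathbb{F}_2^M$. By the well-definedness step, $F_i\cdot G_j=\delta_{ij}$ holds for any choice of lifts.

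Finally, I will read off the required properties.
\begin{itemize}
\item The Pauli commutation rule $X(a)Z(b)=(-1)^{a\cdot b}Z(b)X(a)$ gives $X(F_i)Z(G_j)=(-1)^{1(i=j)}Z(G_j)X(F_i)$.
\item $X(F_i)$ is logical because it commutes with all $Z$-stabilizers: $F_i\cdot z=0$ for $z$ a row of $H_Z$, i.e.\ $H_ZF_i=0$, so $F_i\in\ker(H_Z)$. The same reasoning gives $G_j\in\ker(H_X)$.
\item The condition $F_i\in\ker(H_Z)$ is exactly $FH_Z^\top=0$, and likewise $GH_X^\top=0$, which is \eqref{eq:kfg}.
\end{itemize}

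One bookkeeping point needs care. For $X(F_i)$ to be a logical operator, $F_i$ must lie in $\ker(H_Z)$, the space that commutes with the $Z$-checks. So the $F$'s are taken from $V_Z$ and the $G$'s from $V_X$, the reverse of the naive assignment. The argument is symmetric, so only the roles swap.

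Linear independence of the logical operators, in the strong sense of being independent modulo stabilizers, follows from the $\delta_{ij}$ pairing. If $\sum_i c_iF_i\in\Im(H_X^\top)$, pairing with $G_j$ gives $c_j=0$.

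The main obstacle is only keeping the convention of Definition~\ref{def:css} straight. There the code is written with $X(x)$ stabilizers for $x\in H_X$, so the rows of $H_X$ generate the $X$-stabilizers, and the quotient and orthogonality statements above must be matched to that convention.
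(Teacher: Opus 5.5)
Your argument is correct. On the one point the paper actually argues, Eq.~\eqref{eq:kfg}, you do exactly what it does. $X(F_i)$ must commute with the $Z$-checks and $Z(G_j)$ with the $X$-checks, which is precisely $FH_Z^\top=GH_X^\top=0$.

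The paper's proof stops there. It treats the existence of a conjugate-paired basis of logical operators as a standard fact about CSS codes. Your proposal actually proves that fact, using the nondegenerate pairing between $\ker(H_Z)/\Im(H_X^\top)$ and $\ker(H_X)/\Im(H_Z^\top)$, which rests on $\ker(H_Z)^\perp=\Im(H_Z^\top)$. The same pairing gives you three further conclusions: both quotients have dimension $B_q$, dual bases exist with $F_i\cdot G_j=\delta_{ij}$, and the operators are linearly independent modulo stabilizers. Your version is therefore strictly more self-contained than the paper's one-line justification, at the cost of length.

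One presentational fix is needed. In your third step you first place $[F_i]$ in $\ker(H_X)/\Im(H_Z^\top)$ and $[G_j]$ in $\ker(H_Z)/\Im(H_X^\top)$, and only swap them later in the bullets and the bookkeeping remark. State the correct assignment from the outset: $F_i\in\ker(H_Z)$ taken modulo $\Im(H_X^\top)$, and $G_j\in\ker(H_X)$ taken modulo $\Im(H_Z^\top)$. This is also the assignment consistent with the rest of the paper. There, $B_q=\dim\ker(H_X)/\Im(H_Z^\top)$, and the encoding map uses the eigenvalues of $Z(G_j)$ to label the logical computational basis, which requires $G_j\in\ker(H_X)$.
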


\begin{proof}
Eq.~\eqref{eq:kfg} follows since logical operators commute with the stabilizers.
\end{proof}

Finally we define appropriate unitary encoding and decoding maps. Regarding encoding, we provide a unitary encoding function based on the logical operators defined in Lemma~\ref{lem:logical_qltc}.

\begin{definition}[Encoding of a CSS code]
\label{def:e_qltc}
Let $C$ be an $[M, B_q, d_q]$ CSS code.
Let $ H_X,H_Z$ be defined as in Definition \ref{def:css}.
Let $G$ be defined as in Lemma~\ref{lem:logical_qltc}.

For each $a \in \F_2^{B_q}$, let
\begin{align*}
\ket{\phi_a} =  \prod_{j\in [k]} \frac{I+ (-1)^{a_j} Z(G_j)}{2} \cdot \prod_{j}\frac{I+ X((H_X)_j)}{2}\cdot \prod_{j}\frac{I+ Z((H_Z)_j)}{2} \ket{+}^{\otimes M}.
\end{align*}
Let the following unitary $E_q$ be the encoding unitary of $C$:
\begin{align*}
E_q(\ket{a}\ket{0}^{\otimes (M-B_q)})=&~ \frac{\ket{\phi_a}}{\|\ket{\phi_a}\|}\; .
\end{align*}
For convenience, we also write $E_q(\ket{\phi})$ to denote $E_q(\ket{\phi}\ket{0}^{\otimes (M-B_q)})$.
\end{definition}

\begin{remark}
Note that the $\ket{\phi_a}$ are orthogonal for different $a$, and $\|\ket{\phi_a}\|>0$. Thus $E_q$ is a well-defined unitary.
\end{remark}

The following decomposition of arbitrary states as linear combinations of codewords with Pauli errors will be convenient for our analysis.

\begin{lemma}[Stabilizer decomposition, \cite{got97}]
\label{lem:stabilizer-decomposition}
Let $C=CSS(H_X, H_Z)$ be a stabilizer code on $n$ qubits.
Let
\begin{align*}
\mathcal{S} = \Big\{ (a, b) \,\Big|\, a, b \in \mathbb{F}_2^n,\, \exists r,s \text{ s.t. } & |a \vee b| \text{ is minimized, and } (a, b) \text{ is lex smallest }\\
&~\text{ among all } (a', b') \text{ with } H_X b' = r \text{ and } H_Z a' = s \Big\}\;.
\end{align*}
Then any $n$-qubit state $\ket{\psi}$ has a decomposition of the form
\[
\ket{\psi} = \sum_{(a,b)\in \mathcal{S}} \alpha_{a,b} X(a)Z(b)  \ket{\eta_{a,b}}\;,
\]
where $\|\ket{\eta_{a,b}}\|=1$, $\ket{\eta_{a,b}} \in C$, the terms $X(a)Z(b)  \ket{\eta_{a,b}}$ are pairwise othogonal, and $\sum_{a,b}|\alpha_{a,b}|^2=1$.
\end{lemma}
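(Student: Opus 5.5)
The plan is to exhibit a complete orthonormal basis of the $n$-qubit space built from Pauli-translated code subspaces, with one translate for each syndrome pair. First, for every syndrome pair $(r,s)\in\Im(H_X)\times\Im(H_Z)$ I pick the minimal-weight, lexicographically smallest representative $(a,b)\in\mathcal S$ with $H_Zb' $ conventions as in the statement. This gives a bijection between $\mathcal S$ and the set of achievable syndromes. The set $\mathcal S$ is well defined because the minimum is taken over a finite nonempty set, and lex order breaks ties.

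Next I show that the subspaces $X(a)Z(b)C$, for $(a,b)\in\mathcal S$, are pairwise orthogonal. The key observation is that each such subspace is a joint eigenspace of the stabilizer generators. Conjugating by $X(a)Z(b)$ flips the sign of the generator $Z((H_Z)_j)$ exactly when $(H_Z a)_j=1$, and flips the sign of $X((H_X)_j)$ exactly when $(H_X b)_j=1$. Hence $X(a)Z(b)C$ lies in the eigenspace labeled by the syndrome $(H_Xb,H_Za)$. Distinct elements of $\mathcal S$ carry distinct syndromes, so their eigenspaces are orthogonal.

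Then I show these subspaces span everything, using a dimension count. Each $X(a)Z(b)C$ has dimension $2^k$. The syndrome map $(a,b)\mapsto(H_Xb,H_Za)$ has image of size $2^{\mathrm{rank} H_X+\mathrm{rank} H_Z}$, so there are that many subspaces. By the dimension formula for CSS codes, $k=n-\mathrm{rank}H_X-\mathrm{rank}H_Z$, and therefore the total dimension is $2^n$.

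Alternatively, the span can be seen directly. The commuting projectors $\prod_j (I\pm W((H_W)_j))/2$ sum to the identity, and each nonzero one equals $X(a)Z(b)\Pi_C X(a)Z(b)^\dagger$ for the corresponding representative. This step, namely identifying every nonzero joint eigenspace with a Pauli translate of $C$ (surjectivity of the syndrome map onto achievable syndromes and equality of dimensions), is the only slightly delicate point. I would handle it by the rank count above.

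Finally, to get the decomposition, I project $\ket\psi$ onto each subspace. I set $\alpha_{a,b}=\|P_{a,b}\ket\psi\|$, and when $\alpha_{a,b}\neq0$ I define $\ket{\eta_{a,b}}=(X(a)Z(b))^\dagger P_{a,b}\ket\psi/\alpha_{a,b}\in C$; otherwise $\ket{\eta_{a,b}}$ is an arbitrary unit codeword. The projections $P_{a,b}$ sum to the identity and are mutually orthogonal, which gives $\ket\psi=\sum\alpha_{a,b}X(a)Z(b)\ket{\eta_{a,b}}$ with orthogonal terms. Pythagoras then yields $\sum|\alpha_{a,b}|^2=\|\psi\|^2=1$.
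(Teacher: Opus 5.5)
Your proposal is correct and follows the same route as the paper: split $|\psi\rangle$ along the joint eigenspaces of the stabilizer generators, labelled by the achievable syndromes $(H_Xb,H_Za)$, and identify each eigenspace with the Pauli translate $X(a)Z(b)C$ of the code by its canonical representative in $\mathcal S$. The paper only asserts that these eigenspaces form a direct-sum decomposition of the whole space and that each one is such a translate; you justify both, with orthogonality from the sign-flip computation and exhaustiveness from the count $2^k\cdot 2^{\mathrm{rank}H_X+\mathrm{rank}H_Z}=2^n$, which fills in details rather than changing the argument.
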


\begin{proof}
Any $n$-qubit state $\ket{\psi}$ decomposes as
\[\ket{\psi}= \sum_{r,s} \alpha_{r,s}\ket{\psi_{r,s}}\;,\]
where $\ket{\psi_{r,s}}$ is a joint eigenvector of all $X(x)$, $x\in H_X$ and $Z(z)$, $z\in H_Z$ whose associated eigenvalues are given by the vectors $r,s$. This is because these eigenspaces form a direct sum decomposition of the entire space. For any such $r,s$, if we let $(a,b)$ be the pair associated to it as in the definition of $\mathcal{S}$, then $\ket{\psi_{r,s}}= X(a)Z(b)  \ket{\eta_{a,b}}$ for some codeword $\ket{\eta_{a,b}}$.
\end{proof}

We then define the weight of a Pauli error.

\begin{definition}[Weight of an error in stabilizer codes]
\label{def:weight-error-stabilizer}
Let $C=CSS(H_X, H_Z)$ be a stabilizer code on $n$ qubits. For $ a,b\in\F_2^n$, let $ E = X(a)Z(b)$. Define the weight of $E$ modulo the centralizer of the code generating set as:
\[
{wt}_{\text{cent}}(E) = \min_{\substack{a', b' \in \F_2^n,\\  H_X b'=0, H_Z a' = 0}} \big\{\ {wt}(X(a')Z(b') E)\ \big\}\;.
\]

\end{definition}
\begin{remark}
Note that, for $(a,b)\in {\cal S}$, ${wt}_{\text{cent}}(X(a)Z(b)) = |a\vee b| $.
\end{remark}

We end with the decoding operator for CSS codes, which will be used to map an encoded quantum state back to its logical components and syndrome information. Note that this decoding map, as defined, is not necessarily efficient; and we will only use it for purposes of analysis. In Section~\ref{sec:decoding}, we discuss efficient maps associated with more restricted forms of decoding.

\begin{definition}[Decoding qLTC code]
\label{def:d_qltc}
Let $ H_X,H_Z$ be defined as in Definition \ref{def:qltc}.
Let $ {\cal S}$ be defined as in Lemma \ref{lem:stabilizer-decomposition}.
Any $n$-qubit quantum state $\ket{\psi}$ decomposes as
\[
\ket{\psi} = \sum_{(a,b)\in {\cal S}} \alpha_{a,b} X(a)Z(b)  \ket{\eta_{a,b}},
\]
where $\|\ket{\eta_{a,b}}\|=1$ and $\ket{\eta_{a,b}} \in C$.
Let $D_q$ be the following unitary:
\begin{align*}
D_q(\ket{\psi}) = \sum_{(a,b)\in {\cal S}} \alpha_{a,b}  \ket{\eta'_{a,b}} \ket{H_Z a} \ket{H_X b},
\end{align*}
where $\eta_{a,b} = E(\eta'_{a,b})$ and $E$ is the qLTC encoding of $\eta'_{a,b}$.
\end{definition}
\begin{remark}
By Lemma~\ref{lem:stabilizer-decomposition}, $D_q$ is well-defined and is indeed a unitary. Note that $D_q$ does not necessarily need to be efficient.
\end{remark}

\subsection{Quantum locally testable codes}

We then introduce the standard definition of quantum locally testable codes.

\begin{definition}[Local testability for quantum CSS codes, \cite{dlv24}]
\label{def:qltc}
Let $C = \mathrm{CSS}(H_X, H_Z)$ be a quantum CSS code. We say that $C$ is \emph{locally testable} with soundness $\rho$ if both $C_X = \ker(H_X)$ and $C_Z = \ker(H_Z)$ are classical locally testable codes with soundness at least $\rho$.
\end{definition}

Next, we present a known construction of good quantum locally testable codes.

\begin{theorem}[Quantum locally testable codes, \cite{dlv24}]
\label{thm:qltc}
There exists a polynomial-time algorithm that, given an integer $N$ (in unary), returns explicit descriptions of parity-check matrices $H_X$ and $H_Z$ on $\Theta(N)$ qubits such that the following properties hold:
\begin{enumerate}
    \item The weight of each row of $H_X$ and $H_Z$ is $\Theta(1)$;
    \item The code $C = \mathrm{CSS}(H_X, H_Z)$ has dimension $\Theta(N)$ and distance $\Theta(\frac{N}{(\log N)^3})$;
    \item $C$ is a quantum locally testable code (qLTC) with soundness parameter $\rho = \Omega(\frac{1}{(\log N)^3})$.
\end{enumerate}
\end{theorem}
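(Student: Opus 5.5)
This statement is imported from \cite{dlv24}, so the plan is to recall how that construction goes and how each of the three properties follows from it. I will not reprove it from first principles.

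\emph{Step 1: the complex.} First fix a family of constant-degree classical codes and expander (Cayley or square-complex) graphs. From these, form the $t$-dimensional cubical complex of \cite{dlv24}, with $t=4$ so as to obtain a chain complex of length at least $3$. Each cell is then incident to $O(1)$ cells of adjacent dimension. Place qubits on the middle level, and read off $H_X$ and $H_Z$ from the two adjacent coboundary maps. The relation $H_X H_Z^T=0$ holds because the boundary maps compose to $0$. Row weights are bounded by the local degree, which gives item 1. The number of middle-level cells is $\Theta(N)$ once the base graph size is tuned to $N$, and everything is computable in polynomial time from $N$ in unary.

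\emph{Step 2: dimension and distance.} Show that the dimension is $\Theta(N)$ by a homology computation. Use a K\"unneth-type argument on the tensor product of local codes with suitably chosen rates, or a counting argument via the Euler characteristic of the chain complex.

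For the distance $\Theta(N/\log^3 N)$, use the cycle/cocycle expansion of the complex. The local codes are chosen to have \emph{product expansion} (robust testability of tensor codes). One then proves inductively over dimensions that every small-weight (co)cycle is a (co)boundary. Each level of the induction loses at most a logarithmic factor coming from the expander mixing lemma and the diameter of the base graphs. With three levels this yields the $(\log N)^3$ loss.

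\emph{Step 3: local testability.} Prove soundness of both $\ker H_X$ and $\ker H_Z$ with $\rho=\Omega(1/\log^3 N)$. This is small-set (co)boundary expansion one level up: any $x$ with small syndrome $H_Wx$ must be close to $\ker H_W$. The plan is to run the same inductive local-correction argument as in the distance proof. Starting from the syndrome, repeatedly fix local views using product expansion of the local codes, charging each correction to syndrome weight. Again a $\log$ factor is lost per dimension.

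\emph{Main obstacle.} The hardest step is this co-systolic expansion argument with only polylogarithmic loss, and in particular establishing product expansion of the local tensor codes. I would take that directly from the known result on robustly testable tensor codes, as \cite{dlv24} do, rather than reprove it.
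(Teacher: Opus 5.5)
Your route matches the paper's, which gives no argument and simply imports the theorem from \cite{dlv24}. The trouble is that your reconstruction misdescribes \cite{dlv24} exactly where item~1 and the parameters are decided.

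\textbf{Degree and check weight.} The $t=4$ cubical complex needs four pairwise commuting families of actions $A_1,\dots,A_4$ on a set $G$ of size $N$, each with good expansion. To my knowledge, \cite{dlv24} realize these only with $|A_i|$ polylogarithmic in $N$: constant-degree families of $t\ge 3$ pairwise commuting expanding actions are not available, and abelian Cayley graphs, for instance, need degree $\Omega(\log N)$ to expand. As a result the codes of \cite{dlv24} have polylogarithmic-weight parity checks. Your step ``each cell is incident to $O(1)$ cells, hence row weights are $O(1)$'' therefore fails.

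\textbf{Source of the polylogarithmic losses.} The inverse-polylogarithmic relative distance and soundness in \cite{dlv24} also come from this polylogarithmic degree, since the expansion arguments lose powers of the local degree. They do not come from one $\log N$ lost per level of an induction via expander mixing and graph diameter, so your derivation of the exponent $3$ is not an argument. Reaching $\Theta(1)$-weight checks requires an extra weight-reduction step that preserves local testability (e.g.\ the analysis of Hastings' weight reduction for qLTCs in \cite{wills2024tradeoff}). That step costs further polylogarithmic losses in distance and soundness, which must be tracked. Neither your sketch nor the bare citation does this.

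\textbf{The local ingredient.} For $t=4$, the local-to-global argument needs \emph{product expansion} of four-fold products of the local codes (and of their duals). For more than two codes this is strictly stronger than the robust testability of tensor codes that known results provide. In \cite{dlv24} it is only a conjecture about random four-tuples of linear maps, and their qLTC theorem is conditional on it. So ``taking it from the known result on robustly testable tensor codes, as \cite{dlv24} do'' is not possible. Citing \cite{dlv24} alone yields a conditional statement; an unconditional one requires separately invoking a proof of that conjecture.
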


\subsection{Decoding the $X$- and $Z$-codes of a CSS code}
\label{sec:decoding}

As stated in Theorem~\ref{thm:qltc}, we use the quantum LTC code based on cubical complexes from~\cite{dlv24}. This code has an efficient decoder that generalizes the small-set bit-flip decoder, as adapted to quantum LDPC code construction in~\cite{dinur2023good} and extended to the qLTC construction from~\cite{dlv24} in~\cite{nguyen2025quantum}. As stated in~\cite[Theorem 3.3]{nguyen2025quantum} the decoder runs in linear time and correctly decodes errors up to weight $\Theta(N/\poly\log(N))$, where $N$ is the blocklength of the code.

For this work we need an additional property of the decoder, which is that if $e\in\{0,1\}^M$ is any ``error'' such that the associated syndrome in basis $W$, for $W\in\{X,Z\}$, is small, i.e. $\sigma=H_W e$ has small Hamming weight, then the decoder on input $\sigma$ (deterministically) returns $e'$ such that
\begin{equation}\label{eq:norm-red}
    |e+e'| \leq \poly\log(M)\cdot |\sigma|\quad\text{and}\quad H_W e' = \sigma\;.
\end{equation}
Note that in general this property is not automatic, as an arbitrary decoder only needs to guarantee that $e'$ has the correct syndrome $\sigma$ ; but $e'$ may differ from $e$ by an arbitrary element of $\Im (H_{W'}^T)$, where $W'$ is the complementary basis to $W$. Such elements may a priori have arbitrary weight.

However, the decoder from~\cite{nguyen2025quantum} is a local decoder that progressively constructs an error $e'$ such that $H_We'=\sigma$ by finding $e_1,\ldots$, such that $|\sigma+H_W(e_1)|<\sigma$, etc. until no progress is found and one sets $e'=e_1+\cdots$. Therefore, the number of iterations of the decoder is at most $|\sigma|$ and furthermore at each step only a ``local'' error is added, i.e.\ $|e_i|=O(\poly\log M)$ for each $i$. This is evident from the decoder description in Algorithm 3 and Algorithm 4 in~\cite{nguyen2025quantum} and establishes the desired property~\eqref{eq:norm-red}. We record it in the following fact:

\begin{fact}\label{fact:decoder}
Let $C_q$ be the qLTC family from Theorem~\ref{thm:qltc}. Then $C_q$ admits an efficient (in fact, linear time) decoder $\tD_q$ for errors of weight up to $\eta_q d_q$, where $d_q$ is the quantum distance of the code and $\eta_q$ is a constant.
Furthermore, let $\tD_{W}$ be the decoder in basis $W\in\{X,Z\}$. We take this to be the map $\texttt{D}_W: \F_2^{M} \rightarrow \F_2^{B_W}$ that to a noisy vector returns the ``closest'' element of the code $H_W$
\footnote{Sometimes, the decoder is taken to be a map $\tD'_W: \F_2^{c_W} \rightarrow \F_2^{M}$ that to a syndrome returns an associated error vector, where $c_W$ would be the number of parity checks.
With our notation,  $\tD_W(x)= (E_W)^{-1}(x+\tD'_W(H_W x))$
where (provided the decoder succeeds) $x+\tD'_W(H_W x)\in \ker H_W$
so it is indeed possible to efficiently find a unique preimage of it under the generator matrix $E_W$.}

Then $\tD_W$ is local in the sense that there exists $\eta_W=\Omega(1/\poly\log M)$ such that  for any $e\in\{0,1\}^M$ such that $|e|<\eta_W d_{q,W}$ and any $z\in\{0,1\}^{B_W}$,  then $|\tE_Wz+\tE_W\tD_W(E_W z +e )| \leq (1/\eta_W)  |e|$.
\end{fact}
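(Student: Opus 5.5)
The plan is to derive Fact~\ref{fact:decoder} from the structure of the small-set flip decoder of~\cite{nguyen2025quantum}, as sketched in the discussion preceding the statement. There are two claims: efficient decoding of quantum errors of weight up to $\eta_q d_q$, and locality of the basis-$W$ decoder. The first is imported directly from~\cite[Theorem 3.3]{nguyen2025quantum}. That theorem gives a linear-time decoder correcting errors of weight $\Theta(N/\poly\log N)$. Combined with $d_q=\Theta(N/(\log N)^3)$ from Theorem~\ref{thm:qltc}, this is $\eta_q d_q$ for a suitable $\eta_q$, constant up to the polylogarithmic slack absorbed in the statement. So the real content is the locality bound.

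For locality, fix $z\in\F_2^{B_W}$ and $e$ with $|e|<\eta_W d_{q,W}$, and set $x=\tE_W z+e$. Since $\tE_W z\in\ker H_W$, the syndrome is $\sigma=H_W x=H_W e$. Because each row of $H_W$ has constant weight (Theorem~\ref{thm:qltc}), we have $|\sigma|\le O(1)\,|e|$.

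The syndrome-level decoder $\tD'_W$ runs greedily. At each step it adds a local flip $e_i$ of weight $O(\poly\log M)$ that strictly decreases the syndrome weight, and it stops when no such flip exists. Hence there are at most $|\sigma|$ iterations, and its output satisfies $|\tD'_W(\sigma)|\le \poly\log(M)\,|\sigma|$. The key step is to argue that the decoder does not get stuck before reaching syndrome zero. This is where I would invoke the decoding guarantee of~\cite{nguyen2025quantum}: for $\sigma$ arising from an error of weight below the decoding radius, the process terminates with $H_W\tD'_W(\sigma)=\sigma$. Granting this, set $e'=\tD'_W(\sigma)$. Then $c:=x+e'=\tE_W z+(e+e')\in\ker H_W$, and
\[
|e+e'|\le |e|+|e'|\le (1+O(\poly\log M))\,|e| .
\]

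It remains to identify $\tE_W\tD_W(x)$ with $c$. By the definition in the footnote, $\tD_W(x)=(E_W)^{-1}(c)$, so $\tE_W\tD_W(x)=c$. Therefore
\[
|\tE_W z+\tE_W\tD_W(x)|=|e+e'|\le (1/\eta_W)\,|e|,
\]
once $\eta_W$ is chosen as the reciprocal of the polylogarithmic factor above, which gives $\eta_W=\Omega(1/\poly\log M)$. Note that the bound does not require $e+e'=0$, only that it be small. This is why a weak local guarantee suffices even though $C_W$ alone may have no distance.

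The main obstacle is the non-stalling claim. The cited theorem is phrased in terms of quantum errors modulo stabilizers, whereas here we need a statement about the classical syndrome of an arbitrary $e$ of weight below $\eta_W d_{q,W}$. To bridge this, I would first try to apply the termination/progress lemma of~\cite{nguyen2025quantum} directly: it states that a nonzero syndrome of a small-weight error admits a local flip decreasing it. If that lemma is stated only relative to the quantum error, I would instead reduce $e$ modulo $\Im(H_{W'}^T)$ to a minimum-weight representative, which has the same syndrome and no larger weight, and apply the quantum guarantee to that representative.
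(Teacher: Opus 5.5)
Your proposal follows the paper's own argument: the greedy small-set-flip decoder of \cite{nguyen2025quantum} runs at most $|\sigma|$ iterations, each adding a flip of weight $O(\poly\log M)$, so $|e+e'|\le \poly\log(M)\,|\sigma|$ by the triangle inequality. You are in fact more explicit than the paper about the non-stalling requirement and about identifying $\tE_W\tD_W(x)$ with $x+e'$ via the footnote.

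Two small corrections. First, the bound $|H_W e|\le O(1)\,|e|$ needs constant \emph{column} weight of $H_W$ (each qubit lies in $O(1)$ checks). This holds for the bounded-degree codes of \cite{dlv24}, but it does not follow from the row-weight clause of Theorem~\ref{thm:qltc}. Second, $\eta_W$ must also be small enough that $\eta_W d_{q,W}$ stays within the decoding radius your non-stalling step relies on, not merely the reciprocal of the polylogarithmic factor.
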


For later use we define a partial composite decoding operator, which combines decoding results from different stages of the encoding process.

\begin{definition}[Partial composite decoding operator]
\label{def:decode:comp_p}
Let $M, m \in \mathbb{Z}_+$ such that $m$ divides $M$.
Let $W \in \{X, Z\}$.
Let $C_W$ be the linear code with parity-check matrix $H_W$ from Theorem~\ref{thm:qltc}, with codeword length $M$, dimension $B_W$.
Let $\texttt{D}_W$ denote its decoding algorithm.

Let $x = (x_i)_{i \in [M/m]}$ and $w = (w_i)_{i \in [M/m]}$ be tuples of blocks, with each $x_i, w_i \in \mathbb{F}_2^m$.
Define the \emph{partial composite decoding function} as:
 \begin{align*}
\texttt{D}_{p,W,i}(x, w) = \texttt{D}_W(x_1, \ldots, x_i, w_{i+1}, \ldots, w_{M/m}).
\end{align*}
For $i \in \{0, \ldots, M/m\}$, $x \in \mathbb{F}_2^{M}$, $w \in \mathbb{F}_2^{M}$, and $r \in \mathbb{F}_2^{B_W}$, define the partial composite decoding operator as:
 \begin{align*}
D_{p,W,i}(\ket{r} \ket{x} \ket{w}) = \ket{r \oplus \texttt{D}_{p,W,i}(x, w)} \ket{x} \ket{w}.
\end{align*}
Moreover,
for $w \in \mathbb{F}_2^{M}$ and $r \in \mathbb{F}_2^{B_W}$, define the one-sided decoding operator:
 \[
D_{W}(\ket{r} \ket{w}) =
\ket{r \oplus \texttt{D}_{p,W,0}(0, w)} \ket{w}.
\]
For $x \in \mathbb{F}_2^{M}$ and $r \in \mathbb{F}_2^{B_W}$, define the composite decoding operator:

 \begin{align*}
D_{c, W}(\ket{r} \ket{x}) =
\ket{r \oplus \texttt{D}_{p,W,M/m}(x, 0)} \ket{x}.
\end{align*}

\end{definition}

\subsection{PCP of Proximity}

In this section, we recall the notion of probabilistically checkable proofs of proximity (PCPPs), introduced in~\cite{bgh+04}, and their application to the canonical $\cal P$-complete problem \textsc{Circuit Value}. These notions play an important role in our protocols.

\begin{definition}[\cite{bgh+04}]
We define the $\mathcal{P}$-complete language \textsc{Circuit Value} as $\textsc{CktVal} = \{(C, w) : C(w) = 1\}$, where $C$ is a Boolean circuit and $w$ is an input string.
For a pair language $L \subseteq \{0,1\}^* \times \{0,1\}^*$, we define $L(x) = \{y : (x, y) \in L\}$.
\end{definition}

The following definition formalizes a proof system in which the verifier can efficiently check whether a given input is close to being valid, without reading the entire input.

\begin{definition}[PCPs of Proximity (PCPPs), \cite{bgh+04}]
Let $s, \delta: \mathbb{Z}^+ \rightarrow [0, 1]$ be functions. A verifier $V$ is said to be a probabilistically checkable proof of proximity (PCPP) system for a pair language $L$ with proximity parameter $\delta$ and soundness error $s$ if, for every pair of strings $(x, y)$, the following conditions hold:
\begin{itemize}
    \item \textnormal{Completeness:} If $(x, y) \in L$, then there exists a proof string $\pi$ such that $V(x)$ accepts the oracle $y \circ \pi$ with probability $1$.
    \item \textnormal{Soundness:} If $y$ is $\delta(|x|)$-far from $L(x)$, then for every $\pi$, the verifier $V(x)$ accepts the oracle $y \circ \pi$ with probability strictly less than $s(|x|)$.
\end{itemize}
If $s$ and $\delta$ are not explicitly specified, they are assumed to be constants in the interval $(0, 1)$.
\end{definition}

The following theorem, due to~\cite{bgh+04}, establishes that the \textsc{Circuit Value} problem admits a PCPP.

\begin{theorem}[\cite{bgh+04}]
\label{thm:pcpp_}
There exists a constant $\delta_0 \in (0, 1)$ such that for all $n \in \mathbb{Z}^+$ and all $\delta < \delta_0$, the language \textsc{Circuit Value} has a PCP of proximity (for circuits of size $n$) with the following parameters:
\begin{itemize}
    \item randomness complexity $O(\log n)$,
    \item query complexity $O(1)$,
    \item perfect completeness, and
    \item soundness error $1 - \Omega(\delta)$ for proximity parameter $\delta$.
\end{itemize}
\end{theorem}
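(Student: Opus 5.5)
The plan is to reconstruct the statement from the standard machinery of \emph{assignment testers} (equivalently, PCPPs with constant proximity), rather than from the full algebraic construction of~\cite{bgh+04}. The argument splits into two parts. The first is a PCPP for \textsc{CktVal} that works for a single fixed constant proximity parameter $\delta_0$ and has constant soundness error $s_0<1$. The second is a generic reduction that turns it into soundness error $1-\Omega(\delta)$ for every $\delta<\delta_0$.

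\textbf{Step 1: constant-proximity PCPP.} Given a circuit $C$ of size $n$ on input $w$, apply the Tseitin transformation. This yields a constraint graph of size $O(n)$ over the input variables $w$ and auxiliary gate variables, with $3$-local constraints. Replicate the input variables $\Theta(n)$ times, adding equality constraints along an expander, so that the input block carries a constant fraction of the total weight. This gives a circuit whose satisfying assignments, restricted to the input part, are exactly $C^{-1}(1)$, and in which any assignment with input $\delta$-far from $C^{-1}(1)$ violates an $\Omega(\delta/\mathrm{polylog})$ fraction of the constraints. I would then run Dinur's gap amplification (powering, alphabet reduction by composition with a constant-size assignment tester, and preprocessing). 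This is done in the assignment-tester form of Dinur and Dinur--Reingold, so that the input variables are kept as explicit oracle positions throughout. After $O(\log n)$ rounds it produces a verifier with $O(\log n)$ randomness and $O(1)$ queries into $w\circ\pi$. It has perfect completeness and rejects with constant probability whenever $w$ is $\delta_0$-far from $C^{-1}(1)$. Randomness stays logarithmic because each round blows up the instance size only by a constant factor.

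\textbf{Step 2: soundness $1-\Omega(\delta)$ for smaller $\delta$.} Let $\delta<\delta_0$ and suppose $y$ is $\delta$-far from $C^{-1}(1)$. The verifier runs the Step 1 test, but on the auxiliary string $w'$, which the prover writes explicitly inside $\pi$ together with the Step 1 proof for $w'$. With probability $1/2$ it runs the Step 1 PCPP on $w'\circ\pi'$. With probability $1/2$ it picks a random $i$ and checks $y_i=w'_i$. There are two cases.
\begin{itemize}
\item If $w'$ is $\delta_0$-far from $C^{-1}(1)$, the first test rejects with constant probability, which is at least $\Omega(\delta)$.
\item Otherwise $w'$ is $\delta_0$-close to some $u\in C^{-1}(1)$. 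If $\delta_0\le \delta/2$, then $d(y,w')\ge \delta-\delta_0\ge\delta/2$, so the consistency test rejects with probability $\Omega(\delta)$.
\end{itemize}
The second case requires $\delta_0\le\delta/2$, which means the proximity parameter in Step 1 must be taken \emph{smaller} than $\delta$. To achieve this I would encode $w'$ in $\pi$ with a linear-distance code $E$ and feed $E(w')$ to the Step 1 tester, for the circuit ``$\mathrm{dec}(z)$ satisfies $C$'' on $z=E(w')$. Since $E$ has constant relative distance, $\delta_0$-closeness of $E(w')$ to a valid encoding forces $E(w')$ to equal $E(u)$ exactly, after local decoding checks that $w'$ agrees with $\mathrm{dec}(E(w'))$. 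I would implement these decoding checks as an extra random comparison between $w'_i$ and the locally decoded coordinate. The rejection probabilities then add up to $\Omega(\delta)$, while completeness stays perfect and randomness and query counts change by only constant factors.

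\textbf{Main obstacle.} The hard part is Step 1, specifically proving that gap amplification preserves the proximity property: that a far-from-valid \emph{input} portion still forces constant rejection after powering and composition. I would first try to handle this by the replication trick, which makes the input block a constant fraction of every constraint neighborhood. I would then invoke the composition theorem for assignment testers, which is designed exactly to preserve proximity, so that no new soundness analysis of powering is needed beyond the standard one.
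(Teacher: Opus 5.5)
\textbf{There is a genuine gap in Step 2.} Your fix rests on two claims that fail.

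\emph{The committed string need not be a codeword.} The verifier cannot compute $E(w')$ itself with $O(1)$ queries, because a code of constant relative distance cannot have each codeword bit depend on only $O(1)$ message bits. So the string $z$ handed to the Step 1 tester is written by the prover, and nothing forces it to be a codeword. Acceptance only shows that $z$ is $\delta_0$-close to some $E(u)$ with $C(u)=1$, and even that requires the circuit to check ``$z\in E(C^{-1}(1))$''; the circuit ``$\mathrm{dec}(z)$ satisfies $C$'' also accepts non-codewords. This gives unique decodability of $u$, not $z=E(u)$ ``exactly''.

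\emph{The decoding check needs an unavailable code.} To compare $w'_i$ with $u_i$ you must recover $u_i$ from a corrupted $z$ with $O(1)$ queries and $O(\log n)$ randomness. That is a constant-query locally decodable code of polynomial length, and none is known: two-query LDCs need exponential length, and the best constant-query constructions are super-polynomial. So your randomness bound fails. Some such ingredient is unavoidable here. Any test that only certifies $\delta_0$-closeness of a committed string to a valid encoding is satisfied by committing to a valid $u$ at distance $\delta<\delta_0$ from $y$. What you need is a consistency check between $y$ and the commitment whose rejection probability is proportional to distance.

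There are two ways to close the gap.
\begin{enumerate}
\item[(a)] Replace the LDC by a \emph{relaxed} LDC in the sense of \cite{bgh+04}. These have $O(1)$ queries and polynomial length, and can be built from constant-proximity PCPPs, so your Step 1 suffices. Their decoder is always correct on exact codewords, and on a string close to $E(u)$ it outputs $u_i$ or $\bot$ with probability at least $2/3$. Take the Step 1 proximity below the decoding radius. Let the verifier pick a random $i$, decode $z$ at $i$, and reject on $\bot$ or on a mismatch with $y_i$. This rejects with probability at least $\frac{2}{3}\mathrm{dist}(y,u)\ge\frac{2}{3}\delta$, and $w'$ becomes unnecessary.
\item[(b)] Drop Step 2 and carry the Dinur--Reingold notion of assignment tester through Step 1. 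That notion requires rejection probability at least $\epsilon\cdot\mathrm{dist}(w,C^{-1}(1))$, which immediately gives soundness error $1-\epsilon\delta$ for every $\delta$. The Tseitin instance satisfies this with $\epsilon=1/\mathrm{poly}(n)$, and gap amplification must raise $\epsilon$ to a constant. That is exactly the proximity-preservation issue you flag, so it must be proved or precisely cited, not assumed.
\end{enumerate}

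Separately, your Step 1 claim that a far input forces an $\Omega(\delta/\mathrm{polylog})$ fraction of violated constraints is false. Set every gate honestly except the output gate, and only one constraint is violated. Only a $1/\mathrm{poly}(n)$ fraction is guaranteed, though that is still an adequate starting gap for amplification.

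For comparison, the paper does not prove this theorem; it imports it from \cite{bgh+04}. There the distance-proportional rejection comes directly out of the construction (robust PCPPs composed down to a Hadamard-based inner verifier), not from a separate conversion step.
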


We now present an amplified version of the above theorem for ease of application:
\begin{lemma}
\label{lem:pcpp_amp}
There exists a constant $\delta_0 \in (0, 1)$ such that for all $n \in \mathbb{Z}^+$ and all $\delta < \delta_0$, the language \textsc{Circuit Value} has a PCP of proximity (for circuits of size $n$) with the following parameters:
\begin{itemize}
\item randomness complexity $O(\log n)$,
\item query complexity $O(\log(s)/\log(1-\Omega(\delta)))$,
\item perfect completeness, and
\item soundness error $s$ for proximity parameter $\delta$.
\end{itemize}
\end{lemma}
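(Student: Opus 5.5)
The plan is to amplify the PCPP of Theorem~\ref{thm:pcpp_} by sequential independent repetition, exactly as was done for the LTC tester in Lemma~\ref{lem:ltc_amp}. Fix $\delta<\delta_0$ with $\delta_0$ as in Theorem~\ref{thm:pcpp_}, and let $V$ be the verifier given there. It has randomness $O(\log n)$, query complexity $O(1)$, perfect completeness, and soundness error $s_0=1-c\delta$ for some constant $c>0$, at proximity $\delta$.

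Define the new verifier $V^{(t)}$ as follows. It runs $V$ independently $t$ times on the same oracle $y\circ\pi$, and accepts if and only if all $t$ runs accept.

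\textbf{Completeness.} If $(x,y)\in L$, the honest proof $\pi$ makes each run of $V$ accept with probability $1$. Hence $V^{(t)}$ also accepts with probability $1$.

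\textbf{Soundness.} Suppose $y$ is $\delta$-far from $L(x)$ and fix any $\pi$. Each run uses fresh coins but reads the same fixed oracle, so the runs are independent. Each run accepts with probability less than $s_0$, so $V^{(t)}$ accepts with probability less than $s_0^t$.

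\textbf{Choice of $t$.} Take $t=\lceil \log(s)/\log(1-c\delta)\rceil$, which gives $s_0^t\le s$. The query complexity is then $O(t)=O(\log(s)/\log(1-\Omega(\delta)))$, as claimed.

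The only point needing care is the randomness complexity. Naive repetition uses $O(t\log n)$ random bits, which matches the stated $O(\log n)$ only when $t$ is a constant, i.e.\ when $s$ and $\delta$ are constants. That is the regime in which the lemma is applied.

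If we want $O(\log n)$ randomness for non-constant $t$, we can derandomize the repetition. The first thing I would try is to choose the $t$ seeds along a random walk on a constant-degree expander over $\{0,1\}^{O(\log n)}$. This uses $O(\log n)+O(t)$ random bits. The expander hitting lemma then gives acceptance probability at most $(1-c'\delta)^{t}$, which changes the constant inside the $\Omega(\delta)$ in the query bound but not the form of the statement.

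I expect this randomness accounting to be the only real obstacle. Completeness and the soundness calculation are routine.
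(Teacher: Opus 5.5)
\paragraph{Overall.} Your core argument is exactly the paper's proof: run the verifier of Theorem~\ref{thm:pcpp_} independently $t=\lceil \log(s)/\log(1-c\delta)\rceil$ times on the same oracle, and accept iff every run accepts. Your completeness, soundness and query-count steps are correct.

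\paragraph{The randomness bound.} The paper's proof says nothing about randomness, and you are right that plain repetition costs $O(t\log n)$ bits. However, your treatment of this point does not establish the stated $O(\log n)$ bound, for two reasons.

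First, the lemma is not used with constant $\delta$. Condition~\eqref{ass:cond}, $\delta M<\eta_W d_{q,W}$, forces $\delta^{-1}=\Theta(\polylog B_q)$ (Section~\ref{sec:setup}). This is why the final-round query bound in the proof of Theorem~\ref{thm:sqiop} is polylogarithmic. Only $s=\Theta(\eps)$ is constant, so $t=\Theta(\log(1/s)/\delta)$ is polylogarithmic. Since $d_q=\Theta(M/\log^3 M)$ by Theorem~\ref{thm:qltc}, $t$ is at least of order $\log^3 M$.

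Second, your expander walk needs $O(\log n)+O(t)$ random bits. That is $O(\log n)$ only if $t=O(\log n)$; in the actual regime it gives $\polylog(n)$ randomness. The hitting bound $(1-c'\delta)^t$ you quote is itself fine.

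\paragraph{A fix.} To get $O(\log n)$ randomness you need a hitter whose seed length does not grow with $t$. For example, identify the base verifier's $R=O(\log n)$ random bits with $\F_{2^R}$, and use the pairwise-independent seeds $r_j=a+jb$, $j=1,\ldots,t$, for uniform $a,b$. If the rejecting seeds have density $\mu>c\delta$, Chebyshev's inequality bounds the probability that all $t$ runs accept by $1/(t\mu)<1/(tc\delta)$. Hence $t=\lceil 1/(sc\delta)\rceil$ gives soundness error $s$ with $2R=O(\log n)$ random bits and $O(1/(s\delta))$ queries. For constant $s$, which is the case the paper needs, this is within the stated query bound.

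For general $s$, take an expander walk of length $O(\log(1/s))$ over such pairwise-independent blocks of size $O(1/\delta)$. Each block fails with probability at most $1/2$. This gives $O(\log(1/s)/\delta)$ queries and $O(\log n+\log(1/s))$ randomness.

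Alternatively, note that the randomness bound is never used later in the paper, since the verifier only has to run in polynomial time. One could therefore simply weaken that bullet of the lemma to $O(t\log n)$.
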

\begin{proof}
The lemma follows by repeatedly applying the PCPP verifier from Theorem~\ref{thm:pcpp_} independently at least $N=\log(s)/\log(1-\Omega(\delta))$ times, and accepting if and only if all verifiers accept.

Note that for a no-instance, the acceptance probability satisfies
\begin{align*}
\Pr[\text{verifier accepts}] \leq (1 - \Omega(\delta))^N \leq s.
\end{align*}
\end{proof}

We now define the quantum operator associated with generating a proof in a probabilistically checkable proof of proximity (PCPP).

\begin{definition}[PCPP generating operator]
\label{def:pcpp_generate}
Let $ s,l_x,l_y\in\Z_+$. Let $f : \F_2^{l_x}\times \F_2^{l_y} \to \F_2$ be classical circuits of size $s$. For each $ x\in \F_2^{l_x}, y\in \F_2^{l_y}$, let $f_x(y) = f(x,y)$.
Let $\texttt{N}$ be the classical algorithm that generates the proof for inputs to \textsc{CktVal} from Lemma~\ref{lem:pcpp_amp}. Suppose the output length of $\texttt{N}$ on input of size $ \Theta(s) $ is bounded by $ l_w(s)$.
For $a \in \mathbb{F}_2^{l_w(s)}$, $x \in \mathbb{F}_2^{l_x}$, $y \in \mathbb{F}_2^{l_y}$,  define the operator
\begin{align*}
N_f(\ket{a} \ket{x} \ket{y}) = \ket{a \oplus \texttt{N}(f_x, y)} \ket{x} \ket{y}.
\end{align*}
\end{definition}

\begin{remark}
In our protocol, $x$ typically denotes the output of a circuit $C$ on input $y$.
We define $f_x$ as the circuit $1(C(\cdot) = x )$.
Thus, $(f_x, y) \in \textsc{CktVal}$ if and only if $C(y) = x$.
\end{remark}

Finally, we define the quantum operator corresponding to verifying the PCPP proof.
\begin{definition}[PCPP checking operator]
\label{def:pcpp_check}
Use the same notation as Definition \ref{def:pcpp_generate}.
Let $\texttt{T}$ be the verifier's algorithm from Lemma~\ref{lem:pcpp_amp}.
Let the randomness complexity of $\texttt{T}$ on instance size of $\Theta(n) $ be bounded by $l_r(n)$.

For $a \in \mathbb{F}_2$, $x \in \mathbb{F}_2^{l_x}$, $y \in \mathbb{F}_2^{l_y}$, $w \in \mathbb{F}_2^{l_w(n)}$, and $r \in \mathbb{F}_2^{l_r(n)}$, define
\begin{align*}
T_f( \ket{a} \ket{x} \ket{y} \ket{w} \ket{r} ) = \ket{a \oplus \texttt{T}(f_x, y, w; r)} \ket{x} \ket{y} \ket{w} \ket{r}.
\end{align*}
\end{definition}
\begin{remark}
In our protocol, $r$ will be provided as classical randomness and thus does not consume quantum resources. However, for the sake of analysis, we prefer all quantum states to be represented as pure states.

\end{remark}

\section{Notational Conventions}
\label{sec:notation}

In this section we summarize, for ease of reference, notation that is commonly used across the paper. It is advisable to skip the section at first read but refer to it whenever a notation used in a statement or proof is unclear.

\subsection{General notation}

For $\ell$ an integer we write $[\ell]$ for the set $[\ell]=\{1,\ldots,\ell\}$. For $S$ a finite set, we write $s\in_R S$ for a uniformly random element $s$ of $S$.

\subsection{Codes}
\label{sec:notation-codes}

Our protocols rely on a quantum locally testable code $C$. In addition, we frequently consider the two codes $C_X$ and $C_Z$ that constitute $C$ as a CSS code. We summarize notation used to discuss each code.

\begin{notation}[Codes]\label{not:codes}~

\paragraph{Convention for qLTC code.}
\begin{itemize}
\item $C$ is the quantum LTC code from Theorem~\ref{thm:qltc}, instantiated such that the code has codeword length $M$. Let the code dimension be $B_q$, the distance $d_q$, and the locality $q$. $C$ is locally testable with soundness $\rho$.
\item Let $E_q$ be the encoding unitary for $C$ as defined in Definition~\ref{def:e_qltc}. Let $D_q$ be the decoding unitary as defined in Definition~\ref{def:d_qltc}.
\item Let $ {\cal S}$ be defined as in Lemma \ref{lem:stabilizer-decomposition}.
\end{itemize}

\paragraph{Convention for $X/Z$ codes associated with the qLTC.}
\begin{itemize}
\item $W \in \{X, Z\}$ denotes a single-qubit observable. We often denote by $W'$ the complementary observable, i.e.\ $W'=X$ if $W=Z$ and vice-versa.
\item If $W = X$, let $S_W = H^{\otimes M}$. If $W = Z$, let $S_W = I$.
When there is no ambiguity, we also use $S$ to denote $S_{W}$.
\item $C_W$ is the classical linear code corresponding to the $X$-check matrix $H_X$ when $W = X$, or the $Z$-check matrix $H_Z$ when $W = Z$, as in Theorem~\ref{thm:qltc}. The code has word length $M$, dimension $B_W$. Let $d_{q,X}$ and $d_{q,Z}$ be as defined in Definition~\ref{def:dist_mod}.
\item Let $E_W\in \F_2^{M\times {B_W}}$ denote the generator matrix of the code $C_W$.
\item
Let $\texttt{L}_W$ be the LTC code tester for $C_W$ from Lemma~\ref{lem:ltc_amp}, where the parameter $\varepsilon$ is set to a small constant. The tester has query complexity $Q_W$ and soundness parameter $\kappa_W$.
Define $Q = \max(Q_X, Q_Z)$ and $\kappa = \min(\kappa_X, \kappa_Z)$.
\item Let $\texttt{D}_W$ be the decoding algorithm for $C_W$.
Note that $\tD_W$ is efficient and a local decoder with locality parameter $\eta_W$ in the sense of Fact~\ref{fact:decoder}. Let $\eta_q=\min(\eta_X,\eta_Z)$.
\item
Let $ D_W$ be defined as in Definition \ref{def:decode:comp_p}.
\item Let $ F,G$ be defined as in Definition \ref{lem:logical_qltc}.
Let $K_W=F$ if $W=X$, $K_W=G$ if $W=Z$.
When there is no ambiguity, we also use $K$ to denote $K_{W}$.
\end{itemize}

\end{notation}

\subsection{PCPP}
\label{sec:notation-pcpp}

Regarding probabilistically chechable proofs of proximity, we use the following.

\begin{notation}[PCPP]\label{not:pcpp}~
\begin{itemize}
\item $\texttt{N}$ is the classical algorithm that generates the PCPP proof for inputs to \textsc{CktVal}, and $N_f$ denotes its coherent application as in Definition~\ref{def:pcpp_generate} (on input a classical circuit  $f : \F_2^{l_x}\times \F_2^{l_y} \to \F_2$).
\item $\texttt{T}$ is the PCPP verification algorithm from Lemma~\ref{lem:pcpp_amp} and $T_f$ denotes its coherent application as in Definition~\ref{def:pcpp_check}.
    \item Let $\delta$ and $s$ denote the proximity parameter and soundness error from Lemma~\ref{lem:pcpp_amp}, respectively.
\end{itemize}
\end{notation}

\subsection{Registers}

Since the protocols we design can involve relatively large numbers of quantum registers, we strive to label them consistently using sans serif font as e.g.\ $\sM$, $\sV$, etc.
In the protocols, the following registers will be commonly used.

\begin{notation}[Registers]\label{not:protocol-a}~
\begin{itemize}
    \item $M, m \in \mathbb{Z}_+$ such that $m$ divides $M$. The parameter $m$ represents the size of a local block, and the witness is partitioned into $\rd = M/m$ blocks.

    \item $\mathsf{P}$ is a register of arbitrary dimension, initially held by the prover; it is the prover’s private register.

    \item Let $\mathsf{W} = (\mathsf{W}_{i,j})_{i \in [M/m],\, j \in [m]}$ denote registers initially held by the verifier, containing the witness at the beginning of the protocol. Each $\mathsf{W}_{i,j}$ is a single-qubit register.

    \item Let $\mathsf{X} = (\mathsf{X}_{i,j})_{i \in [M/m],\, j \in [m]}$ and $\mathsf{X}' = (\mathsf{X}'_{i,j})_{i \in [M/m],\, j \in [m]}$, where each $\mathsf{X}_{i,j}$, resp.\ $\mathsf{X}'_{i,j}$, is a single-qubit register. The registers $\mathsf{X}_i$ and $\mathsf{X}'_i$ are used to store copies (in the $W$ basis) of the content in $\mathsf{W}_i$ during the global measurement extraction phase. The prover holds $\mathsf{X}'_i$ and returns $\mathsf{X}_i$ to the verifier.
\end{itemize}

We often write $\mathsf{W}_i$ for the $m$-qubit register $(\mathsf{W}_{i,j})_{j \in [m]}$, and similarly for $\mathsf{X}_i$ and $\mathsf{X}'_i$.
\end{notation}

\subsection{Operators}

The following notation specifies unitary operators that play an important role in the proofs.

\begin{notation}[Operators]\label{not:operators}~
Fix a Boolean function $f:\{0,1\}^{B_q}\to\{0,1\}$. For $W\in\{X,Z\}$ let $W_f = \sum_x (-1)^{f(x)}S_W\proj{x}S_W$.
\begin{itemize}
\item
We label the register on which $D_q$ acts as $\mathsf{W} = (\mathsf{L}, \mathsf{S})$, where $\mathsf{L}$ denotes the register holding the decoded quantum state, and $\mathsf{S}$ denotes the register holding the syndrome.
Let
\begin{align*}
W'_{f, W}=D_{q,\mathsf{W}}^\dagger (W_f)_\mathsf{L} D_{q,\mathsf{W}}.
\end{align*}
When there is no ambiguity, we also use $W'_{f}$ to denote $W'_{f, W}$.
\item
Let $\mathsf{T}$ be a quantum register of $B_W$ qubits.
Let $\tilde{f}:\{0,1\}^{B_W}\to\{0,1\}$ be defined by $\tilde{f}(x)=f( K E_W x)$.
Let
\begin{align*}
W''_{f, W}=
S_{\mathsf{W}}^\dagger
D_{W,\mathsf{TW}}^\dagger (Z_{\tilde{f}})_\mathsf{T} D_{W,\mathsf{TW}} S_{\mathsf{W}}.
\end{align*}
When there is no ambiguity, we also use $W''_{f}$ to denote $W''_{f, W}$.
\item
Let $\mathsf{T}$ be a quantum register of $B_W$ qubits.
Let
 \begin{align}\label{eq:def-qi}
Q_{i, W}=S_{\mathsf{W}}^\dagger D_{p,W,i,\mathsf{TXW}}^\dagger (Z_{\tilde{f}})_\mathsf{T} D_{p,W,i,\mathsf{TXW}}S_{\mathsf{W}}\;.
\end{align}
When there is no ambiguity, we also use $Q_{i}$ to denote $Q_{i, W}$.
\end{itemize}
\end{notation}

\section{Formulation}
\label{sec:qiop}

We proceed with a formal definition of our model, which is a restricted variant of the qIOPs introduced in~\cite{sv-qiop}.

In the prior work, a general framework for quantum interactive oracle proofs (qIOPs) was proposed. Our result also fits within that framework, but our protocol exhibits a more structured and constrained form. We therefore define a restricted model, which we call a \emph{Quantum Interactive Probabilistically Checkable Proof} (qIPCP) system. This name reflects the key characteristics of the model: the verifier begins the protocol holding a quantum witness and verifies it through limited interaction with a prover, making only a small number of queries to the prover’s messages. Hence, the proof is checked in an interactive and probabilistic way.

More specifically: For YES instances, there exists a witness that is initially unentangled with the prover’s private register, such that the verifier accepts with probability at least $2/3$. For NO instances, no matter what witness is used, even if it is initially entangled with the prover’s private register, the verifier accepts with probability at most $1/3$.

Note that this model is equivalent to the one described in Section \ref{sec:results}. In our formulation, for the malicious prover, the witness initially held by the verifier may be an arbitrary quantum state, potentially entangled with the prover’s private register. Thus, this is equivalent to the prover sending a witness to the verifier at the start of the protocol.

We now describe the model more precisely. In each round, the verifier may send the prover a classical message and a subset of the quantum witness register. Importantly, once the verifier sends part of the quantum witness to the prover, he can no longer access it. This is because a malicious prover is not required to return the quantum state at a later stage (even though an honest prover would do so). Furthermore, we explicitly forbid the verifier from returning any of the prover’s previous messages. Therefore, the only quantum messages the verifier can send are parts of the initial quantum witness.

Despite these strong restrictions, we demonstrate that the verifier can still verify QMA-complete languages within this model.

\begin{definition}[Quantum interactive probabilistically checkable proof system]\label{def:qiop}
Let $n \in \mathbb{Z}_+$ denote the instance size, and let
$m : \mathbb{Z}_+ \rightarrow \mathbb{Z}_+$ be a non-decreasing, even-valued function representing the total number of messages as a function of the instance size. Define $h(n) := m(n)/2$.

An $m$-message \emph{quantum interactive probabilistically checkable proof system} (qIPCP) is specified by a tuple of quantum algorithms
$\texttt{V} = (\texttt{V}_1, \texttt{V}_2, \ldots, \texttt{V}_{h+1})$.

For each $i \in [h+1]$, the verifier’s algorithm $\texttt{V}_i$ takes $x$ as input and acts on registers as follows:
\begin{align*}
\big((\mathsf{R}_{i-1}), (\mathsf{R}_1, \ldots, \mathsf{R}_{i-2}, \mathsf{W}_i, \ldots, \mathsf{W}_h)\big)
\longrightarrow
\big((\mathsf{W}_i, \mathsf{J}_i), (\mathsf{R}_1, \ldots, \mathsf{R}_{i-1}, \mathsf{W}_{i+1}, \ldots, \mathsf{W}_h)\big),
\end{align*}
where
\begin{itemize}
\item $\mathsf{R}_i$ is the quantum prover’s message register, for each $i \in \{-1, 0, \ldots, h\}$,
\item $\mathsf{W}_i$ is the quantum witness register, for each $i \in \{1, \ldots, h+1\}$,
\item $\mathsf{J}_i$ is the verifier’s classical message register, for each $i \in \{1, \ldots, h+1\}$.
\item $\mathsf{R}_{-1}$, $\mathsf{R}_0$, and $\mathsf{W}_{h+1}$  are all empty registers.
\item The register $\mathsf{J}$ is initialized to the all-zero classical string.
\item The register $\mathsf{W}=(\mathsf{W}_i)_{i\in[h]}$ is initialized to a quantum witness for the input $x$.
\end{itemize}

Each verifier algorithm $\texttt{V}_i$ must be implementable in polynomial time.

Let $I = \{1, \ldots, h\}$.
A \emph{prover strategy} is defined by a collection of quantum channels $\{\mathbb{P}_i\}_{i \in I}$ (possibly depending on $x$) and a private quantum register $\mathsf{P}$ initialized with an quantum state, where each $\mathbb{P}_i$ acts as:
\[
((\mathsf{W}_i, \mathsf{J}_i), \mathsf{P}) \longrightarrow (\mathsf{R}_i, \mathsf{P}).
\]

The first (classical) bit of the register $\mathsf{J}_{h+1}$ in the output of $\texttt{V}_{h+1}$ represents the verifier’s final decision to accept or reject.
Let $\omega(\texttt{V}, \ket{\phi})$ denote the value of the interactive game defined by $\texttt{V}_x$ when the initial state of the entire system is $\ket{\phi}$. That is, $\omega(\texttt{V}, \ket{\phi})$ is the maximum acceptance probability of $\texttt{V}$ over all prover strategies $\{\mathbb{P}_i\}_{i \in I}$, where $\ket{\phi}$ is defined on the joint quantum register $\mathsf{WP}$.

\paragraph{Resource bounds.}
We define the resource complexity of the system as follows:
\begin{itemize}
    \item \textbf{Communication complexity:} The verifier $\texttt{V}$ has communication complexity $l : \mathbb{Z}_+ \rightarrow \mathbb{Z}_+$ if
    \[
    l(n) \geq \sum_{i \in [h(n)]} l_i(n),
    \]
    where $l_i(n)$ is the total size (in qubits) of the registers $\mathsf{R}_i, \mathsf{J}_i, \mathsf{W}_i$.

    \item \textbf{Query complexity:} The verifier $\texttt{V}$ has query complexity $q : \mathbb{Z}_+ \rightarrow \mathbb{Z}_+$ if
    \[
    q(n) \geq \sum_{i \in [h(n)+1]} q_i(n),
    \]
    where $q_i(n)$ is the number of qubits accessed by $\texttt{V}_i$ in the registers $(\mathsf{R}_1, \ldots, \mathsf{R}_{i-1}, \mathsf{W}_i, \ldots, \mathsf{W}_h)$.
\end{itemize}
\end{definition}

We now formalize the notions of completeness and soundness for a QIPCP system verifying a promise problem.

\begin{definition}[Quantum interactive probabilistically checkable proof systems for a promise problem]\label{def:qiop_}
A quantum interactive probabilistically checkable proof system for a promise problem $(S_{\text{yes}}, S_{\text{no}})$ satisfies the following:

Let $\omega(\texttt{V}, \ket{\phi}_{\mathsf{WP}})$ be defined as in Definition~\ref{def:qiop}. The system must satisfy:

\begin{itemize}
\item \textnormal{Completeness:} There exists a constant $c \in (0,1]$ such that for all $x \in S_{\text{yes}}$, the maximum acceptance probability satisfies ${\max}_{\ket{\psi}}\,\omega(\texttt{V}, \ket{\psi}_{\mathsf{W}}\ket{0}_{\mathsf{P}}) \geq c$. That is, there exists a quantum witness $\ket{\psi}$ and an honest prover strategy under which the verifier accepts with probability at least $c$.

\item \textnormal{Soundness:} There exists a constant $s \in [0, c)$ such that for all $x \in S_{\text{no}}$, the maximum acceptance probability satisfies ${\max}_{\ket{\phi}}\,\omega(\texttt{V}, \ket{\phi}_{\mathsf{WP}}) \leq s$. That is, no combination of prover strategy, prover's initial state, and jointly entangled quantum witness can cause the verifier to accept a no-instance with probability greater than $s$.
\end{itemize}
\end{definition}

We define the main complexity class studied in this paper.

\begin{definition}[Quantum interactive probabilistically checkable proof]
Let $r, l, q: \mathbb{Z}_+ \rightarrow \mathbb{Z}_+$.
We define $\mathcal{QIPCP}(r, l, q)$ as the set of promise problems $(S_{\text{yes}}, S_{\text{no}})$ that have an $r$-message quantum interactive probabilistically checkable proof system with communication complexity $l$ and query complexity $q$.
\end{definition}

\section{QMA-Complete Problems}
\label{sec:qma}

We introduce a variant of the local Hamiltonian problem where each term is diagonal in the $X$ or $Z$ basis respectively, and is furthermore not necessarily local but efficiently sampleable and computable. The $\QMA$-completeness of this variant essentially follows from the results of~\cite{mn25} and~\cite{bmvz25}.

We first introduce the Hamiltonian problem and then state its QMA-completeness. Before that, we define a basic term used in the Hamiltonian.

\begin{definition}[X/Z projection]
For $W\in\{X,Z\}$ we say that a Hamiltonian (hermitian matrix) $H$ is a $W$ projection if and only if $H$ acts non-trivially on a quantum register $\mathsf{S}$ of $k$ qubits as ${I + W_f}/{2}$, where $f: \F_2^{k} \rightarrow \F_2$ is a boolean function.\footnote{Recall the notation $W_f = \sum_x (-1)^{f(x)}S_W\proj{x}S_W$, with $S_W$ a tensor product of $I$ ($W=Z$) or $H$ ($W=X$).} More specifically,
\begin{align*}
H = \Big(\frac{I + W_f}{2}\Big)_{\mathsf{S}}\;.
\end{align*}
\end{definition}

Next, we present the formal formulation of the Hamiltonian and define the corresponding promise problem associated with it.

\begin{definition}[Dual-basis projector Hamiltonian]
\label{def:d_ham}
A Hamiltonian $H \in \C^{2^n} \otimes \C^{2^n}$ is a dual-basis projector Hamiltonian if it is the uniform average of a polynomial number $m=\poly(n)$ of $X$  projection or $Z$  projection terms $H_i$. That is,
\begin{align*}
H = \frac{1}{m} \sum_{i \in [m]} H_i\;.
\end{align*}

Moreover, $H$ is $k$-local if each $H_i$ acts non-trivially on at most $k$ qubits.
\end{definition}

\begin{definition}[Local dual-basis projector Hamiltonian problem]\label{def:dlh}
Let $k \in \Z^+$ and let $a,b:\Z^+ \rightarrow \R^+$ be non-decreasing functions such that $\forall n \in \Z^+,~ a(n) < b(n)$. The local dual-basis projector Hamiltonian problem $\mathcal{DLH}(k,a,b)$ is the promise problem $(S_{yes}, S_{no})$ where
\begin{itemize}
\item $S_{yes}(n)$ is the set of $k$-local dual-basis projector Hamiltonians $H\in \C^{2^n} \otimes \C^{2^n}$ such that $\lambda_{\min}(H) \leq a(n)$,
\item $S_{no}(n)$ is the set of $k$-local dual-basis projector Hamiltonians $H\in \C^{2^n} \otimes \C^{2^n}$ such that $\lambda_{\min}(H) \geq b(n)$.
\end{itemize}
\end{definition}

Finally, we state the QMA-completeness of the Hamiltonian problem. Note that this refers to the local Hamiltonian problem before amplification, where the promise gap is inverse-polynomial.
The following follows immediately from~\cite{mn25}.

\begin{theorem}
\label{thm:dual_ham}
There exist polynomials $p$ and $q$ such that the problem $$\mathcal{DLH}(5, 2^{-p(n)}, 1/q(n))$$ is QMA-complete.
\end{theorem}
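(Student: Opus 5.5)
Membership in $\QMA$ is the easy direction. Given an instance $H=\frac1m\sum_i H_i$, the verifier samples $i\in_R[m]$ and measures the $5$-qubit projection $H_i=(I+W_f)/2$ on the witness. This is done by applying $S_W$ to the (at most $5$) qubits in the support and computing $f$ on the resulting computational-basis outcome. The verifier rejects if the outcome lies in the range of $H_i$. The rejection probability is $\bra{\psi}H\ket{\psi}\ge\lambda_{\min}(H)$. This yields a $\QMA$ verifier with completeness $1-2^{-p(n)}$ and soundness $1-1/q(n)$. The inverse-polynomial gap can be amplified to constant by the standard parallel-repetition (Marriott--Watrous) argument. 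Hence $\mathcal{DLH}(5,2^{-p},1/q)\in\QMA$.

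For hardness, I would invoke the reduction of~\cite{mn25} directly, as the text preceding the theorem indicates. Starting from an arbitrary $\QMA$ verifier, one first amplifies it to completeness error $2^{-p'(n)}$. One then compiles the verification circuit into a gate set for which Kitaev's circuit-to-Hamiltonian construction, with a unary clock and appropriately chosen encoding, produces only terms that are projections diagonal in either the $Z$ or the $X$ basis. In the spirit of~\cite{bjsw16}, this is done by replacing each gate-propagation term by a conjugated projector, possibly using ancilla or gadget qubits so that every term becomes of the form $(I+W_f)/2$ on at most $5$ qubits.

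The key steps would be the following.
\begin{enumerate}
\item Check that each term output by the~\cite{mn25} reduction is literally an $X$ projection or a $Z$ projection in the sense of the definition above. If a term is instead of the form $(I-W_f)/2$, replace $f$ by $1\oplus f$.
\item Normalize the Hamiltonian as a uniform average of $\poly(n)$ terms, duplicating terms if the original reduction uses weights.
\item Track the parameters. Completeness: the history state of an accepting computation has energy at most $2^{-p(n)}$, since all propagation and clock terms vanish exactly on it and the output term contributes only the verifier's exponentially small error, divided by the number of terms. Soundness: the usual geometric (Kitaev or projection) lemma gives $\lambda_{\min}\ge 1/\poly(n)$, and dividing by $m=\poly(n)$ terms preserves this up to a polynomial.
\end{enumerate}

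The main obstacle is the second ingredient of that reduction: ensuring that the propagation terms, which in Kitaev's construction mix bases through the gate unitaries, can all be written as single-basis projections. I would rely on~\cite{mn25} for this step rather than reprove it. If I had to supply it, I would follow~\cite{bjsw16}: conjugate the clock-and-gate terms by Clifford gates so that each becomes a projection onto a stabilizer-state subspace, and restrict to gates (e.g.\ Hadamard and Toffoli, which are universal for $\QMA$) whose propagation terms reduce to $X$- or $Z$-diagonal projectors on $\le5$ qubits.
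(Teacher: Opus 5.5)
Your proposal matches the paper's proof, which likewise reruns the \cite{mn25} reduction (stated there for $\QMA_1$ with YES-energy exactly $0$) on a $\QMA$ verifier whose completeness has first been boosted to $1-2^{-p(n)}$ via \cite{marriott2005quantum}; your membership argument is a correct addition that the paper leaves implicit. Don't count on your \cite{bjsw16}-style fallback, though: Clifford conjugation yields projectors onto general stabilizer states, and even an identity step of the unary-clock construction already gives the mixed-basis term $\frac12\,|1\rangle\langle 1|\otimes(I-X)\otimes|0\rangle\langle 0|$, so getting terms that are purely $X$- or purely $Z$-diagonal is exactly what \cite{mn25} adds beyond \cite{bjsw16}.
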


\begin{proof}
In~\cite{mn25}, a somewhat incomparable result is shown, where $2^{-p(n)}$ above is replaced by $0$, and the class characterized is $\QMA_1$ (the variant of $\QMA$ with perfect completeness, meaning the verifier always accepts a yes-instance with probability $1$).
The present result follows using the same proof, noting that $\QMA$ is well-known to have verifiers with completeness exponentially close to $1$~\cite{marriott2005quantum}.
\end{proof}

Then, we introduce the Hamiltonian amplification method from~\cite{bmvz25}. We first present their construction, followed by a statement of their result.

Essentially, the idea is to divide the Hamiltonian into commuting layers. In our case, there are two: the $Z$-layer, where each term is diagonal in the computational basis, and the $X$-layer, where each term is diagonal in the Hadamard basis. Each layer is then amplified separately by performing all tests corresponding to that layer simultaneously, and repeating the layer test sequentially $t$ times.

\begin{definition}[Amplified Hamiltonian, \cite{bmvz25}]\label{def:amplified-hamiltonian}
Let $H$ be a $k$-local Hamiltonian on $n$ qubits expressed as an expectation over $m$ projection terms,
\[
H = \frac{1}{m}\sum_{i\in[m]} \Pi_i.
\]
where the local terms of $H$ can be efficiently and equitably partitioned into
$g=O(1)$ commuting layers, indexed by $\chi\in[g]$.
We denote
\[
H_\chi = \mathbb{E}_{i\in[m_\chi]}[\Pi_i^{\chi}],
\quad
w_\chi = \frac{m_\chi}{m} = \Omega(g^{-1}),
\]
where $w_\chi$ represents the relative weight of clauses assigned to layer~$\chi$, $ H = \E_{\chi} H_\chi$.

For any integer parameter $t\ge 1$, we define the \emph{amplified Hamiltonian} as
\begin{align}
H^{(t)}
=
I - \mathbb{E}_{\chi}
\bigotimes_{j\in [t]}\!\Big(
\prod_{i\in[m_\chi]} (I - \Pi_i^{\chi})
\Big)
,
\end{align}
where, $\mathbb{E}_{\chi}$ indicates sampling $\chi$ from the distribution $(w_1, w_2, \ldots, w_g)$ over $[g]$.
\end{definition}

By applying this procedure, the minimum energy of the no-instance can be amplified until it exceeds a certain threshold.

\begin{theorem}[Amplification from the DL, \cite{bmvz25}]\label{thm:DL-amplification}
Assume that the minimum eigenvalue of $H$ satisfies $\lambda_{\min}(H)\ge \gamma$.
Then, the minimum eigenvalue of the amplified Hamiltonian $H^{(t)}$ satisfies
\[
\lambda_{\min}\big(H^{(t)}\big)
\ge
\min\Big\{
\Theta(g^{-2}),\;
\Omega\!\Big(\frac{t\cdot m \cdot \gamma}{g^{4}}\Big)
\Big\}.
\]
\end{theorem}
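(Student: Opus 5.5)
The plan is to analyze the energy of an arbitrary state $\ket{\psi}$ with respect to each layer separately, and then combine the layers.

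\textbf{Step 1 (one layer, one copy).} Fix a layer $\chi$ and write $P_\chi=\prod_{i\in[m_\chi]}(I-\Pi_i^\chi)$. Since the terms within a layer commute, $P_\chi$ is the projector onto the common $0$-eigenspace of all $\Pi_i^\chi$. In the common eigenbasis, each eigenvector has some number $k\ge 1$ of violated terms. It is therefore annihilated by $P_\chi$ and has $H_\chi$-energy at most $k/m_\chi$. Taking expectations over this eigenbasis gives
\[
\langle I-P_\chi\rangle_\rho \ \ge\ \langle H_\chi\rangle_\rho\;,
\]
because the weight $k/m_\chi\le 1$ on every violated eigenvector. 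The problem is that this gives no gain factor $m$. To get one, I would use the opposite direction for gapped behavior: the natural step is to show $\langle I-P_\chi\rangle \ge \min\{\Omega(1), m_\chi \langle H_\chi\rangle\cdot c\}$. This does not hold in general, since a single term could carry the whole weight. Instead I will keep the correct inequality $I-P_\chi \ge H_\chi$ and derive the factor $m$ from the interplay between layers.

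\textbf{Step 2 (detectability lemma).} This is the main obstacle. I would invoke a detectability-lemma-type bound: for $P=\prod_\chi P_\chi$ (ordered over the $g$ layers),
\[
\|P\ket{\psi}\|^2\le \frac{1}{1+\lambda_{\min}(mH)/g^2}\;.
\]
This would follow from the quantum detectability lemma, provided each term overlaps with $O(1)$ terms in other layers. I expect the factor $m\gamma/g^2$ to be exactly where the overlap bound enters, using locality $k=5$ and bounded degree, or a normalized version of that argument. The consequence is that any state has, for some $\chi$, $1-\|P_\chi\psi\|^2\ge\Omega(\min\{1,m\gamma/g^2\}/g)$ along the sequential path. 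I would derive this by a telescoping/triangle inequality: $\|\psi-P\psi\|\le\sum_\chi\|\psi_{\chi-1}-P_\chi\psi_{\chi-1}\|$, where $\psi_{\chi-1}$ is the state after applying the first $\chi-1$ layer projectors. One then has to return from these sequential intermediate states to the original $\psi$, using the fact that each $P_\chi$ is a projection.

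\textbf{Step 3 (tensor power over $t$ copies).} The operator $\bigotimes_{j}P_\chi$ as written should be read as $t$ sequential repetitions, that is, $P_\chi^{\otimes t}$ on a product structure or equivalently the layer test repeated $t$ times. For a layer projector, repetition on the same register is idempotent. So the intended object must be the product $(P_{1}\cdots P_g)^t$ or its expectation variant, and I would bound $\langle\psi|\mathbb{E}_\chi \cdots|\psi\rangle$ via $\|(P)^t\psi\|^2\le(1+m\gamma/g^2)^{-t}$. This yields $1-(1+x)^{-t}\ge\min\{\Theta(1),\Omega(tx)\}$ with $x=m\gamma/g^2$.

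\textbf{Step 4 (averaging).} Finally, I would account for averaging over $\chi$ with weights $w_\chi=\Omega(1/g)$, which costs another $g^{-2}$ factor. The constant branch becomes $\Theta(g^{-2})$ and the linear branch becomes $\Omega(tm\gamma/g^4)$. The delicate point remains Step 2, namely converting the sequential product bound into a bound on the averaged single-layer quantities without losing more than polynomial factors in $g$.
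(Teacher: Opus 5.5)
The paper quotes this theorem from \cite{bmvz25} without proof, so I assess your plan on its own. There is a genuine gap, and it is in Step 3. In $H^{(t)}$, the factor $\bigotimes_{j\in[t]}$ is a tensor product over $t$ copies of the $n$-qubit register. So $H^{(t)}=I-\mathbb{E}_\chi A_\chi$ with $A_\chi=P_\chi^{\otimes t}$ acting on $tn$ qubits. This is also the only reading under which each amplified term $I-P_\chi^{\otimes t}$ is again a single $Z$- or $X$-projection, which Corollary~\ref{cor:dual_ham_amp} needs.

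The operator is neither a repetition on one register nor $(P_1\cdots P_g)^t$, and the two readings you call equivalent are not. The operator $(P_1\cdots P_g)^t$ does not occur in $H^{(t)}$. Bounding $\|(P_g\cdots P_1)^t\psi\|$ for a single-register $\psi$ says nothing about $\lambda_{\min}(I-\mathbb{E}_\chi P_\chi^{\otimes t})$, whose minimizer may be entangled across the copies. On one register, $\mathbb{E}_\chi\langle\psi|I-P_\chi|\psi\rangle$ does not depend on $t$ at all, so your route has no mechanism that produces the factor $t$.

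The missing idea is this. Because the detectability lemma holds for every state, it is an operator-norm bound $\|P_g\cdots P_1\|^2\le(1+x)^{-1}$ with $x=m\gamma/g^2$. Operator norms are multiplicative under tensor products, and $A_g\cdots A_1=(P_g\cdots P_1)^{\otimes t}$. Hence $\|A_g\cdots A_1\Psi\|^2\le(1+x)^{-t}$ for every $t$-copy state $\Psi$, entangled or not.

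The conversion you leave open in Steps 2 and 4 must then be done for the projectors $A_\chi$, and it is elementary. Put $\Psi_\chi=A_\chi\cdots A_1\Psi$. Pythagoras gives $1-\|A_g\cdots A_1\Psi\|^2=\sum_\chi\|(I-A_\chi)\Psi_{\chi-1}\|^2$. The identity $\Psi-\Psi_{\chi-1}=\sum_{\chi'<\chi}A_{\chi-1}\cdots A_{\chi'+1}(I-A_{\chi'})\Psi$ gives $\|(I-A_\chi)\Psi_{\chi-1}\|\le\sum_{\chi'\le\chi}\|(I-A_{\chi'})\Psi\|$. Cauchy--Schwarz then yields $1-\|A_g\cdots A_1\Psi\|^2\le g^2\sum_\chi\langle\Psi|I-A_\chi|\Psi\rangle$; Gao's quantum union bound replaces $g^2$ by $4$. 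Using $1-(1+x)^{-t}\ge\Omega(\min\{1,tx\})$, this gives
\[
\langle\Psi|H^{(t)}|\Psi\rangle\ \ge\ \min_\chi w_\chi\cdot g^{-2}\big(1-(1+x)^{-t}\big)\ \ge\ \min\Big\{\Omega(g^{-3}),\,\Omega\Big(\frac{t m\gamma}{g^{5}}\Big)\Big\}\;.
\]
This matches the statement for $g=O(1)$, and with Gao's bound it is stronger.

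Averaging over $\chi$ costs only the factor $\min_\chi w_\chi=\Omega(1/g)$, not $g^{-2}$ as you claim; the remaining polynomial loss in $g$ comes from the conversion step. Step 1 is not needed. You are right that the detectability lemma requires each term to fail to commute with only $O(g)$ others; the statement leaves this hypothesis implicit.
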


Finally, we apply the gap amplification theorem from~\cite{bmvz25} to the dual-basis projector Hamiltonian problem.

\begin{corollary}
\label{cor:dual_ham_amp}
There exists a polynomial $p$ and a constant $c_{\LH}>0$ such that the problem $$\mathcal{DLH}(\poly(n), 2^{-p(n)}, c_{\LH})$$ is QMA-complete.
\end{corollary}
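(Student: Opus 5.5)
Start from Theorem~\ref{thm:dual_ham}, which makes $\mathcal{DLH}(5, 2^{-p(n)}, 1/q(n))$ QMA-complete, and reduce it to the amplified problem with Definition~\ref{def:amplified-hamiltonian} and Theorem~\ref{thm:DL-amplification}. Membership in QMA is the easy direction. Each term of $H^{(t)}$ is a product of commuting $W$-projections, so it can be measured efficiently on a witness: for a fixed layer $\chi$ and a fixed copy index $j$, all the $\Pi_i^\chi$ are diagonal in the same basis, and one basis measurement lets us evaluate every clause classically. Hence the energy is estimated by a polynomial-time test whose rejection probability equals $\langle H^{(t)}\rangle$. The gap between $2^{-p'(n)}$ and a constant then gives QMA membership by standard amplification.

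For hardness, take an instance $H=\frac1m\sum_i \Pi_i$ and split its terms into two commuting layers, $\chi=Z$ (the $Z$ projections) and $\chi=X$ (the $X$ projections). Two $Z$-diagonal projections commute, and so do two $X$-diagonal ones, so $g=2$. The equitable-partition requirement $w_\chi=\Omega(1)$ needs care. If one layer is too small, I pad it with trivially satisfied terms, i.e.\ projections $(I+W_f)/2$ with $f\equiv 1$ (the zero operator), or I duplicate terms of the smaller layer. Either way the relative weights are rebalanced to $1/2$ each, $\gamma$ changes by at most a constant factor, and the YES energy stays exponentially small. Here $H^{(t)}$ should be read so that the tensor over $j\in[t]$ is a $t$-fold sequential repetition of the layer test. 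Because each factor is a product of commuting $W$-projections, $I-\prod_i(I-\Pi_i^\chi)$ is itself a $W$ projection $(I+W_f)/2$, with $f$ the OR of the clause predicates. The whole amplified term therefore remains a single $W$ projection on $\poly(n)$ qubits. The qubit count is not an issue, since the statement allows $k=\poly(n)$ locality.

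Set $t=\lceil q(n)\rceil$ (times the constants). Theorem~\ref{thm:DL-amplification} gives NO-instance energy at least $\min\{\Theta(g^{-2}),\Omega(t m\gamma/g^4)\}$. Since $m\ge 1$ and $\gamma\ge 1/q(n)$, this is a constant $c_{\LH}$. For YES instances, let $\ket\Gamma$ satisfy $\langle\Gamma|H|\Gamma\rangle\le 2^{-p(n)}$. By a union bound, $\langle I-\prod_i(I-\Pi_i^\chi)\rangle\le\sum_i\langle\Pi_i^\chi\rangle\le m\,2^{-p(n)}/w_\chi$. Applying the union bound again over the $t$ copies on the state $\ket\Gamma^{\otimes t}$ bounds the energy by $t\,m\,2^{-p(n)}\cdot O(1)$, which is at most $2^{-p'(n)}$ once $p$ is chosen large enough.

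The main obstacle is bookkeeping rather than depth. First, the amplified terms must fit Definition~\ref{def:d_ham}. That definition asks for an average of polynomially many $W$ projections, but the paper's introduction notes the amplified Hamiltonian is an average over a number of terms that may be exponential yet are efficiently sampleable. I would handle this by noting that for $g=2$ with a deterministic product over all clauses in a layer there are in fact only $2$ terms (one per $\chi$), which is polynomial. Second, the identity-subtracted form $I-\E_\chi\bigotimes_j\prod_i(I-\Pi_i^\chi)$ must be written as $\E_\chi$ of $W$ projections. Both points follow from the commuting-projector observation in the previous paragraph.
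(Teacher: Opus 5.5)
Your proposal is correct and follows essentially the same route as the paper. You reduce from the instances of Theorem~\ref{thm:dual_ham}, use the $Z$/$X$ split as the $g=2$ commuting layers, get the constant NO-energy from Theorem~\ref{thm:DL-amplification} with $t=\Theta(q(n))$, and bound the YES-energy by a union bound over the commuting projectors (and over the $t$ copies). Your extra bookkeeping is sound and covers points the paper's short proof leaves implicit: rebalancing the layers so that $w_\chi=\Omega(1)$, checking that $H^{(t)}$ is a uniform average of two $W$ projections on $\poly(n)$ qubits, and arguing QMA membership.
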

\begin{proof}
The completeness follows from the fact that, within each layer, all projectors commute. The maximum eigenvalue $\lambda_{\max}(\prod_{i\in[m_\chi]} (I - \Pi_i^{\chi}))$ is lower bounded by
$1 - \min_{\ket{\phi}} \sum_i \bra{\phi}\Pi_i^{\chi}\ket{\phi}$.
In the yes-instance, there exists a state $\ket{\phi}$ such that each $ \bra{\phi}\Pi_i^{\chi}\ket{\phi} \le 1/\exp(n)$, since there are at most $\mathrm{poly}(n)$ such terms.
Moreover, because each layer contains at most $\mathrm{poly}(n)$ terms, the completeness follows.

The soundness follows from Theorem \ref{thm:dual_ham} and Theorem \ref{thm:DL-amplification}.

\end{proof}

\section{Evaluating Observables on Corrupted Codewords of a CSS Code}
\label{sec:css}

This section establishes some utility lemma on the structure of corrupted codewords of a quantum CSS code. The main lemma of the section is Lemma~\ref{lem:DO2Dq_beta}, which compares two different methods of evaluating a ``logical'' operation on a corrupted codeword.

We first formalize the notion of local testing for a quantum locally testable code (qLTC).

\begin{mdframed}
\begin{definition}[Local testing of a qLTC code]
\label{def:qltc-local-testing}
Let $C=CSS(H_X, H_Z)$ be a stabilizer code on $n$ qubits, and let $\ket{\phi}$ be a quantum state. The verifier performs the following steps $c$ times:
\begin{itemize}
\item The verifier randomly samples $W \in \{X, Z\}$.
\item The verifier randomly samples a row $r_W$ from $H_W$.
\item The verifier measures the observable $W(r_W)$ on the state $\ket{\phi}$ and rejects if the measurement outcome is $-1$.
\end{itemize}
If none of the tests reject, the verifier accepts.
\end{definition}
\end{mdframed}

\begin{lemma}
\label{lem:qltclocaltest}
Let $C=CSS(H_X, H_Z)$ be a stabilizer code on $n$ qubits such that $H_X\in \F_2^{c_X\times M}$, $H_Z\in \F_2^{c_Z\times M}$, are parity check matrices associated with classical locally testable codes with soundness $\rho$. Let $ {\cal S}$ be defined as in Lemma \ref{lem:stabilizer-decomposition}. Let
\[
\ket{\psi} = \sum_{(a,b)\in {\cal S}} \alpha_{a,b} X(a)Z(b) \ket{\eta_{a,b}}
\]
be a quantum state, where $\{\ket{\eta_{a,b}}\}$ are a family of codewords.
Then the verifier in Definition~\ref{def:qltc-local-testing} accepts $\ket{\psi}$ with probability at most
\[
\Pr[\text{verifier accepts}]
\leq \sum_{(a,b)\in {\cal S}} |\alpha_{a,b}|^2\cdot \Big(1  - \frac{ \rho }{2M}\cdot wt_{\text{cent}}(X(a)Z(b)) \Big)^c\;.
\]
\end{lemma}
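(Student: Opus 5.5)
The plan is to use that the stabilizer measurements are diagonal in the syndrome decomposition, so that the acceptance probability splits as a convex combination over the sectors $(a,b)\in{\cal S}$. The components $X(a)Z(b)\ket{\eta_{a,b}}$ are joint eigenvectors of every stabilizer generator $X(r)$, $r$ a row of $H_X$, and $Z(r)$, $r$ a row of $H_Z$. The eigenvalue of $X(r_X)$ on such a component is $(-1)^{(H_X b)_{r_X}}$, and the eigenvalue of $Z(r_Z)$ is $(-1)^{(H_Z a)_{r_Z}}$. Distinct $(a,b)\in{\cal S}$ have distinct syndrome pairs $(H_Z a, H_X b)$. Hence the projectors onto the accepting outcome of each single test commute with the projectors onto the syndrome sectors, and the $c$ sequential measurements never mix sectors. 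Consequently
\[
\Pr[\text{accept}]=\sum_{(a,b)\in{\cal S}}|\alpha_{a,b}|^2\, p_{a,b}^{\,c},
\]
where $p_{a,b}$ is the probability that a single round accepts the eigenvector in sector $(a,b)$. Within a sector each round is a deterministic function of the sampled row, and the rounds use independent randomness, which is why the power $c$ appears.

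Next I compute $p_{a,b}$. With $W$ uniform,
\[
p_{a,b}=1-\tfrac12\Big(\tfrac{|H_Xb|}{c_X}+\tfrac{|H_Za|}{c_Z}\Big).
\]
The local testability of $C_X=\ker H_X$ and $C_Z=\ker H_Z$ (Definition~\ref{def:qltc}, Definition~\ref{def:ltc}) gives
\[
\frac{|H_Xb|}{c_X}\ge \rho\,\frac{d_H(b,\ker H_X)}{M},\qquad \frac{|H_Za|}{c_Z}\ge \rho\,\frac{d_H(a,\ker H_Z)}{M}.
\]
It remains to show that
\[
d_H(b,\ker H_X)+d_H(a,\ker H_Z)\ge wt_{\text{cent}}(X(a)Z(b)),
\]
which yields $p_{a,b}\le 1-\frac{\rho}{2M}wt_{\text{cent}}(X(a)Z(b))$ and finishes the proof.

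For this last inequality, choose $a'\in\ker H_Z$ with $|a+a'|=d_H(a,\ker H_Z)$, and $b'\in\ker H_X$ with $|b+b'|=d_H(b,\ker H_X)$. By Definition~\ref{def:weight-error-stabilizer}, $X(a')Z(b')X(a)Z(b)$ is, up to phase, $X(a+a')Z(b+b')$, so
\[
wt_{\text{cent}}(X(a)Z(b))\le |(a+a')\vee(b+b')|\le |a+a'|+|b+b'|.
\]
This is exactly the needed bound. One should double-check the convention: the minimization in Definition~\ref{def:weight-error-stabilizer} is over $H_Xb'=0$ and $H_Za'=0$, which matches the kernels used here.

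The main obstacle is the first step: justifying rigorously that the sequential measurements factor sector by sector, including the fact that the $c$ post-measurement states stay in their sectors. I would handle this by writing the acceptance operator as a product of commuting projectors $\frac{I+W(r_W)}{2}$, averaged over the randomness. This operator is diagonal in the orthogonal decomposition of Lemma~\ref{lem:stabilizer-decomposition}, so its expectation is the stated weighted sum.
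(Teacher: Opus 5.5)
Your proposal is correct and follows essentially the same route as the paper. In both, the commuting accept-projectors are diagonal in the syndrome decomposition, which gives $\sum|\alpha_{a,b}|^2 p_{a,b}^c$ with $p_{a,b}=1-\frac12\bigl(|H_Xb|/c_X+|H_Za|/c_Z\bigr)$; classical LTC soundness then finishes once $d_H(b,\ker H_X)+d_H(a,\ker H_Z)\ge wt_{\text{cent}}(X(a)Z(b))$. The only cosmetic difference is that you prove this last inequality in one step by picking nearest codewords, whereas the paper splits it into $d_H(s,C_W)\ge wt_{\text{cent}}(W'(s))$ followed by subadditivity of $wt_{\text{cent}}$.
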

\begin{remark}
A similar statement was first proposed in~\cite{ae15}. The result was further strengthened in~\cite{eh17}, although no proof was provided. We strengthen the result even further.
\end{remark}

\begin{proof}
Let the verifier in Definition~\ref{def:qltc-local-testing} be denoted $V$.
Suppose the sequence of observables measured by $V$ is $(W_i(r_{i}) )_{i\in[c]}$. Note that these observables pairwise commute. Thus
\begin{align*}
\Pr\big[\text{$V$ accepts } \mid (W_i(r_{i}) )_{i\in[c]}\big]
&=\Big\| \prod_{i\in [c]}  \frac{I+ W_i(r_{i})}{2} \ket{\phi}\Big\|^2 \\
&= \bra{\phi} \prod_{i\in [c]}  \frac{I+ W_i(r_{i})}{2} \ket{\phi} \\
=&\sum_{(a,b)\in {\cal S}} |\alpha_{a,b}|^2 \bra{\eta_{a,b}}Z(b) X(a) \Big(\prod_{i\in [c]}  \frac{I+ W_i(r_{i})}{2} \Big) X(a)Z(b) \ket{\eta_{a,b}}\;.
\end{align*}
Here, the second step follows from $(\frac{I+ W_i(r_{i})}{2})^2 = \frac{I+ W_i(r_{i})}{2}$, and the third step follows from the fact that the $X(a)Z(b) \ket{\eta_{a,b}}$ are orthonormal eigenvectors of the $W_i(r_{i})$.
Continuing, since the $\ket{\eta_{a,b}}$ are codewords,
\begin{align*}
 \bra{\eta_{a,b}}Z(b) X(a) \Big(\prod_{i\in [c]}  \frac{I+ W_i(r_{i})}{2} \Big) X(a)Z(b) \ket{\eta_{a,b}}
=
\prod_{i\in [c]} 1_{[X(a)Z(b), W_i(r_{i})]=0}\; .
\end{align*}
Averaging over the verifier's choices of $W$ and $r_i$,
\begin{align}
\Pr[\text{$V$ accepts}]
=&\sum_{(a,b)\in {\cal S}} |\alpha_{a,b}|^2 \cdot
\Big( \frac{1}{2} \cdot\E_{i\in_R[c_X]} 1_{[X(a)Z(b), X((H_X)_i)]=0} +\frac{1}{2} \cdot \E_{i\in_R[c_Z]}
1_{[X(a)Z(b), Z((H_Z)_i)]=0}
\Big)^c \notag\\
=& \sum_{(a,b)\in {\cal S}} |\alpha_{a,b}|^2\cdot \Big( \frac{c_X - wt(H_X b)}{2 c_X}+ \frac{c_Z - wt(H_Z a)}{2 c_Z}  \Big)^c\; .\label{eq:ltc-1}
\end{align}
Let $C_X$ be the locally testable code corresponding to the
X-check matrix $H_X$.
Let $C_Z$ be the locally testable code corresponding to the
Z-check matrix $H_Z$.
By Definition \ref{def:ltc},
\begin{align*}
wt(H_Z a) \geq&~ \rho \cdot \frac{c_Z}{M} \cdot d_H(a, C_Z)\;, \\
wt(H_X b) \geq&~ \rho \cdot
 \frac{c_X}{M}\cdot  d_H(b, C_X)\; .
\end{align*}
Moreover, for any $ W\in \{X, Z\}$ and $s\in\F_2^n$,
let $ W'$ be the complementary basis of $W$, then
\begin{align*}
d_H(s, C_W) =&~ \min_{r \in \F_2^M,  H_W \cdot r=0} \big\{ {wt}(r + s) \big\} \\
\geq&~ \min_{a', b' \in \F_2^n,  H_X b'=0, H_Z a' = 0} \big\{ {wt}(X(a')Z(b') (W')(s)) \big\}\\
=&~ wt_{\text{cent}}(W'(s))\;,
\end{align*}
and
\begin{align*}
wt_{\text{cent}}(X(a)) + wt_{\text{cent}}(Z(b)) \geq wt_{\text{cent}}(X(a)Z(b))\;.
\end{align*}
Plugging back into~\eqref{eq:ltc-1},
\begin{align*}
\Pr[\text{$V$ accepts}]
\leq&~ \sum_{(a,b)\in {\cal S}} |\alpha_{a,b}|^2\cdot \Big(1  - \frac{ \rho }{2M}\cdot d_H(b, C_X)  -  \frac{\rho }{2M} d_H(a, C_Z) \Big)^c \\
\leq&~ \sum_{(a,b)\in {\cal S}} |\alpha_{a,b}|^2\cdot \Big(1  - \frac{ \rho }{2M}\cdot wt_{\text{cent}}(X(a)Z(b)) \Big)^c\; .
\end{align*}
\end{proof}

The next two lemmas, albeit simple, are used repeatedly throughout the proofs. In particular, they appear in the proof of the lemma that follows them.

\begin{lemma}
\label{lem:q_noise_s}
Recall the notation introduced in Section~\ref{sec:notation}.
For any $u, v \in C_W$, if $|u - v| < d_{q, W}$, then
\[
Ku = Kv.
\]
\end{lemma}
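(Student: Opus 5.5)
Write $e = u - v = u + v$ (over $\F_2$). Since $u, v \in C_W = \ker H_W$, we have $H_W e = 0$, so $e \in \ker(H_W)$. The plan is to show $Ke = 0$, which by linearity gives $Ku = Kv$. Recall that $K = K_W$ is $F$ when $W = X$ and $G$ when $W = Z$, as in Lemma~\ref{lem:logical_qltc}. The rows of $K$ are the supports of the $W$-type logical operators, and by Eq.~\eqref{eq:kfg} they satisfy $F H_Z^T = G H_X^T = 0$.

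The key step is the dichotomy coming from Definition~\ref{def:dist_mod}. By hypothesis $|e| = |u - v| < d_{q,W}$. Since $d_{q,W}$ is the minimum weight of an element of $\ker(H_W) \setminus \Im(H_{W'}^T)$, the vector $e$ cannot lie in that set. Because $e \in \ker(H_W)$, it follows that $e \in \Im(H_{W'}^T)$, i.e.\ $e = H_{W'}^T y$ for some $y$.

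Then $Ke = K H_{W'}^T y = 0$ by Eq.~\eqref{eq:kfg}, provided we check that the pairing of bases matches. I expect this index bookkeeping to be the only real obstacle, since the conventions for $H_X, H_Z$ versus $F, G$ are easy to transpose.

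\begin{itemize}
\item For $W = Z$: $K = G$ and $W' = X$, so the relation needed is $G H_X^T = 0$, which is given.
\item For $W = X$: $K = F$ and $W' = Z$, so the relation needed is $F H_Z^T = 0$, which is also given.
\end{itemize}

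The convention should be checked against the definition of $d_{q,W}$. There, $d_{q,X}$ uses $\ker(H_X) \setminus \Im(H_Z^T)$, so the complementary image for $W$ is indeed $\Im(H_{W'}^T)$, consistent with the above. If instead the conventions turned out to pair $K_W$ with $\Im(H_W^T)$, the argument would go through the same way, using whichever orthogonality relation from Lemma~\ref{lem:logical_qltc} applies. Either way, the logical supports annihilate the stabilizer image that arises as the "trivial" part of $\ker(H_W)$.
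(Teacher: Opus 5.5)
Your proof is correct and is essentially identical to the paper's: $u-v\in\ker(H_W)$ with $|u-v|<d_{q,W}$ forces $u-v\in\Im(H_{W'}^T)$, which $K$ annihilates by \eqref{eq:kfg}. Your explicit check of the $W=X$/$W=Z$ pairing is right, so the hedge in your final paragraph is unnecessary (and the alternative pairing with $\Im(H_W^T)$ would not work in general).
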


\begin{proof}
Let $ W'$ be the complementary basis of $W$.

Since $u, v \in C_W$, it follows that $u - v \in C_W$.
Moreover,
\[
|u - v| < d_{q, W} =
\min\{\, |w| : w \in \ker(H_W) \setminus \Im(H_{W'}^T) \,\},
\]
implies that $u - v \in \Im(H_{W'}^T)$.

Because $K H_{W'}^T = 0$ by~\eqref{eq:kfg},
we conclude that $K(u - v) = 0$,
and hence $Ku = Kv$.
\end{proof}

\begin{lemma}
\label{lem:q_noise}
Recall the notation introduced in Section~\ref{sec:notation}.
Then, for any $p \in \F_2^M$ such that $|p| < \eta_W d_{q,W}$, then for any $v\in \F_2^{B_W} $, it holds that
\[
K E_W\, \texttt{D}_W(p + E_W v) = K E_W v.
\]
\end{lemma}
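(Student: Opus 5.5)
The plan is to reduce the statement to Lemma~\ref{lem:q_noise_s} by means of the locality guarantee of the decoder in Fact~\ref{fact:decoder}. Fix $v\in\F_2^{B_W}$ and $p$ with $|p|<\eta_W d_{q,W}$. Set $u = E_W\,\tD_W(p+E_W v)$ and $u' = E_W v$. Both are codewords of $C_W=\ker H_W$, because $E_W$ is the generator matrix of $C_W$. Lemma~\ref{lem:q_noise_s} gives $Ku=Ku'$ as soon as $|u-u'|<d_{q,W}$, and that equality is exactly the claim. The whole proof therefore comes down to bounding $|u+u'|$ (over $\F_2$, subtraction is addition).

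First, I would apply Fact~\ref{fact:decoder} with $z=v$ and $e=p$. Its hypothesis $|e|<\eta_W d_{q,W}$ is precisely our assumption. The conclusion reads
\[
|E_W v + E_W\tD_W(E_W v + p)| \le (1/\eta_W)\,|p| .
\]
Combining this with $|p|<\eta_W d_{q,W}$ yields $|u+u'| < (1/\eta_W)\cdot \eta_W d_{q,W} = d_{q,W}$. Second, I would invoke Lemma~\ref{lem:q_noise_s} on the pair $u,u'\in C_W$. It gives $u+u'\in\Im(H_{W'}^T)$, and then $K(u+u')=0$ follows from $KH_{W'}^T=0$, which is Eq.~\eqref{eq:kfg}. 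This concludes $KE_W\tD_W(p+E_Wv)=KE_Wv$.

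The only delicate point is to check that the constants line up. The hypothesis must be strict, so that the bound $(1/\eta_W)|p|$ falls strictly below $d_{q,W}$. I also need to confirm that the $d_{q,W}$ appearing in Fact~\ref{fact:decoder} is the same quantity as in Definition~\ref{def:dist_mod} and Lemma~\ref{lem:q_noise_s}, and that the output of $\tD_W$ lies in $\F_2^{B_W}$, so that $E_W\tD_W(\cdot)$ is a genuine codeword. If Fact~\ref{fact:decoder} only held with a slightly weaker constant, I would instead shrink the radius in the hypothesis by a constant factor. As stated, however, the constants match exactly, and I expect no real obstacle.
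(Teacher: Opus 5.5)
Your proposal is correct and follows essentially the same route as the paper. The paper's proof likewise applies Fact~\ref{fact:decoder} to get $|E_W v + E_W\tD_W(p+E_W v)| \le (1/\eta_W)|p| < d_{q,W}$, and then concludes with Lemma~\ref{lem:q_noise_s}.
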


\begin{proof}
By Fact~\ref{fact:decoder},
\[
\big|\, \tE_W v + \tE_W \tD_W(p + E_W v)\, \big| \leq \frac{1}{\eta_W} \cdot |p| < d_{q,W}.
\]

Then, by Lemma~\ref{lem:q_noise_s},
\[
K E_W\, \texttt{D}_W(p + E_W v) = K E_W v. \qedhere
\]
\end{proof}

The following important lemma relates measurements obtained on the decoded quantum word (the observable $W'$), to measurements obtained directly on the encoded word but subsequently decoded according to the classical code (the observable $W''$). The lemma will be a key workhorse of soundness arguments in the following sections.

\begin{lemma}
\label{lem:DO2Dq_beta}
Recall the notations introduced in Section~\ref{sec:notation}.
Let
\begin{align*}
\ket{\beta} = \sum_{(a,b)\in {\cal S}} 1(wt_{\text{cent}}(X(a)Z(b)) < \eta_q d_q )\cdot \alpha_{a,b} X(a)Z(b) \ket{\eta_{a,b}}
\end{align*}
be a quantum state, where $\{\ket{\eta_{a,b}}\}$ are codewords.
Then, for any Boolean function $f : \{0, 1\}^{B_q} \rightarrow \{0, 1\}$,
\begin{align*}
W''_{\Tilde{f}, W}\ket{\beta } =  W'_{f, W} \ket{\beta}.
\end{align*}

\end{lemma}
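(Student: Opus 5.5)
The strategy is to check the identity term by term in the stabilizer decomposition. Both $W'_{f,W}$ and $W''_{\tilde f,W}$ are linear, so it suffices to show
\[
W''_{\tilde f,W}\, X(a)Z(b)\ket{\eta_{a,b}} = W'_{f,W}\, X(a)Z(b)\ket{\eta_{a,b}}
\]
for every $(a,b)\in{\cal S}$ with $wt_{\text{cent}}(X(a)Z(b))<\eta_q d_q$ and every codeword $\ket{\eta_{a,b}}$. By linearity in the codeword, we may further take $\ket{\eta_{a,b}}=E_q(\ket{u}\ket{0})$ for a logical basis state, written in the $W$-basis; that is, $S_W$ is applied to a $W$-basis logical state $\ket{u}$.

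\textbf{The left side, via $W'_{f,W}$.} By Definition~\ref{def:d_qltc}, $D_q$ maps $X(a)Z(b)E_q\ket{u}$ to $\ket{u}\ket{H_Za}\ket{H_Xb}$. The operator $(W_f)_{\mathsf L}$ then multiplies by $(-1)^{f(u)}$, and $D_q^\dagger$ restores the state. Hence $W'_{f,W}$ acts on this term as the scalar $(-1)^{f(u)}$. If we instead work with a general $W$-basis logical superposition, $W'_f$ acts diagonally with phase $(-1)^{f(u)}$ on each logical $W$-basis component.

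\textbf{The right side, via $W''$.} Here $S_W$ rotates to the computational basis. In that basis, the $W$-basis logical state $\ket{u}$ is a uniform superposition over strings $x = c + e$, where $c$ ranges over the coset of codewords of $C_W$ with $K c = u$, and $e$ is the $W'$-type component of the error. Concretely, $e=a$ when $W=Z$ and $e=b$ when $W=X$. The other error component only contributes phases, and these are untouched by a diagonal operator.

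The operator $D_W$ computes $\tD_W(x)$ into $\mathsf T$, then $Z_{\tilde f}$ applies the phase $(-1)^{\tilde f(\tD_W(x))} = (-1)^{f(KE_W\tD_W(x))}$, and $D_W^\dagger$ uncomputes. So $W''$ is diagonal with phase $(-1)^{f(KE_W\tD_W(x))}$ on each string $x$.

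\textbf{Comparing the phases.} We have $|e|\le wt_{\text{cent}}<\eta_q d_q\le \eta_W d_{q,W}$. The last inequality uses $d_q=\min(d_{q,X},d_{q,Z})$ together with the notational choice $\eta_q=\min(\eta_X,\eta_Z)$. Since $e$ is the minimal representative chosen in ${\cal S}$, we can write $x=E_Wv+e$, and Lemma~\ref{lem:q_noise} gives $KE_W\tD_W(x)=KE_Wv=Kc=u$. Therefore the phase is $(-1)^{f(u)}$ for every $x$ in the support, and the two operators agree on the term.

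The two phases must be compared in the same logical labeling. The string $u$ read from $D_q$ (via $G$, or $F$ for $X$) must coincide with $Kc$, so that $K=K_W$ realizes the same logical $W$-basis labels as the encoder $E_q$ of Definition~\ref{def:e_qltc}. This holds by construction with $K_Z=G$; for $W=X$ it requires the Hadamard-conjugated encoding to use $F$ consistently. Summing over $(a,b)$ then gives the claim.

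\textbf{Main obstacle.} The delicate step is justifying that the weight bound on $wt_{\text{cent}}(X(a)Z(b))$ controls the $W'$-component weight $|e|$ of the actual support strings, uniformly across the superposition. A coset shift by $\Im(H_{W'}^T)$ could in principle enlarge the error, which is why one must write $x = E_Wv + e$ with $e$ fixed rather than choosing an arbitrary representative.

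I would handle this using the definition of ${\cal S}$, where $(a,b)$ minimizes $|a\vee b|$. This gives $|e|\le |a\vee b|=wt_{\text{cent}}$ by the remark after Definition~\ref{def:weight-error-stabilizer}. Lemma~\ref{lem:q_noise} then holds for each fixed $v$, and the result is independent of the codeword part.
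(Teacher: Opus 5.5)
Your proof is correct and follows essentially the paper's route: you expand term by term in the stabilizer decomposition and observe that $W'_{f,W}$ applies the logical phase while $W''$ applies $(-1)^{f(KE_W\tD_W(x))}$ on each string $x=E_Wv+e$ in the support. You then use Lemma~\ref{lem:q_noise} with $|e|\le wt_{\text{cent}}(X(a)Z(b))<\eta_q d_q\le \eta_W d_{q,W}$ to show the two phases agree; the paper phrases the common value as $X(a)Z(b)W_{f'}\ket{\eta_{a,b}}$ rather than as a scalar on logical basis states.

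The $W=X$ labeling point you flag does hold. Since $F_i\cdot G_j=1(i=j)$ and $FH_Z^T=0$, $X(F_j)$ maps $\ket{\phi_i}$ to $\ket{\phi_{i+e_j}}$ with no phase. Hence $H^{\otimes M}E_q(H^{\otimes B_q}\ket{u})$ is supported on $\{y\in C_X : Fy=u\}$, exactly as needed with $K_X=F$.
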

\begin{remark}

Note that, for $v\in \F_2^{B_W}$, $ \texttt{D}_W(E_W v) =v$.
\end{remark}

\begin{proof}
{\bf Part I: analysis of the behavior of $W'_f$.}
Let $\ket{\theta_{a,b}}$ be such that $\ket{\eta_{a,b}} =E_q(\ket{\theta_{a,b}})$.
For $x\in\mathbb{F}_2^{B_q}$ let $ \ket{\phi_x}= E_q(\ket{x})$ be as in Definition \ref{def:e_qltc}.
Let $ \ket{\psi_x} = \sum_{i\in \F_2^{B_q}} (-1)^{i\cdot x} \ket{\phi_i}$
Let $f'_W : \{0, 1\}^{M }\rightarrow \{0, 1\}$ be $f'_W(x) = f( K_W x)$.

Note that in the definition of $\ket{\phi_x}$ from Definition~\ref{def:e_qltc}, the projection
\[
\Pi_x = \prod_{j \in [k]} \frac{I + (-1)^{x_j} Z(G_j)}{2}
\]
is diagonal in the computational basis. For any $y \in \F_2^M$ in the support of $\Pi_x$, we have $G_j y = x_j$. Therefore, $f'(y) = f(Gy) = f(x)$, and
\begin{align*}
&~ Z_{f'} E_q(\ket{x}) \\
=&~ (-1)^{f(x)} \cdot E_q(\ket{x}),
\end{align*}
and similarly,
\begin{align*}
&~ X_{f'} E_q(H\ket{x}) \\
=&~ (-1)^{f(x)} \cdot E_q(H\ket{x}).
\end{align*}

Using the definition of $F$ and $G$,
\begin{align*}
Z_{f'}\ket{\phi_x} = (-1)^{f(x)}\cdot E_q(\ket{x})=E_q(Z_f \ket{x})\;,
\end{align*}
and
\begin{align*}
X_{f'}\ket{\psi_x} = (-1)^{f(x)}\cdot E_q(H\ket{x})=E_q(X_f H\ket{x})\;,
\end{align*}
Expanding $\ket{\eta_{a,b}}$ in the codeword basis given by the $ \ket{\phi_x}$ or $ \ket{\psi_x}$ it follows by linearity that
\begin{align*}
 W_{f'} \ket{\eta_{a,b}} = E_q(W_f \ket{\theta_{a,b}})\;.
\end{align*}

By the definitions of $d_q$, $W'_f$, and $D_q$ in Notation \ref{not:codes}, we have, 
for $(a,b)\in {\cal S}$,
\begin{align}
W'_f X(a)Z(b)\ket{\eta_{a,b}}
&=
D_{q}^\dagger W_f D_{q}
X(a)Z(b)\ket{\eta_{a,b}} \notag\\
&=
D_{q}^\dagger W_f
\ket{\theta_{a,b}}\ket{H_Z a} \ket{H_X b} \notag\\
&=
D_{q}^\dagger
D_{q}(W_{f'} \ket{\eta_{a,b}})
\ket{H_Z a} \ket{H_X b} \notag\\
&= X(a)Z(b) W_{f'}\ket{\eta_{a,b}} . \label{eq:ow-1}
\end{align}

{\bf Part II: analysis of the behavior of $W''_{w^T K E_W }$.}
Let $ v\in \F_2^B$,
\begin{align*}
W''_{w^T K E_W } S \ket{E_W v} = (-1)^{w^T K E_W v}S\ket{E_W v} = W(w^T K)S\ket{E_W v} .
\end{align*}

Therefore,
\begin{align*}
W''_{w^T K E_W } \ket{\eta_{a,b}} = W(w^T K)\ket{\eta_{a,b}}.
\end{align*}

Moreover, we have that
\begin{align*}
&~D_{Z,\mathsf{TW}}^\dagger Z_{\Tilde{f}, \mathsf{T} }D_{Z,\mathsf{TW}} X(p)Z(q) \ket{E_Z v} \\
=&~ (-1)^{p^T q} D_{Z,\mathsf{TW}}^\dagger Z_{\Tilde{f}, \mathsf{T} }D_{Z,\mathsf{TW}} Z(q) X(p) \ket{E_Z v} \\
=&~ (-1)^{p^T q} D_{Z,\mathsf{TW}}^\dagger Z_{\Tilde{f}, \mathsf{T} }D_{Z,\mathsf{TW}} Z(q)  \ket{p+E_Z v} \\
=&~ (-1)^{p^T q} Z(q) D_{Z,\mathsf{TW}}^\dagger Z_{\Tilde{f}, \mathsf{T} }D_{Z,\mathsf{TW}}  \ket{p+E_Z v} \\
=&~ (-1)^{p^T q + f(KE_Z \texttt{D}_Z(p+E_Z v)) } Z(q) \ket{p+E_Z v} \\
=&~ (-1)^{p^T q + f(KE_Z \texttt{D}_Z(p+E_Z v)) } Z(q)X(p) \ket{E_Z v} \\
=&~ (-1)^{ f(KE_Z \texttt{D}_Z(p+E_Z v)) } X(p)Z(q) \ket{E_Z v} .
\end{align*}

Similarly, we have that
\begin{align*}
D_{X,\mathsf{TW}}^\dagger X_{\Tilde{f}, \mathsf{T} }D_{X,\mathsf{TW}} X(p)Z(q) H\ket{E_X v}  = (-1)^{ f(KE_X \texttt{D}_X(q+E_X v)) } X(p)Z(q) H\ket{E_X v} .
\end{align*}

Moreover,
\begin{align*}
&~X(p)Z(q) Z_{f'} \ket{E_Z v} \\
=&~(-1)^{f(K E_Z v)} X(p)Z(q) \ket{E_Z v} ,
\end{align*}
and
\begin{align*}
&~X(p)Z(q) X_{f'} H \ket{E_X v} \\
=&~(-1)^{f(K E_X v)} X(p)Z(q) H \ket{E_X v} .
\end{align*}

Since $ wt_{\text{cent}}(X(a)Z(b)) < \eta_q d_q$,
we have that $ |a|, |b| < \eta_q d_q$. By Lemma \ref{lem:q_noise},
\begin{align*}
G E_Z \texttt{D}_Z(a+E_Z v) = G E_Z v\;,\qquad F E_X \texttt{D}_X(b+E_Xv) = F E_X v\;.
\end{align*}
Therefore,
\begin{equation}\label{eq:ow-2}
W''_{\Tilde{f} }X(a)Z(b) \ket{\eta_{a,b}} = X(a)Z(b) W_{f'}\ket{\eta_{a,b}}.
\end{equation}

{\bf Part III: Putting things together.}
Combining~\eqref{eq:ow-1} and~\eqref{eq:ow-2},
\begin{align*}
W'_{f} X(a)Z(b)\ket{\eta_{a,b}} = W''_{ \Tilde{f} }X(a)Z(b) \ket{\eta_{a,b}},
\end{align*}
and
\begin{align*}
W'_f \ket{\beta} = W''_{ \Tilde{f} }\ket{\beta}.
\end{align*}

\end{proof}

\begin{lemma}
\label{lem:DO2Dq}
Recall the notations introduced in Section~\ref{sec:notation}.
Let
\[
\ket{\psi} = \sum_{(a,b)\in {\cal S}} \alpha_{a,b} X(a)Z(b) \ket{\eta_{a,b}}
\]
be a quantum state, where $\{\ket{\eta_{a,b}}\}$ are a family of codewords. Then, for any boolean function $f :\mathbb{F}_2^{B_q}\rightarrow \mathbb{F}_2$,
\begin{align*}
\|W''_{\Tilde{f} }\ket{\psi } -  W'_f \ket{\psi}\|^2 \leq \Theta(1)\cdot \sum_{(a,b)\in {\cal S}} 1(wt_{\text{cent}}(X(a)Z(b)) \geq \eta_q d_q)\cdot |\alpha_{a,b}|^2\;.
\end{align*}
\end{lemma}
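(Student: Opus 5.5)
Split $\ket{\psi}$ into a low-weight part and a high-weight part:
\[
\ket{\psi}=\ket{\beta}+\ket{\gamma},\qquad
\ket{\gamma}=\sum_{(a,b)\in{\cal S}} 1(wt_{\text{cent}}(X(a)Z(b))\ge \eta_q d_q)\,\alpha_{a,b}X(a)Z(b)\ket{\eta_{a,b}}\;,
\]
with $\ket{\beta}$ exactly as in Lemma~\ref{lem:DO2Dq_beta}. By linearity of both operators,
\[
W''_{\tilde f}\ket{\psi}-W'_f\ket{\psi}=\big(W''_{\tilde f}\ket{\beta}-W'_f\ket{\beta}\big)+\big(W''_{\tilde f}\ket{\gamma}-W'_f\ket{\gamma}\big).
\]
Lemma~\ref{lem:DO2Dq_beta} gives $W''_{\tilde f}\ket{\beta}=W'_f\ket{\beta}$, so the first bracket vanishes. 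The whole difference is therefore the contribution of $\ket{\gamma}$.

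Both operators are unitary. $W'_f=D_q^\dagger(W_f)_{\mathsf L}D_q$ is a conjugation of the $\pm1$-diagonal (in a rotated basis) unitary $W_f$ by the unitary $D_q$. Likewise $W''_{\tilde f}$ is a conjugation of $(Z_{\tilde f})_{\mathsf T}$ by $D_{W,\mathsf{TW}}S_{\mathsf W}$, acting with $\mathsf T$ in $\ket{0}$ as an implicit ancilla. Hence each has operator norm $1$, and
\[
\|W''_{\tilde f}\ket{\gamma}-W'_f\ket{\gamma}\|^2\le \big(\|W''_{\tilde f}\ket{\gamma}\|+\|W'_f\ket{\gamma}\|\big)^2\le 4\|\ket{\gamma}\|^2 .
\]

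To conclude, I will compute $\|\ket{\gamma}\|^2$. By Lemma~\ref{lem:stabilizer-decomposition}, the vectors $X(a)Z(b)\ket{\eta_{a,b}}$ for distinct $(a,b)\in{\cal S}$ are pairwise orthogonal unit vectors. So $\|\ket{\gamma}\|^2=\sum_{(a,b)\in{\cal S}}1(wt_{\text{cent}}(X(a)Z(b))\ge\eta_q d_q)|\alpha_{a,b}|^2$, and the claim holds with constant $4$.

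The main point needing care is that $W''_{\tilde f}$ is genuinely norm-preserving on $\mathsf W$. $D_{W,\mathsf{TW}}$ is an XOR-into-$\mathsf T$ permutation unitary, so $D^\dagger Z_{\tilde f}D$ maps $\ket{0}_{\mathsf T}\ket{w}$ to $(-1)^{\tilde f(\tD_W(w))}\ket{0}_{\mathsf T}\ket{w}$. It therefore acts on $\mathsf W$ as a diagonal sign unitary in the $S_W$-rotated basis, and the bound holds. I would also note that the implicit identification of the $\mathsf T$ ancilla in the notation is consistent with how Lemma~\ref{lem:DO2Dq_beta} was stated.
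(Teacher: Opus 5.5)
Your proposal is correct and follows the paper's approach. The paper's proof is the single line ``follows directly from Lemma~\ref{lem:DO2Dq_beta},'' and your steps supply exactly the details left implicit there: the split $\ket{\psi}=\ket{\beta}+\ket{\gamma}$, exact agreement of the two operators on $\ket{\beta}$, the unitarity bound $4\|\ket{\gamma}\|^2$ on the remainder, and the orthogonality of the terms $X(a)Z(b)\ket{\eta_{a,b}}$ (distinct elements of ${\cal S}$ have distinct syndromes) to evaluate $\|\ket{\gamma}\|^2$.
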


\begin{proof}
This follows directly from Lemma \ref{lem:DO2Dq_beta}.
\end{proof}

\section{PCP of Proximity Applied to Quantum States}
\label{sec:pcp-proxim}

In this section we introduce and analyze a simple interaction between prover and verifier in which the prover is asked to compute a PCPP for the validity of a certain computation (abstracted through the function $f$ below) when given an encoded input to the computation and a target output. This will be used as a building block later on.

\subsection{Actions of the honest prover}
In the following definition the prover is given registers $ \mathsf{U}$, $\mathsf{B}$,  and $\mathsf{E}$.
The register $\mathsf{U}$ represents the (target) output of a computation. The register $\mathsf{B}$ represents the witness for the verification of the computation, which the prover is to compute: initially, the register $ \mathsf{B}$ is all $\ket{0}$.
Finally, $\mathsf{E}$ is prepared as the encoded, according to a classical error correcting code, input of the computation.

Note that, to maintain the flexibility and generality of the results in this section, we do not adopt a specific classical code (e.g., the one defined in Notation~\ref{not:codes}); instead, we state the results for any classical error-correcting code $C_P$ with appropriate parameters.

\begin{remark}
In this section as in subsequent ones we describe the actions of a honest prover in a coherent form, i.e.\ replacing measurements by copying information, etc. This is to preserve the protocol’s flexibility, e.g., the ability to generalize it to handle more complex observables. Any prover can be written in such a purified form, even though a dishonest prover may of course apply any unitaries they like.
\end{remark}

\begin{mdframed}

\begin{definition}[PCP of proximity for verifying a computation, honest prover]
\label{def:pcpp:hp}
Recall the notations introduced in Section~\ref{sec:notation}.

Let $C_P$ be a classical error-correcting code with dimension $k$, codeword length $n_P(k)$, decoder $\texttt{D}_P$, and code tester $\texttt{T}_P$.

Let
$\mathsf{U}$ be an $l_x$-qubit register denoting the output, $\mathsf{E}$ an $l_y$-qubit register denoting the encoded input, let $f:\F_2^{n_P^{-1}(l_y)} \rightarrow \F_2^{l_x}$ represent the verification of the computation.
Let $g:\F_2^{l_x}\times \F_2^{l_y} \rightarrow \F_2$ be defined as
\begin{align*}
g(x,y) = 1(f(\texttt{D}_P(y))=x) \wedge \texttt{T}_P(y)\;.
\end{align*}
Let $s$ be the size of the circuit $ g $.
Let $\mathsf{B}$ be a quantum register of  $l_{w}(s) $ qubits.
The honest prover performs the following steps:
\begin{itemize}
\item The prover computes an $l_{w}(s)$-bit long PCPP proof for a statement regarding the content of register $\mathsf{E}$. This proof is stored in register $\mathsf{B}$.
Formally,
the prover applies the following unitary:
\begin{align*}
N_{g, \mathsf{B} \mathsf{U} \mathsf{E}}\;.
\end{align*}

\item The prover sends the verifier the quantum registers $\mathsf{B}$.
\end{itemize}
\end{definition}

\end{mdframed}

\subsection{Actions of the verifier}

Before running the following protocol, the honest prover has already sent $ \mathsf{U}$, the register storing the computation output, to the verifier.
\begin{mdframed}
\begin{definition}[PCP of proximity for verifying a computation, verifier]
\label{def:pcpp:v}
Use
the notation in Definition \ref{def:pcpp:hp}.
The verifier performs the following steps:
\begin{itemize}
\item The verifier executes the PCPP verifier on input $(g_x,y)$ and stores the verification result in a single qubit register $\mathsf{C}$, where $x$ is the contents of register $\mathsf{U}$, $y$ is the contents of register $\mathsf{E}$ and the proof is in register $\mathsf{B}$. The randomness is stored in register $\mathsf{R}$.
Formally, the verifier initialize $\mathsf{C}$ to $\ket{0}$ and $\mathsf{R}$ to $l_r(s)$  classical randomness bits, then the verifier applies $$T_{g, \mathsf{C}\mathsf{U}\mathsf{E}\mathsf{B}\mathsf{R}}.$$
\item The verifier rejects if the PCPP verifier rejects. Formally, the verifier  measures $\mathsf{C}$ and rejects if the outcome is $0$.
\end{itemize}

\end{definition}
\end{mdframed}

\subsection{Main results}

We now state the main lemma in this section. Informally, the lemma states that as long as every input within the proximity parameter of one that decodes to the target output also decodes to that output, the verifier’s acceptance probability is at most the sum of the PCPP soundness parameter and the probability that a measurement of the whole quantum system yields a classical string that is a good instance. Note that we can obtain a PCPP verifier with improved soundness (while preserving the proximity parameter) by repeatedly applying the PCPP verifier, at the cost of increased query complexity.

\begin{lemma}[Implication of the PCPP test]
\label{lem:pcpp}
Let $ s,\delta$ be defined as in Notation \ref{not:pcpp}. Use the notation introduced in Definition \ref{def:pcpp:v}.

Let $\mathsf{O}$ be a quantum register representing all registers in the system except for $\mathsf{U}$, $\mathsf{E}$, and $\mathsf{B}$.
Let $\ket{\phi}_{\mathsf{O}\mathsf{U}\mathsf{E}\mathsf{B}}$ be the quantum state of the system before the prover returns the quantum registers $\mathsf{B}$.
Let the state be written as
\begin{align*}
\ket{\phi} = \sum_{o,u,e,b} \alpha_{o,u,e,b} \ket{o}_{\mathsf{O}}\ket{u}_{\mathsf{U}}\ket{e}_{\mathsf{E}}\ket{b}_{\mathsf{B}}.
\end{align*}

For $u\in\F_2^{l_x}$, let $S_u=\{e\in\F_2^{l_y}\mid 1(f(\texttt{D}_P(e))=u) \wedge \texttt{T}_P(e)=1\}$. If $f(\texttt{D}_P(e))=u$ for every $u\in\F_2^{l_x}$ and every $e\in\F_2^{l_y}$ with $\text{dist}(e,S_u)\le\delta$,
then
\begin{itemize}
\item The acceptance probability of the verifier is upper bounded by
\begin{align*}
\Pr[\text{PCPP test accepts}] \leq  \max\Big\{s , \sum |\alpha_{o,u,e,b}|^2 \cdot 1\Big( f(\texttt{D}_P(e))=u \Big)  \Big\} .
\end{align*}
\item The rejection probability of the verifier is lower bounded by
\begin{align*}
\Pr[\text{PCPP test rejects}] \geq  - s + \sum |\alpha_{o,u,e,b}|^2 \cdot 1\Big(f(\texttt{D}_P(e)) \neq u \Big) .
\end{align*}
\end{itemize}
\end{lemma}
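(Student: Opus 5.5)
The plan is to reduce everything to the classical PCPP soundness guarantee of Lemma~\ref{lem:pcpp_amp}, applied branch by branch in the computational basis. The verifier's action $T_g$ only reads the registers $\mathsf{U},\mathsf{E},\mathsf{B}$ and the classical randomness $\mathsf{R}$, and measures $\mathsf{C}$ in the computational basis. Since $T_g$ is a classical reversible map conditioned on the computational-basis contents of these registers, I will first argue that the acceptance probability equals
\[
\sum_{o,u,e,b}|\alpha'_{o,u,e,b}|^2\,\Pr_r[\texttt{T}(g_u,e,b;r)=1].
\]
Here $\alpha'$ are the amplitudes of the state \emph{after} the prover has acted on $\mathsf{B}$ and its private register. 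The key observation is that the prover never touches $\mathsf{U}$ or $\mathsf{E}$: they are already in the verifier's possession. The prover's channel therefore acts only on $\mathsf{O}$ (its part) and $\mathsf{B}$, and the marginal distribution of $(u,e)$ obtained by measuring $\mathsf{U}\mathsf{E}$ is unchanged. So $\sum_{o,b}|\alpha'_{o,u,e,b}|^2=\sum_{o,b}|\alpha_{o,u,e,b}|^2=:p_{u,e}$. Cross terms between distinct $(u,e,b)$ vanish because the check is diagonal in these registers and $\mathsf{R}$ is a fresh classical mixture.

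Next I split the pairs $(u,e)$ into two classes: \emph{good} pairs with $f(\texttt{D}_P(e))=u$, and \emph{bad} pairs with $f(\texttt{D}_P(e))\neq u$. For a bad pair, the hypothesis gives $\mathrm{dist}(e,S_u)>\delta$, so $e$ is $\delta$-far from $L(g_u)$. If $e$ were within $\delta$ of $S_u$, the assumed closure property would force $f(\texttt{D}_P(e))=u$, contradicting badness. PCPP soundness then bounds the acceptance probability by $s$ for \emph{every} proof string $b$. For good pairs I simply bound the acceptance probability by $1$. Summing gives
\[
\Pr[\text{accept}]\le s\cdot\sum_{\text{bad}}p_{u,e}+\sum_{\text{good}}p_{u,e}.
\]
This is a convex combination of $s$ and $1$ with weights totalling $1$. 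Writing $P_{\mathrm{good}}=\sum_{\text{good}}p_{u,e}$, it is at most $s+P_{\mathrm{good}}$, and also at most $\max\{s,P_{\mathrm{good}}\}$ up to the additive form. I expect the intended statement to be read as the bound $s(1-P_{\mathrm{good}})+P_{\mathrm{good}}$. I would state it in that form and derive the displayed bounds from it. The rejection bound follows directly: $\Pr[\text{reject}]\ge(1-s)(1-P_{\mathrm{good}})\ge(1-P_{\mathrm{good}})-s$, which is the second item.

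The main obstacle is the first step: rigorously justifying that an adversarial prover, who may be entangled with $\mathsf{O}$ and acts arbitrarily on $\mathsf{B}$, cannot alter the weights $p_{u,e}$. The same step must show that the acceptance probability decomposes as a mixture over computational-basis values of $(u,e)$ with an arbitrary (possibly superposed) $b$. I would handle the superposed $b$ by measuring $\mathsf{B}$ as well. This is allowed because $T_g$ followed by measuring $\mathsf{C}$ commutes with a computational-basis measurement of $\mathsf{U}\mathsf{E}\mathsf{B}$, so PCPP soundness can be applied to each fixed classical proof $b$. For the first item, I would also check carefully whether the claimed $\max$ really holds or whether the weaker but correct sum bound is what later sections use.
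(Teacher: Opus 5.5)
Your argument is the paper's argument. You expand the acceptance probability over computational-basis branches $(o,u,e,b)$ of the state to which $T_g$ is applied. On branches with $f(\texttt{D}_P(e))\neq u$, which by the closure hypothesis have $\dist(e,S_u)>\delta$, you use PCPP soundness. You bound all remaining branches by $1$. This gives
\[
\Pr[\text{accept}]\le s\,(1-P_{\mathrm{good}})+P_{\mathrm{good}},\qquad P_{\mathrm{good}}=\sum|\alpha_{o,u,e,b}|^2\, 1\big(f(\texttt{D}_P(e))=u\big),
\]
and the second item then follows as you say, from $\Pr[\text{reject}]\ge(1-s)(1-P_{\mathrm{good}})\ge(1-P_{\mathrm{good}})-s$.

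Your ``main obstacle'' is moot. The paper takes $\ket{\phi}$ to be the state on which the check acts and expands directly in its amplitudes. Your invariance argument does no harm if you read the statement differently.

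Your suspicion about the first item is justified. You should, however, delete the remark that the bound is ``also at most $\max\{s,P_{\mathrm{good}}\}$ up to the additive form'': that claim is false. The paper reaches the $\max$ through the step
\[
\sum|\alpha|^2\max\{s,1(\cdot)\}\le\max\Big\{s,\sum|\alpha|^2 1(\cdot)\Big\},
\]
which is Jensen's inequality in the wrong direction, since $\max$ is convex.

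For a concrete counterexample, put weight $s$ on a branch with $e\in S_u$ carrying its honest proof, which completeness accepts with probability $1$. Put weight $1-s$ on a branch with $f(\texttt{D}_P(e))\neq u$ that is accepted with some probability $a\in(0,s)$; soundness does not rule this out. The acceptance probability is then $s+(1-s)a>s=\max\{s,P_{\mathrm{good}}\}$.

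So the first item as stated does not hold in general. The correct version is your $s+(1-s)P_{\mathrm{good}}$, or the cruder $s+P_{\mathrm{good}}$. This does not affect the rest of the paper. The only downstream use, in Lemma~\ref{lem:m_zx:pcpp}, invokes the rejection bound. That bound is correct: it follows from your derivation and also from the paper's valid intermediate inequality $\Pr[\text{accept}]\le\sum|\alpha|^2\max\{s,1(\cdot)\}$.
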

\begin{proof}

For $ u\in \F_2^{l_x}$, recall
\begin{align*}
S_{u} = \{ e \in \F_2^{l_y} \mid
1(f(\texttt{D}_P(e))=u) \wedge \texttt{T}_P(e)=1 \}.
\end{align*}

We analyze the acceptance probability:
\begin{align*}
\Pr[\text{verifier accepts}]
= &~ \sum |\alpha_{o,u,e,b}|^2 \cdot \Pr[\text{verifier accepts } \ket{o,u,e,b}].
\end{align*}

\begin{itemize}
\item If $\text{dist}(e, S_{u}) > \delta$, then
  \begin{align*}
\Pr[\text{PCPP accepts } u, e, b] \leq &~ s.
\end{align*}
\item If $\text{dist}(e, S_{u}) \leq \delta$, then
\begin{align*}
\Pr[\text{PCPP accepts } u, e, b] \leq &~ 1= 1\Big(f(\texttt{D}_P(e))=u \Big),
  \end{align*}
where the last step uses the condition of the lemma, that $\text{dist}(e,S_u)\le\delta$ implies $f(\texttt{D}_P(e))=u$.
\end{itemize}
As a result,
\begin{align*}
\Pr[\text{verifier accepts}]
\leq  &~ \sum |\alpha_{o,u,e,b}|^2 \cdot \max\Big\{s , 1\Big( f(\texttt{D}_P(e))=u  \Big) \Big\} \\
\leq  &~ \max\Big\{s , \sum |\alpha_{o,u,e,b}|^2 \cdot 1\Big( f(\texttt{D}_P(e))=u \Big)  \Big\} \;,
\end{align*}
and
\begin{align*}
\Pr[\text{verifier rejects}] \geq  - s + \sum |\alpha_{o,u,e,b}|^2 \cdot 1\Big( f(\texttt{D}_P(e))\neq u  \Big)  \;.
\end{align*}

\end{proof}

\section{Global Measurement Extraction}
\label{sec:gme}

In our protocol, the quantum proof is given as a qLTC encoding of a witness. The verifier first checks the correctness of the qLTC encoding. Ultimately, the witness used will be an amplified version of a witness for the QMA language instance provided as input. In the following, we present the general procedure for verification of the encoding, when applied to an arbitrary witness.

In the end, our goal is to measure observables of the form $X_f$ or $Z_f$ on the physical qubits. Since the procedures are very similar, we use the case of measuring $Z_f$ as an example. To perform this measurement, the verifier must extract measurement outcomes in the Z basis. This overall process is referred to as \emph{Global Measurement Extraction}.
The Global Measurement Extraction procedure consists of repeatedly applying the \emph{Single-Round Measurement Extraction} subroutine to successive portions of the qLTC-encoded state until all qubits are measured.

The Single-Round Measurement Extraction process works as follows: the verifier sends a small portion of the encoded witness, consisting of $m$ qubits, to the prover. The only requirement is that this portion must be much smaller than the recoverable distance of the code, i.e.\ an error of  weight at most $m$ can be successfully corrected by the code decoder.

For concreteness, consider the case where the verifier sends the first $m$ qubits.
The prover is then asked to measure these $m$ qubits in the Z basis, and to reply with
 the measurement results themselves.
The prover also retains a copy of the measurement outcomes.

Later in the protocol, the reported measurement outcomes on this portion of the qubits are used directly. Whenever the verifier wishes to access a measured qubit, he reads the corresponding reported bit; this is local, as each access touches only a single bit. The reported outcomes are kept consistent with an encoded witness through the local testability of the quantum code's classical part.

After receiving the prover’s message, the verifier checks the Z part of the qLTC code. Specifically, the verifier randomly samples a Z-type stabilizer. Since stabilizers are local, they act on only a small number of qubits. For qubits that the verifier holds, he can measure them directly. For qubits that the prover already measured in the Z-basis, the verifier reads the reported bits directly.

Note that the prover could introduce additional errors into the reported outcomes during the Single-Round Measurement Extraction. Because Z (phase) errors do not affect Z basis measurement outcomes, we only need to worry about X (bit-flip) errors. The Z-LTC test performed after receiving the prover’s message ensures that all X errors remain bounded to a small level.

The verifier then applies the Single-Round Measurement Extraction process to the next portion of the remaining qLTC-encoded witness. This process is repeated until all physical qubits have been measured in the Z-basis.

\subsection{Actions of the honest prover}

We now describe formally the basic procedure through which the prover extracts measurement results from a small portion of the quantum state.
This process, which we refer to as \emph{Single Round Measurement Extraction}, enables the prover to verifiably encode measurement outcomes for a small block of qubits. Even though in the protocol itself the prover is required to send back classical information, for purposes of modeling and analysis we will consider the actions of a purified prover, i.e.\ such that measurement outcomes are stored coherently in an outcome register. This register is later to be measured, as it is sent from prover to verifier.

Before we start, we introduce some notation that will be useful for presenting the prover’s protocol.

\begin{definition}[Controlled unitary]
Let $\mathsf{P}$ be a register and $U_\mathsf{P}$ a unitary acting on it. Let $\mathsf{R}$ be a one-qubit register.
The controlled unitary $C(U_\mathsf{P}, \mathsf{R})$ applies $U_\mathsf{P}$ if the qubit in $\mathsf{R}$ is $|1\rangle$:
\begin{align*}
    C(U_\mathsf{P}, \mathsf{R}) = \begin{pmatrix} I_\mathsf{P} & 0 \\ 0 & U_\mathsf{P} \end{pmatrix}\;.
\end{align*}
\end{definition}

\begin{definition}[Purified measurement]
Let $\mathsf{P}, \mathsf{R}$ be registers and $O_\mathsf{P}$ an observable decomposed as $O = \sum_{i=1}^r a_i P_i$, where each $P_i$ is an orthogonal projection and $I = \sum_i P_i$. The unitary that performs the purified measurement and stores the outcome in $\mathsf{R}$ is:
\begin{align*}
    P(O_\mathsf{P}, \mathsf{R}) = \sum_{i=1}^r P_{i, \mathsf{P}} \cdot (X(a_i))_\mathsf{R}\;.
\end{align*}
\end{definition}

The honest prover's actions in this procedure are defined below.

\begin{mdframed}

\begin{definition}[Single round measurement extraction, honest prover]
\label{def:lme:hp}
Let $W$ be a single-qubit observable.
The honest prover performs the following steps:
\begin{itemize}
\item The verifier sends the prover a quantum register $\mathsf{W}$ consisting of $m$ qubits.
\item The prover measures the observable $W$ $m$ times, on each of the qubits of $\mathsf{W}$, and stores two copies of each result, in single-qubit registers $\mathsf{X}_i$ and $\mathsf{X}'_i$ respectively.

Formally, the prover applies the following unitary:
\begin{align*}
\prod_{i\in[m]} P(W_{\mathsf{W}}(e_i), \mathsf{X}_i) P(W_{\mathsf{W}}(e_i), \mathsf{X}'_i)\;.
\end{align*}

\item Finally, the prover sends the verifier the quantum registers $\mathsf{WX}$.
\end{itemize}
\end{definition}

\end{mdframed}

We extend the single round measurement extraction procedure to a global measurement extraction procedure.
The global measurement extraction process involves repeating the single round measurement extraction across multiple disjoint portions of the qLTC encoded state.
We formally define the honest prover's actions for this procedure below.
\begin{mdframed}
\begin{definition}[Global measurement extraction, honest prover]
\label{def:gme:hp}
Recall the notations introduced in Section~\ref{sec:notation}.
The honest prover performs the following steps, for $i=1,\ldots,M/m$:
\begin{itemize}
    \item The prover executes the actions described in Definition~\ref{def:lme:hp}, using $W=W, \mathsf{X} = \mathsf{X}_i, \mathsf{X}' = \mathsf{X}'_i$.
\end{itemize}
\end{definition}
\end{mdframed}

\subsection{Actions of the verifier}

We now describe the verifier's actions in the single round measurement extraction process, corresponding to the honest prover's strategy previously defined.

\begin{mdframed}
\begin{definition}[Single round measurement extraction, verifier]
\label{def:lme:v}
Let $W$ be a single-qubit observable.
The verifier performs the following steps:
\begin{itemize}
\item The verifier sends the prover a quantum register $\mathsf{W}$ consisting of $m$ qubits.
\item The prover sends back the quantum registers $\mathsf{WX}$.
\end{itemize}
\end{definition}
\end{mdframed}

We now describe the subroutine used by the verifier to check the validity of a partially extracted quantum state against a qLTC code. This procedure ensures that the measurement results encoded in the prover's messages are consistent with the structure of the code.

\begin{mdframed}
\begin{definition}[Validity check, verifier]
\label{def:val:v}
Recall the notations introduced in Section~\ref{sec:notation}.
In addition, let $I \subseteq [M]$ such that, for every $i\in [M/m]$, either $((i - 1)\cdot m,\, i \cdot m] \subseteq I$ or $((i - 1)\cdot m,\, i \cdot m] \subseteq [M] \setminus I$ holds.

The verifier performs the following steps:
\begin{itemize}
\item The verifier samples queries $(j_1,\ldots,j_Q)\in [M]^Q$ for the code tester $\texttt{L}_W$, where $Q$ is the query complexity.
\item The verifier initializes a single-qubit register $\mathsf{T}'$ and a $Q$-qubit register $\mathsf{T}$ in state $\ket{0}$.
\item For each $k \in [Q]$, the verifier performs the following.

If $j_k \in I$, the verifier directly copies register $\mathsf{W}_{j_k}$ in register $\mathsf{T}_k$. Formally, the verifier applies $C(\mathsf{T}_k, \mathsf{W}_{j_k})$.

If $j_k \in [M] \backslash I$, the verifier ``extracts'' $\mathsf{W}_{j_k}$ from registers provided by the prover, as follows:
\begin{itemize}
\item Let $(p,q)$ be such that $j_k$ is the $q$-th element of the $p$-th block of $m$ qubits among $\mathsf{W}_1,\ldots,\mathsf{W}_M$.

\item The verifier copies register $\mathsf{X}_{p,q}$ in register $\mathsf{T}_k$. (Formally, the verifier applies $C(\mathsf{T}_k, \mathsf{X}_{p,q})$.)
\end{itemize}
\item The verifier rejects if the bits $(\mathsf{T}_1,\ldots,\mathsf{T}_Q)$ are rejected by the code tester $\texttt{L}$. Otherwise, it accepts.

Formally, the verifier applies the unitary $L_{\mathsf{T}'\mathsf{T}, r}$, where $L$ is the local tester operator as in Definition~\ref{def:tester},  measures $\mathsf{T}'$ and rejects if the outcome is $0$.
\end{itemize}
\end{definition}
\end{mdframed}

Before defining the full global measurement extraction procedure for the verifier, we recall that this process allows the verifier to sequentially extract measurement outcomes from different parts of the qLTC-encoded state, while enforcing local consistency conditions verified by the code tester $\texttt{L}_W$. We now describe the detailed steps of the verifier's actions.

\begin{mdframed}
\begin{definition}[Global measurement extraction, verifier]
\label{def:gme:v}
Recall the notations introduced in Section~\ref{sec:notation}.
 Let $ s \in \F_2^{n+1}$.
\begin{itemize}
\item If $ s_0 = 1$, the verifier first runs the local tester of the qLTC on the quantum state in register $\mathsf{W}$ and rejects if the test fails. More specifically, the verifier runs the procedure from Definition~\ref{def:val:v} with parameters $I = [M]$, $W = W$, $\mathsf{W} = \mathsf{W}$, and $\mathsf{X} = \mathsf{X}$, and rejects if this procedure rejects.
\item The verifier performs the following steps, repeating the actions described in Definition~\ref{def:lme:v} a total of $\rd=M/m$ times. In the $i$-th iteration:
\begin{itemize}
\item The verifier applies the procedure from Definition~\ref{def:lme:v} using $W=W, \mathsf{W}=\mathsf{W}_i$, and $\mathsf{X} = \mathsf{X}_i$.
\item If $ s_i = 1$, the verifier runs the procedure from Definition~\ref{def:val:v} with parameters $I = (i\cdot m,M]$, $W = W$, $\mathsf{W} = \mathsf{W}$, and $\mathsf{X} = \mathsf{X}$, and rejects if this procedure rejects.
\end{itemize}
\end{itemize}

\end{definition}
\end{mdframed}

\subsection{Main results}

\begin{lemma}[Implication of the validity test]
\label{lem:val}
Fix $W\in\{X,W\}$. Recall the notations introduced in Section~\ref{sec:notation}.
 Fix notation as in Definition~\ref{def:val:v}.
Let
 \begin{align*}
\ket{\phi}_{\mathsf{WXP}} = \sum_{p,x,w} \alpha_{p,x,w} \ket{p}_{\mathsf{P}} \ket{x}_{\mathsf{X}} (S\ket{w}_{\mathsf{W}})
\end{align*}
 be the quantum state of the system before the test defined in Definition \ref{def:val:v}.

For $x = (x_i)_{i \in [M/m]}$ and $w = (w_i)_{i \in [M/m]}$, define for each $i \in [M/m]$:
\begin{itemize}
\item If $((i - 1)\cdot m,\, i \cdot m] \subseteq I$, let $r_i = w_i$;
\item If $((i - 1)\cdot m,\, i \cdot m] \subseteq [M] \setminus I$, let $r_i = x_i$.
\end{itemize}

The rejection probability of the verifier from Definition~\ref{def:val:v} satisfies
 \begin{align*}
\Pr[\text{verifier rejects in the validity test}] \geq \kappa \cdot \sum |\alpha_{p,x,w}|^2 \cdot  \frac{\text{dist}(r, C_W)}{M} \;.
\end{align*}
\end{lemma}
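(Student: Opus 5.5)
The plan is to reduce the quantum statement to the classical guarantee of Lemma~\ref{lem:ltc_amp}, by observing that everything the validity test does is diagonal in the basis $\{\ket{p}_{\mathsf P}\ket{x}_{\mathsf X}S\ket{w}_{\mathsf W}\}$ in which $\ket{\phi}$ is expanded.

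\textbf{Step 1: the test is classical on each basis vector.} Since $S=S_W$ rotates the $W$ eigenbasis to the computational basis, the registers $\mathsf W_{j_k}$ with $j_k\in I$ hold, in this basis, the classical bits $w_{j_k}$. The controlled copies $C(\mathsf T_k,\cdot)$ therefore write into $\mathsf T_k$ either $w_{j_k}$ (if $j_k\in I$) or $x_{p,q}$ (if $j_k\notin I$). By the definition of $r$, this bit is exactly $r_{j_k}$. In the $X$ case the copy is understood in the $S$-rotated basis, as in the definition of the procedure.

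\textbf{Step 2: reduce to a mixture over basis vectors.} Conditioned on the verifier's randomness $(j_1,\ldots,j_Q)$, the tester operator $L_{\mathsf T'\mathsf T,r}$ followed by measuring $\mathsf T'$ implements a projector that is diagonal in this basis. Concretely, it is $\sum_{p,x,w}\texttt{T}(r(x,w);\cdot)\,\proj{p,x,S w}$, where $\texttt{T}$ is the tester's reject indicator. Hence the rejection probability equals $\sum_{p,x,w}|\alpha_{p,x,w}|^2$ times the probability that $\texttt{L}_W$ rejects the classical string $r=r(x,w)$.

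The cross terms vanish because the rejection event depends only on $(x,w)$, and distinct basis vectors are orthogonal. Formally, the copy unitaries map distinct $(p,x,w)$ to orthogonal states with $\mathsf T$ set to a deterministic value. One must check that the copy targets $\mathsf T$ are fresh and that the rejection bit is a function of $(x,w)$ only. Both are immediate from the definitions.

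\textbf{Step 3: apply the classical tester bound.} By Lemma~\ref{lem:ltc_amp}, applied with the parameters fixed in Notation~\ref{not:codes}, $\texttt{L}_W$ rejects $r$ with probability at least
\[
\min\{\kappa\,\mathrm{dist}(r,C_W)/M,\;1-\varepsilon\}.
\]
This is at least $\kappa\,\mathrm{dist}(r,C_W)/M$ whenever that quantity is at most $1-\varepsilon$.

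\textbf{Main obstacle: removing the minimum.} If $\kappa\,\mathrm{dist}(r,C_W)/M\le 1-\varepsilon$, the bound in Step 3 already gives what we need. In the other case I would handle it in one of two ways. Either absorb a constant factor, using $1-\varepsilon\ge (1-\varepsilon)\kappa\,\mathrm{dist}/M$ once $\kappa\le 1$ with normalized distance, and shrink $\kappa$ accordingly. Or choose $\kappa_W$ in Notation~\ref{not:codes} so that $\kappa\cdot\mathrm{dist}(r,C_W)/M\le 1-\varepsilon$ always holds; this works because $\mathrm{dist}(r,C_W)\le M$, so it suffices to have $\kappa\le 1-\varepsilon$.

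With either fix, summing over $(p,x,w)$ with weights $|\alpha_{p,x,w}|^2$ gives the claim.
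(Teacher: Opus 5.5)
Your proof is the paper's proof: on each basis vector the queries read exactly the coordinates of $r$, the test is diagonal in that basis, so the rejection probability is the $|\alpha_{p,x,w}|^2$-weighted average of the classical tester's rejection probability on $r$, and Lemma~\ref{lem:ltc_amp} finishes. You are more careful than the paper, which writes $\Pr[\text{accept}]\le 1-\kappa\,\mathrm{dist}(r,C_W)/M$ directly and drops the $\min\{\cdot,1-\varepsilon\}$ from Lemma~\ref{lem:ltc_amp}; you are right that this needs $\kappa\le 1-\varepsilon$ (or the $\min$ kept), but restricting $\kappa$ is not a free normalization, since the paper later takes $\kappa=\poly\log(n)$ in~\eqref{eq:energy-6}, where the inequality as stated fails for any basis vector with $\mathrm{dist}(r,C_W)>M/\kappa$ (its right-hand side exceeds $1$), so only the $\min$ form holds where the lemma is actually used.
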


\begin{proof}

Fix a query $j_k$, and let $p = \lceil j_k/m \rceil$ and $q = j_k - m(p - 1)$. If $j_k \in I$ the verifier copies the witness bit $w_{j_k}$ into $\mathsf{T}_k$; if $j_k \in [M]\setminus I$ it copies the reported bit $x_{p,q}$ into $\mathsf{T}_k$ (Definition~\ref{def:val:v}). On every query the tester therefore reads the corresponding coordinate of the decoded string $r$, so the verifier accepts exactly when the local tester accepts $r$:
\begin{align*}
\Pr[\text{verifier accepts } \ket{p,x}S\ket{w}]
= &~ \Pr[\text{local tester of
$C_W$ accepts } r] \\
\leq &~ 1 - \kappa \cdot \frac{\text{dist}(r, C_W)}{M}\;,
\end{align*}
where the last step is Lemma~\ref{lem:ltc_amp}.
Therefore,
\begin{align*}
\Pr[\text{verifier rejects}]
= &~ \sum |\alpha_{p,x,w}|^2 \cdot \Pr[\text{verifier rejects } \ket{p,x}S\ket{w}]\\
\geq &~ \kappa \cdot \sum |\alpha_{p,x,w}|^2 \cdot \frac{\text{dist}(r, C_W)}{M}\;.
\end{align*} 

\end{proof}

\begin{lemma}[Implication of the single round measurement extraction]
\label{lem:lme}
Recall the notations introduced in Section~\ref{sec:notation}.
 Fix notation as in Definition~\ref{def:gme:v}.
Let
 \begin{align*}
\ket{\phi} = \sum_{p,x,w} \alpha_{p,x,w} \ket{p}_{\mathsf{P}} \ket{x}_{\mathsf{X}} S\ket{w}_{\mathsf{W}}\;,\quad\ket{\phi'} = \sum_{p,x,w} \alpha'_{p,x,w} \ket{p}_{\mathsf{P}} \ket{x}_{\mathsf{X}} S\ket{w}_{\mathsf{W}}
\end{align*}
be the quantum state of the system respectively before and after the $i$th iteration of the single round measurement extraction in the global measurement extraction process introduced in Definition~\ref{def:gme:v}.

Let $U_i$ be the global unitary that corresponds to the verifier and the prover's joint actions during the $i$th round of the interaction.
For $x = (x_i)_{i \in [M/m]}$ and $w = (w_i)_{i \in [M/m]}$, define for each $j \in [M/m]$:
\begin{itemize}
\item If $ j \geq i $, let $r_j = w_j$;
\item If $ j < i $, let $r_j = x_j$.
\end{itemize}
Then,
 \begin{align*}
&~\| U_{i} Q_{i-1} (\ket{0}_{\mathsf{T}} \ket{\phi}) - Q_{i} U_{i} (\ket{0}_{\mathsf{T}} \ket{\phi})\|^2 \\
\leq&~ \Theta(1) \cdot \Big(\sum_{\text{dist}(r, C_W) \geq \eta_q d_{q,W} - m } |\alpha_{p,x,w}|^2 + \sum_{\text{dist}(r, C_W) \geq \eta_q d_{q,W} - m } |\alpha'_{p,x,w}|^2\Big)\; .
\end{align*}
\end{lemma}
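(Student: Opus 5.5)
The plan is to show that on ``good'' branches, where the hybrid string $r$ is close to $C_W$, the two operators $Q_{i-1}$ and $Q_i$ compute the same logical value. On such branches the ordering of $U_i$ and the $Q$'s then does not matter. The remaining ``bad'' branches are charged to the two sums on the right-hand side.

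\textbf{Step 1: $Q_{i-1}$ and $Q_i$ commute with $U_i$ up to relabeling.} Recall from \eqref{eq:def-qi} that $Q_{i-1}$ is diagonal in the basis $\ket{x}_{\mathsf X}S\ket{w}_{\mathsf W}$. It applies the phase $(-1)^{\tilde f(\tD_{p,W,i-1}(x,w))}$, and $\tD_{p,W,i-1}$ reads the hybrid string with $x_1,\dots,x_{i-1}$ and $w_i,\dots,w_{M/m}$; this is exactly the string $r$ of the lemma. During round $i$ the unitary $U_i$ leaves $\mathsf W_{i+1},\dots,\mathsf W_{M/m}$ and $\mathsf X_1,\dots,\mathsf X_{i-1}$ untouched. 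It acts only on $\mathsf W_i$, $\mathsf X_i$ and $\mathsf P$; the prover may act arbitrarily on these, and the verifier keeps no access to $\mathsf W_i$. Consequently $U_iQ_{i-1}\ket{0}\ket{\phi}$ equals $U_i$ applied to the state with phases $(-1)^{\tilde f(\tD_W(r))}$. Likewise $Q_iU_i\ket{0}\ket{\phi}$ puts the phases $(-1)^{\tilde f(\tD_W(r'))}$ on $\ket{\phi'}$, where $r'$ is formed from $x_1,\dots,x_i,w_{i+1},\dots$. Since the $\mathsf T$ register is uncomputed by the $D^\dagger$ in $Q$, it returns to $\ket0$ and can be ignored.

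\textbf{Step 2: compare both sides with a common ``ideal'' value.} For any string $r$ with $\dist(r,C_W)<\eta_q d_{q,W}-m$, write $r=E_Wv+p$ with $|p|<\eta_q d_{q,W}$. Lemma~\ref{lem:q_noise} then gives $KE_W\tD_W(r)=KE_Wv$. Any string obtained from $r$ by modifying one block of $m$ bits is still within $\eta_q d_{q,W}$ of the same codeword, so it decodes to the same logical value $KE_Wv$ (applying Lemma~\ref{lem:q_noise_s} where codewords differ). Call this value $g(\mathrm{rest})$. It depends only on the registers outside block $i$, namely $x_{<i}$ and $w_{>i}$, and not on block $i$ itself.

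Now define the operator $R$ applying the phase $(-1)^{f(g(\cdot))}$ on good branches. Because $R$ depends only on registers that $U_i$ does not touch, $U_iR=RU_i$. The triangle inequality then gives
\[
\|U_iQ_{i-1}\phi-Q_iU_i\phi\|\le \|(Q_{i-1}-R)\phi\|+\|(R-Q_i)\phi'\|.
\]
The first term vanishes on branches of $\phi$ with $\dist(r,C_W)<\eta_qd_{q,W}-m$. The second vanishes on branches of $\phi'$ with $\dist(r',C_W)<\eta_qd_{q,W}-m$, after a symmetric argument viewing $r'$ as a one-block modification of a good string. Each nonvanishing branch contributes at most $4|\alpha|^2$ or $4|\alpha'|^2$, and squaring with $(a+b)^2\le 2a^2+2b^2$ gives the $\Theta(1)$ constant.

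\textbf{Main obstacle.} The delicate point is to make $R$ well defined and commuting with $U_i$. $R$ must be specified as a function only of $(x_{<i},w_{>i})$, for instance by ``the logical value of any good completion of block $i$'', with an arbitrary value if no completion is good. The argument then has to check that on good branches of both $\phi$ and $\phi'$ this value coincides with the actual decoded value. That check uses the slack of $m$ in the threshold, together with $\eta_q d_{q,W}\le d_{q,W}$ via Fact~\ref{fact:decoder}.
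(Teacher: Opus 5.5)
Your proof is correct and is essentially the paper's argument. Both pass through an intermediate phase operator that ignores block $i$, and hence commutes with $U_i$. By the $m$-bit slack and Lemma~\ref{lem:q_noise}, it agrees with $Q_{i-1}$ on the good branches of $\ket{\phi}$ and with $Q_i$ on the good branches of $\ket{\phi'}$. The paper realizes this operator as $Q'_i=Q'_{i-1}$ by substituting fresh all-zero registers $\mathsf{X}_0,\mathsf{W}_0$ for block $i$, so it is well defined for free. Your ``any good completion'' definition of $R$ instead needs the consistency check you flagged, which Lemma~\ref{lem:q_noise} supplies. The paper also routes the triangle inequality through the truncated states $\ket{\beta},\ket{\beta'}$ rather than your direct bound $\|(Q_{i-1}-R)\ket{\phi}\|+\|(R-Q_i)\ket{\phi'}\|$.
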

\begin{proof}
Let
 \begin{align*}
\ket{\beta} = \sum_{ \text{dist}(r, C_W) < \eta_q d_{q,W} - m } \alpha_{p,x,w} \ket{p}_{\mathsf{P}} \ket{x}_{\mathsf{X}} (S\ket{w}_{\mathsf{W}})
\end{align*}
and
\begin{align*}
\ket{\beta'} = \sum_{ \text{dist}(r, C_W) < \eta_q d_{q,W} - m }
\alpha'_{p,x,w} \ket{p}_{\mathsf{P}} \ket{x}_{\mathsf{X}} (S\ket{w}_{\mathsf{W}})\;.
\end{align*}
 Let $\mathsf{X} = (\mathsf{X}_i)_{i\in [M/m]}$ and $\mathsf{W} = (\mathsf{W}_i)_{i\in [M/m]}$. Let $ \mathsf{W}_0 $ be a quantum register of $ m $ qubits. Let $ \mathsf{X}_0 $ be a quantum register of $ m $ qubits. Let
\begin{align*}
\mathsf{X}' &= (\mathsf{X}_1,\ldots, \mathsf{X}_{i-1}, \mathsf{X}_0,\mathsf{X}_{i+1},\ldots, \mathsf{X}_{M/m}),\\
\mathsf{W}' &= (\mathsf{W}_1,\ldots, \mathsf{W}_{i-1}, \mathsf{W}_0,\mathsf{W}_{i+1},\ldots, \mathsf{W}_{M/m}).
\end{align*}
Recall the partial composite decoding operator $D_{p,W,i}$ from Definition~\ref{def:decode:comp_p}, and let
 \begin{align*}
Q'_i=S_{\mathsf{W'}}^\dagger D_{p,W,i,\mathsf{TX'W'}}^\dagger Z_{\Tilde{f},\mathsf{T}}D_{p,W,i,\mathsf{TX'W'}}S_{\mathsf{W'}}\;,
\end{align*}
and
\begin{align*}
Q'_{i-1}=S_{\mathsf{W'}}^\dagger D_{p,W,i-1,\mathsf{TX'W'}}^\dagger Z_{\Tilde{f},\mathsf{T}} D_{p,W,i-1,\mathsf{TX'W'}}S_{\mathsf{W'}}\;.
\end{align*}
 By definition, for any quantum state $ \ket{\psi} $ on registers $\mathsf{X'}_{-i}\mathsf{W'}_{-i}$:
\begin{align*}
Q'_{i-1} (\ket{0}_{\mathsf{TX_0W_0}} \ket{\psi}) = Q'_{i} (\ket{0}_{\mathsf{TX_0W_0}} \ket{\psi})\;.
\end{align*}
This is because $D_{p,W,i-1,\mathsf{TX'W'}}$ in $Q'_{i-1}$ uses the decoding of the first $i-1$ blocks of $\mathsf{X'}$ and the last $M/m - i + 1$ blocks of $\mathsf{W'}$, while $D_{p,W,i,\mathsf{TX'W'}}$ in $Q'_i$ uses the decoding of the first $i$ blocks of $\mathsf{X'}$ and the last $M/m - i$ blocks of $\mathsf{W'}$. However, both the $i$th block of $\mathsf{X'}$ and the $i$th block of $\mathsf{W'}$ are set to $\ket{0}$, so the two operators $Q'_{i-1}$ and $Q'_i$ act identically.

Note that
 \begin{align*}
&~Q_{i-1} (\ket{0}_{\mathsf{TX_0W_0}} \ket{\beta}) \\
=&~Q'_{i-1} (\ket{0}_{\mathsf{TX_0W_0}} \ket{\beta}) .
\end{align*}
This follows since $\ket{\beta}$ is supported on $r$ such that $ \text{dist}(r, C_W) < \eta_q d_{q,W} - m $ and Lemma \ref{lem:q_noise}.
More precisely, the difference between $Q_{i-1}$ and $Q'_{i-1}$ is that the former uses the $i$th block of $\mathsf{W}$, while the latter replaces that block with $\ket{0}$. Decompose the state in the computational basis; then, applying the decoding to the first $i-1$ blocks of $\mathsf{X}$ (respectively $\mathsf{X'}$) and combining the last $M/m - i + 1$ blocks of $\mathsf{W}$ (respectively $\mathsf{W'}$) gives us a string $r$ (respectively $r'$). For an honest prover, $r$ is a codeword of $C_W$, and $r'$ is obtained from $r$ by setting the $i$th block to $0$.
As long as the two decoded strings from $r$ and $r'$ are equal,
or their respective differences from the same codeword both lie within the tolerance required by Lemma~\ref{lem:q_noise},
the actions of $Q_{i-1}$ and $Q'_{i-1}$ will be the same (since we only consider $\Tilde{f}(x)$ of the form $f(K E_W x )$). Indeed, $\mathrm{dist}(r, C_W) < \eta_q d_{q,W} - m$ implies the existence of $s \in C_W$ such that $\mathrm{dist}(r, s) < \eta_q d_{q,W} - m \leq \eta_q d_{q,W}$, and hence $\mathrm{dist}(r', s) < \eta_q d_{q,W}$.

Therefore,
 \begin{align*}
&~U_{i} Q_{i-1} (\ket{0}_{\mathsf{TX_0W_0}} \ket{\beta}) \\
=&~U_{i} Q'_{i-1} (\ket{0}_{\mathsf{TX_0W_0}} \ket{\beta}) \\
=&~ U_{i}  Q'_{i} (\ket{0}_{\mathsf{TX_0W_0}} \ket{\beta}) \\
= &~ Q'_{i} U_{i} (\ket{0}_{\mathsf{TX_0W_0}} \ket{\beta})\;,
\end{align*}

Furthermore,
 \begin{align*}
 Q'_{i} (\ket{0}_{\mathsf{TX_0W_0}} \ket{\beta'})=  Q_{i}(\ket{0}_{\mathsf{TX_0W_0}} \ket{\beta'})\;,
\end{align*}

Moreover,
 \begin{align*}
\| \ket{\beta}- \ket{\phi}\|^2 \leq \sum_{\text{dist}(r, C_W) \geq  \eta_q d_{q,W} - m } |\alpha_{p,x,w}|^2 \;,
\end{align*}
and
\begin{align*}
\| \ket{\beta'}- \ket{\phi'}\|^2 \leq \sum_{\text{dist}(r, C_W) \geq  \eta_q d_{q,W} - m } |\alpha'_{p,x,w}|^2 \;.
\end{align*}
Note that
\begin{align*}
&~\| U_{i}  \ket{\beta} - \ket{\beta'}\|^2 \\
\leq &~ \Theta(1)\cdot (
\| U_{i}  \ket{\beta} - U_{i}  \ket{\phi}\|^2 +
\| \ket{\phi'} - \ket{\beta'}\|^2
)\\
=&~ \Theta(1)\cdot (
\|  \ket{\beta} -  \ket{\phi}\|^2 +
\| \ket{\phi'} - \ket{\beta'}\|^2
)\;.
\end{align*}
Therefore,
 \begin{align*}
&~ \|U_{i} Q_{i-1} (\ket{0}_{\mathsf{TX_0W_0}} \ket{\phi}) - Q_{i} U_{i} (\ket{0}_{\mathsf{TX_0W_0}} \ket{\phi})\|^2\\
\leq &~\Theta(1)\cdot (
\|   \ket{\beta} -   \ket{\phi}\|^2 +
\| \ket{\phi'} - \ket{\beta'}\|^2
)\\
\leq &~\Theta(1)\cdot (\sum_{\text{dist}(r, C_W) \geq  \eta_q d_{q,W} - m } |\alpha_{p,x,w}|^2+\sum_{\text{dist}(r, C_W) \geq  \eta_q d_{q,W} - m } |\alpha'_{p,x,w}|^2
)\;.
\end{align*}
\end{proof}

\begin{lemma}[Implication of the global measurement extraction]
\label{lem:gme}
Use the same notation as in Lemma~\ref{lem:lme}, except that now
 \begin{align*}
\ket{\phi}_{\mathsf{WXP}} = \sum_{p,x,w} \alpha_{p,x,w} \ket{p}_{\mathsf{P}} \ket{x}_{\mathsf{X}} (S\ket{w}_{\mathsf{W}})
\end{align*}
is the quantum state of the system before the global measurement
extraction process. In particular, let $Q_i$, for $i\in\{0,\ldots,n\}$, be as defined in~\eqref{eq:def-qi}, for some $W\in\{X,Z\}$.
Suppose that for each $i$,
\begin{align*}
\Pr[\text{verifier rejects in the validity test in the $i$th iteration}] \leq \varepsilon_i\;.
\end{align*}
Suppose that during the interaction, the verifier and prover together apply the unitary $U_b$ to the entire system. Then:
 \begin{align*}
\| U_b Q_{0} (\ket{0}_{\mathsf{T}} \ket{\phi}) - Q_{n} U_b (\ket{0}_{\mathsf{T}} \ket{\phi})\|^2 \leq {\Theta\Big(\frac{\rd^2\cdot \E_i\varepsilon_i\cdot M}{\kappa\cdot (\eta_q d_{q,W} - m)}\Big)}\; ,
\end{align*}
where $ \rd = M/m$.
\end{lemma}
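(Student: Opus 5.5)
Write $U_b = U_{\rd}\cdots U_1$, where $U_i$ is the joint action in the $i$th round, and let $\ket{\phi^{(i)}} = U_i\cdots U_1(\ket{0}_{\mathsf{T}}\ket{\phi})$, with $\ket{\phi^{(0)}}=\ket{0}_{\mathsf{T}}\ket{\phi}$. The plan is a hybrid argument that telescopes over the rounds, interleaving $Q_i$ between $U_i$ and $U_{i+1}$. Since every $U_j$ is unitary (and commutes with nothing we need), the triangle inequality gives
\[
\| U_b Q_0 \ket{\phi^{(0)}} - Q_{\rd} U_b\ket{\phi^{(0)}}\| \le \sum_{i=1}^{\rd} \| U_i Q_{i-1}\ket{\phi^{(i-1)}} - Q_i U_i \ket{\phi^{(i-1)}}\|\;.
\]
This holds because each hybrid $U_{\rd}\cdots U_{i+1} Q_i U_i\cdots U_1\ket{\phi^{(0)}}$ differs from the next only in the $i$th factor, and the outer unitaries preserve norms. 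The validity tests can be treated as measurements that are deferred or purified into $U_i$; they do not affect the norm estimates. Squaring and applying Cauchy--Schwarz gives a factor of $\rd$:
\[
\|\cdot\|^2 \le \rd \sum_{i} \|U_i Q_{i-1}\ket{\phi^{(i-1)}} - Q_iU_i\ket{\phi^{(i-1)}}\|^2\;.
\]

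Each summand is bounded by Lemma~\ref{lem:lme}, applied with $\ket{\phi^{(i-1)}}$ as the state before round $i$ and $\ket{\phi^{(i)}}$ as the state after. The bound is $\Theta(1)$ times the squared weight of $\ket{\phi^{(i-1)}}$ and of $\ket{\phi^{(i)}}$ on strings $r$ with $\mathrm{dist}(r,C_W)\ge \eta_q d_{q,W}-m$, where $r$ is the hybrid string consisting of reported blocks $x_j$ for $j<i$ (respectively $j\le i$) and witness blocks $w_j$ for the rest.

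To control these tail weights, I would use Lemma~\ref{lem:val}. The validity test performed after round $i$ uses $I=(im,M]$, so the string $r$ it inspects is exactly the hybrid string associated with $\ket{\phi^{(i)}}$. Combining Lemma~\ref{lem:val} with Markov's inequality gives
\[
\sum_{\mathrm{dist}(r,C_W)\ge \eta_q d_{q,W}-m}|\alpha^{(i)}|^2 \le \frac{M}{\kappa(\eta_q d_{q,W}-m)}\cdot \E\Big[\kappa\,\tfrac{\mathrm{dist}(r,C_W)}{M}\Big] \le \frac{M\,\varepsilon_i}{\kappa(\eta_q d_{q,W}-m)}\;.
\]
The test with $s_0$ and $I=[M]$ handles $\ket{\phi^{(0)}}$ in the same way. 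The test in round $i$ is applied to the state as it stands then, so in the non-rejecting branch it leaves the relevant amplitudes essentially unchanged; this is the point that needs care in the formalization. Summing over $i$ bounds $\sum_i(\cdots)$ by $O(\rd\,\E_i\varepsilon_i\, M/(\kappa(\eta_q d_{q,W}-m)))$, since $\sum_i\varepsilon_i=\rd\,\E_i\varepsilon_i$. Multiplying by the Cauchy--Schwarz factor $\rd$ yields the claimed $\Theta\big(\rd^2\E_i\varepsilon_i M/(\kappa(\eta_q d_{q,W}-m))\big)$.

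The main obstacle is matching the state to which each validity test is applied with the states $\ket{\phi^{(i-1)}},\ket{\phi^{(i)}}$ appearing in Lemma~\ref{lem:lme}. The tests are interleaved with the rounds and chosen by the random string $s$, so one must argue that the rejection probabilities $\varepsilon_i$ refer to the unperturbed hybrid states. The first thing to try is to purify the verifier's test into an ancilla that is never touched afterwards. Then the test commutes with the subsequent $U_j$, and its outcome statistics are computed on $\ket{\phi^{(i)}}$ itself. Any reduced weight on the rejecting branch only helps.
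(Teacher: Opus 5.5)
Your proposal is correct and follows essentially the same route as the paper. The paper also uses a telescoping hybrid over the $\rd$ rounds, bounds each term by Lemma~\ref{lem:lme}, and bounds the tail weights via Lemma~\ref{lem:val} by $M\varepsilon_i/(\kappa(\eta_q d_{q,W}-m))$. Its $\rd^2$ factor comes from the same Cauchy--Schwarz step, which it uses implicitly when bounding $\sum_k\sqrt{\cdot}$ by $\rd\sqrt{\E_i(\cdot)}$. Your indexing, with the test in iteration $i$ (using $I=(im,M]$) controlling the post-round state and the $s_0$ test controlling the initial state, is if anything more careful than the paper's. The concern in your last paragraph goes away because the verifier runs at most one validity test per execution, so each $\varepsilon_i$ is already computed on the unperturbed state $\ket{\phi^{(i)}}$.
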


\begin{proof}
Let $\ket{\phi_k}_{\mathsf{WXP}}$ be the quantum state of the system after the $k$th iteration of the single round measurement extraction in the global measurement extraction process.
Let the state be written as
 \begin{align*}
\ket{\phi_k} = \sum_{p,x,w} \alpha_{k,p,x,w} \ket{p}_{\mathsf{P}} \ket{x}_{\mathsf{X}} (S\ket{w}_{\mathsf{W}})\;.
\end{align*}
For $x = (x_i)_{i \in [M/m]}$ and $w = (w_i)_{i \in [M/m]}$, define for each $j \in [M/m]$:
\begin{itemize}
\item If $ j \geq i $, let $r_j = w_j$;
\item If $ j < i $, let $r_j = x_j$.
\end{itemize}
By Lemma \ref{lem:val},
 \begin{align*}
&~\Pr[\text{verifier rejects in the validity test in the $k$th iteration}] \\
\geq&~ \kappa \cdot \sum |\alpha_{k-1,p,x,w}|^2 \cdot  \frac{\text{dist}(r, C_W)}{M}  \\
\geq&~ \kappa \cdot \sum_{\text{dist}(r, C_W) \geq \eta_q d_{q,W} - m } |\alpha_{k-1,p,x,w}|^2 \cdot  \frac{\eta_q d_{q,W}-m}{M}  .
\end{align*}
Therefore,
 \begin{align}
\sum_{\text{dist}(r, C_W) \geq \eta_q d_{q,W} - m } |\alpha_{k-1,p,x,w}|^2 \leq \Theta\Big(\frac{\varepsilon_i\cdot M}{\kappa\cdot (\eta_q d_{q,W} - m)}\Big)\;.\label{eq:lmea}
\end{align}
By Lemma \ref{lem:lme},
 \begin{align*}
&~\| U_{k} Q_{k-1} (\ket{0}_{\mathsf{T}} \ket{\phi_{k-1}}) - Q_{k} U_{k} (\ket{0}_{\mathsf{T}} \ket{\phi_{k-1}})\|^2 \\
\leq&~ \Theta(1) \cdot \Big(\sum_{\text{dist}(r, C_W) \geq \eta_q d_{q,W} - m } |\alpha_{k-1,p,x,w}|^2 + \sum_{\text{dist}(r_k, C_W) \geq \eta_q d_{q,W} - m } |\alpha_{k,p,x,w} |^2\Big)\; .
\end{align*}
Using the triangle inequality and plugging in~\eqref{eq:lmea},
 \begin{align*}
&~\| U_b Q_{0} (\ket{0}_{\mathsf{T}} \ket{\phi}) - Q_{n} U_b (\ket{0}_{\mathsf{T}} \ket{\phi})\| \\
\leq &~ \sum_{k\in[M/m]} \| U_{k} Q_{k-1} (\ket{0}_{\mathsf{T}} \ket{\phi_{k-1}}) - Q_{k} U_{k} (\ket{0}_{\mathsf{T}} \ket{\phi_{k-1}})\| \\
\leq &~  r\cdot \sqrt{\Theta\Big(\frac{\E_i \varepsilon_i\cdot M}{\kappa\cdot (\eta_q d_{q,W} - m)}\Big)} .
\end{align*}
Squaring gives the desired result:
 \begin{align*}
\| U_b Q_{0} (\ket{0}_{\mathsf{T}} \ket{\phi}) - Q_{n} U_b (\ket{0}_{\mathsf{T}} \ket{\phi})\|^2\leq {\Theta(\frac{r^2\cdot \E_i\varepsilon_i\cdot M}{\kappa\cdot (\eta_q d_{q,W} - m)})} .
\end{align*}

\end{proof}

\section{Measurement in the Computational or Hadamard Basis}
\label{sec:measurement}
To measure observables of the form $X_f$ or $Z_f$ on the physical qubits, the procedures are very similar. We continue with the $Z_f$ case as an example. After completing the Global Measurement Extraction process described in the previous section, the verifier has all the necessary information to evaluate any measurement in the $Z$ basis. To measure $Z_f$, the verifier sends the description of $f$ to the prover. The prover then uses the stored measurement outcomes from each portion and provides a PCPP proof demonstrating that the reported outcomes, when evaluated by the function $f$, yield the claimed value.

\subsection{Actions of the honest prover}

We introduce a basic quantum operator that computes a function on the string stored in a quantum register.
\begin{definition}[Function evaluation operator]
\label{def:func_o}
Let $M \in \mathbb{Z}_+$ and $f : \{0, 1\}^{M} \rightarrow \{0, 1\}$.
For $a \in \mathbb{F}_2$ and $w \in \mathbb{F}_2^{M}$, the evaluation operator $R_f$ acts as:
\begin{align*}
R_f(\ket{a} \ket{w}) = \ket{a \oplus f(w)} \ket{w},
\end{align*}
\end{definition}

We now describe how the honest prover performs measurements in either the computational basis (Z basis) or the Hadamard basis (X basis), depending on the verifier's instruction.
This procedure builds on the previously defined global measurement extraction process.
\begin{mdframed}
\begin{definition}[Measurement in the computational or Hadamard basis, honest prover]
\label{def:meas_xz:hp}
Recall the notations introduced in Section~\ref{sec:notation}.

The honest prover performs the following steps:
\begin{itemize}
\item The verifier sends the prover the description of a classical function $h : \{0, 1\}^{M} \rightarrow \{0, 1\}$.

\item The prover applies the unitary $R_{h, \mathsf{IX}'}$.
\item The prover runs the procedure in Definition \ref{def:pcpp:hp} with $C=C_W$, verification circuit $f\circ K\circ E_W$, $\mathsf{U} = \mathsf{I}$, $\mathsf{E} = \mathsf{X'}$, $\mathsf{B} = \mathsf{A}$.
\item The prover sends the verifier the quantum register $\mathsf{I}$.

\end{itemize}
\end{definition}
\end{mdframed}
\begin{remark}
The function $h$ will be specified as $f(KE_W\texttt{D}_W(\cdot))$ in a later section, but we present a general protocol here for clarity.
\end{remark}

\subsection{Actions of the verifier}

In the following, we describe the verifier's procedure for performing a global measurement in either the computational ($Z$) basis or the Hadamard ($X$) basis.
Since the measurement outcomes have already been extracted, the protocol remains the same for both bases.
However, the evaluation function $h$ differs between the two cases, even for the same function $f$.
This process can be viewed as a verification of function evaluation using a PCPP.

\begin{mdframed}
\begin{definition}[Measurement in the computational or Hadamard basis, verifier]
\label{def:meas_xz:v}
Recall the notations introduced in Section~\ref{sec:notation}
 and Definition \ref{def:meas_xz:hp}. Let $ s\in \F_2$.

The verifier performs the following steps:
\begin{itemize}
\item The verifier sends the prover the description of a classical function $h : \{0, 1\}^{M} \rightarrow \{0, 1\}$.
\item The prover sends back to the verifier the quantum registers $\mathsf{IA}$.

\item If $ s=1$, the verifier runs the procedure in Definition \ref{def:pcpp:v} with $C=C_W$, verification circuit $f\circ K\circ E_W$, $\mathsf{U} = \mathsf{I}$, $\mathsf{E} = \mathsf{X'}$, $\mathsf{B} = \mathsf{A}$. The verifier accepts if and only if this procedure accepts.
\end{itemize}
\end{definition}
\end{mdframed}

\subsection{Main results}

\begin{lemma}[Implication of the PCPP check in the measurement procedure]
\label{lem:m_zx:pcpp}
Recall the notations introduced in Section~\ref{sec:notation} and Definition \ref{def:meas_xz:hp}.
Let $\mathsf{O}=(\mathsf{P}, \mathsf{W})$ be a quantum register.
Let $\ket{\phi}_{\mathsf{OIXA}}$ be the quantum state of the system after the measurement in the computational or Hadamard basis process.
Let the state be written as
 \begin{align*}
\ket{\phi} = \sum_{o,i,x,a} \alpha_{o,i,x,a}
\ket{o}_{\mathsf{O}}
\ket{i}_{\mathsf{I}} \ket{x}_{\mathsf{X}}
\ket{a}_{\mathsf{A}}.
\end{align*}

Assume that $\delta M < \eta_W d_{q,W}$.
Then the rejection probability of the verifier is lower bounded by
 \begin{align*}
\Pr[\text{verifier rejects}] \geq  - s + \sum |\alpha_{o,i,x,a}|^2 \cdot 1\Big(h(x) \neq i\Big) .
\end{align*}
\end{lemma}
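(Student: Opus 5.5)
The plan is to reduce the statement directly to Lemma~\ref{lem:pcpp}, instantiated with the parameters of Definition~\ref{def:meas_xz:v}. Concretely, take the classical code $C_P=C_W$ with decoder $\texttt{D}_P=\texttt{D}_W$ and tester $\texttt{T}_P$ the code tester $\texttt{L}_W$. Take the verification function to be the map $f\circ K\circ E_W$, so that $f(\texttt{D}_P(e))$ in Lemma~\ref{lem:pcpp} becomes $f(KE_W\texttt{D}_W(e))=h(e)$. Identify $\mathsf{U}=\mathsf{I}$, $\mathsf{E}=\mathsf{X}'$ (which the verifier reads as the reported outcomes $x$), $\mathsf{B}=\mathsf{A}$, and let $\mathsf{O}$ absorb all other registers, including $\mathsf{P},\mathsf{W}$. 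Writing the state in the computational basis of $\mathsf{I},\mathsf{X},\mathsf{A}$, the second bullet of Lemma~\ref{lem:pcpp} then yields exactly
\[
\Pr[\text{reject}]\ \ge\ -s+\sum|\alpha_{o,i,x,a}|^2\, 1\big(h(x)\neq i\big),
\]
provided its hypothesis holds.

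The substantive step is verifying that hypothesis. For every target output $u$ and every $e$ with $\dist(e,S_u)\le \delta$ (distance measured as $\delta M$ in Hamming weight), we need $f(KE_W\texttt{D}_W(e))=u$. I would argue as follows. Pick $e'\in S_u$ with $|e+e'|\le \delta M$. Since $e'$ passes the tester and the tester has perfect completeness only on codewords, $e'\in C_W$; I will interpret $\texttt{T}_P(e')=1$ as membership $H_We'=0$, which is what the circuit $g$ checks. So $e'=E_Wv$ for some $v$, with $\texttt{D}_W(e')=v$ and $f(KE_Wv)=u$.

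Now write $e=E_Wv+p$ with $|p|\le\delta M<\eta_W d_{q,W}$, using the assumption of the lemma. Lemma~\ref{lem:q_noise} then gives $KE_W\texttt{D}_W(p+E_Wv)=KE_Wv$. Hence $f(KE_W\texttt{D}_W(e))=f(KE_Wv)=u$, as required.

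The main obstacle I anticipate is bookkeeping rather than mathematics. First, the register $\mathsf{X}$ versus $\mathsf{X}'$ in the state decomposition has to be matched so that the string being checked is the one indexed by $x$; for the honest-copy structure I would simply identify the two. Second, the tester inside $g$ has to be read as an exact codeword-membership check, so that elements of $S_u$ are genuine codewords; if instead only local-test acceptance were meant, I would replace the tester in $g$ by the full parity check $H_W e=0$, which is efficiently computable and so is admissible in the circuit. Third, the normalization of distance must be consistent: relative $\delta$ corresponds to absolute $\delta M$, which is why the hypothesis $\delta M<\eta_W d_{q,W}$ is exactly what Lemma~\ref{lem:q_noise} needs.
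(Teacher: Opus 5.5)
Your proposal is correct and matches the paper's proof. Both instantiate Lemma~\ref{lem:pcpp} with $C_P=C_W$, $\texttt{D}_P=\texttt{D}_W$ and verification circuit $f\circ K\circ E_W$, then verify its hypothesis by taking a nearby $x^*\in S_b$, writing $x=E_Wv^*+p$ with $|p|\le\delta M<\eta_W d_{q,W}$, and applying Lemma~\ref{lem:q_noise}. The paper likewise reads $\texttt{T}_W(x^*)=1$ as exact membership in $C_W$ and identifies $\mathsf{X}$ with $\mathsf{X}'$ without comment, so your caveats on these two points only make explicit what the paper assumes silently.
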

\begin{proof}

 The verifier's check (Definition~\ref{def:meas_xz:v}) runs the procedure of Definition~\ref{def:pcpp:v} with $C_P=C_W$, $\texttt{D}_P=\texttt{D}_W$, $\texttt{T}_P=\texttt{T}_W$, verification circuit $f\circ K\circ E_W$, and registers $\mathsf{U}=\mathsf{I}$, $\mathsf{E}=\mathsf{X}$, $\mathsf{B}=\mathsf{A}$; the contents of $\mathsf{U}$ and $\mathsf{E}$ are then $i$ and $x$. Composing the verification circuit with the decoder gives $(f\circ K\circ E_W)(\texttt{D}_W(x))=f(KE_W\texttt{D}_W(x))=h(x)$.

It remains to verify the condition of Lemma~\ref{lem:pcpp}, which under this substitution states that $h(x)=b$ for every $b\in\F_2$ and every $x$ with $\dist(x,S_b)\le\delta$, where
\begin{align*}
S_b=\{\,x \mid 1\big(h(x)=b\big)\wedge \texttt{T}_W(x)=1\,\}.
\end{align*}
Fix $b$ and any $x$ with $\dist(x,S_b)\le\delta$, and choose $x^*\in S_b$ with $\dist(x,x^*)\le\delta$.
\begin{itemize}
\item Since $x^*\in S_b$, we have $\texttt{T}_W(x^*)=1$, so $x^*$ is a codeword of $C_W$ and $x^*=E_W v^*$ for some $v^*$.
\item Since $\dist(x,x^*)\le\delta$ and $x,x^*\in\F_2^M$, the Hamming weight satisfies $|x-x^*|=\dist(x,x^*)\cdot M\le\delta M<\eta_W d_{q,W}$, by the assumption $\delta M<\eta_W d_{q,W}$.
\item Write $x=x^*+p$ with $|p|<\eta_W d_{q,W}$. By Lemma~\ref{lem:q_noise}, $KE_W\texttt{D}_W(x)=KE_W\texttt{D}_W(p+E_W v^*)=KE_W v^*=Kx^*$; the same lemma with $p=0$ gives $KE_W\texttt{D}_W(x^*)=Kx^*$.
\item Hence $h(x)=f(KE_W\texttt{D}_W(x))=f(Kx^*)=f(KE_W\texttt{D}_W(x^*))=h(x^*)=b$, where the last equality holds since $x^*\in S_b$.
\end{itemize}
Thus $h(x)=b$ whenever $\dist(x,S_b)\le\delta$, which verifies the condition. The claim follows from Lemma~\ref{lem:pcpp} with the substitutions above:
\begin{align*}
\Pr[\text{verifier rejects}] \geq  - s + \sum |\alpha_{o,i,x,a}|^2 \cdot 1\Big( h(x) \neq i\Big) .
\end{align*} 
\end{proof}

\begin{lemma}[Implication of the measurement in the computational or Hadamard basis]
\label{lem:measxz}
Recall the notations introduced in Section~\ref{sec:notation} and Definition \ref{def:meas_xz:hp}.
Let $\mathsf{O}=(\mathsf{P}, \mathsf{W})$ be a quantum register.
Let $\ket{\phi}_{\mathsf{OIXA}}$ be the quantum state of the system after the measurement in the computational or Hadamard basis process.
Let the state be written as
 \begin{align*}
\ket{\phi} = \sum_{o,i,x,a} \alpha_{o,i,x,a}
\ket{o}_{\mathsf{O}}
\ket{i}_{\mathsf{I}} \ket{x}_{\mathsf{X}}
\ket{a}_{\mathsf{A}}.
\end{align*}

If $h(x) = f(KE_W\texttt{D}_W(x))$, then:
 \begin{align*}
\| Z_{\mathsf{I}}\ket{\phi} -
Q_\rd \ket{\phi}
\|^2 \leq  \Theta(1) \cdot
\sum |\alpha_{o,i,x,a}|^2 \cdot 1\Big(h(x)\neq i\Big)
.
\end{align*}
\end{lemma}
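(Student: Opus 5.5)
The plan is to show that both $Z_{\mathsf{I}}$ and $Q_\rd$ act diagonally on the basis states $\ket{o}\ket{i}\ket{x}\ket{a}$ (up to the ancilla register $\mathsf{T}$, taken to be initialized to $\ket{0}$ as in the definition of $Q_i$), and that they produce the same phase whenever $h(x)=i$. The bound then follows by orthogonality.

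First I would compute the action of $Q_\rd$. By~\eqref{eq:def-qi} with $i=\rd=M/m$, the partial composite decoder $D_{p,W,\rd}$ reads only the register $\mathsf{X}$; none of the $\mathsf{W}$ blocks are used. Hence the conjugation by $S_{\mathsf{W}}$ acts trivially on the relevant registers. On $\ket{0}_{\mathsf{T}}\ket{x}_{\mathsf{X}}$, $D_{p,W,\rd}$ writes $\texttt{D}_W(x)$ into $\mathsf{T}$. Then $(Z_{\tilde f})_{\mathsf{T}}$ multiplies by $(-1)^{\tilde f(\texttt{D}_W(x))}=(-1)^{f(KE_W\texttt{D}_W(x))}=(-1)^{h(x)}$, and $D_{p,W,\rd}^\dagger$ uncomputes $\mathsf{T}$ back to $\ket{0}$. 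So $Q_\rd\ket{o,i,x,a}=(-1)^{h(x)}\ket{o,i,x,a}$, tensored with $\ket{0}_{\mathsf{T}}$. Similarly, $Z_{\mathsf{I}}\ket{o,i,x,a}=(-1)^{i}\ket{o,i,x,a}$.

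Second, I would subtract the two. This gives
\[
Z_{\mathsf{I}}\ket{\phi}-Q_\rd\ket{\phi}=\sum_{o,i,x,a}\alpha_{o,i,x,a}\big((-1)^{i}-(-1)^{h(x)}\big)\ket{o,i,x,a}.
\]
The coefficient vanishes when $h(x)=i$ and has modulus $2$ otherwise. Since the basis vectors are orthonormal, the squared norm equals $4\sum|\alpha_{o,i,x,a}|^2\,1(h(x)\neq i)$, which is the claimed $\Theta(1)$ bound.

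The only delicate point is bookkeeping. The register $\mathsf{X}$ on which $Q_\rd$ reads outcomes must be identified with the register that $h$ is evaluated on in the statement, i.e.\ the verifier's copy of the reported outcomes. I also need to treat the identification of $\mathsf{X}'$ versus $\mathsf{X}$ in Definition~\ref{def:meas_xz:hp} as immaterial here, since the state is already written in the $\mathsf{X}$ basis. Finally, I need to check that $D_{p,W,\rd}$ truly ignores $\mathsf{W}$, which holds from Definition~\ref{def:decode:comp_p}. No approximation is needed; the identity is exact.
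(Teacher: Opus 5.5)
Your proposal is correct and follows the paper's argument. The paper restricts $\ket{\phi}$ to the component $\ket{\beta}$ supported on $h(x)=i$ and observes $Z_{\mathsf{I}}\ket{\beta}=Q_\rd\ket{\beta}$, which is exactly your computation that both operators act diagonally with phases $(-1)^{i}$ and $(-1)^{h(x)}$ because $D_{p,W,\rd}$ reads only $\mathsf{X}$; it then bounds the rest by the triangle inequality, whereas you evaluate the difference directly and get the constant $4$ exactly.
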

\begin{proof}

Let
 \begin{align*}
\ket{\beta} = \sum_{o,i,x,a} 1\Big(h(x) = i\Big)
 \cdot \alpha_{o,i,x,a}
\ket{o}_{\mathsf{O}}
\ket{i}_{\mathsf{I}} \ket{x}_{\mathsf{X}}
\ket{a}_{\mathsf{A}}.
\end{align*}

Note that
\begin{align*}
Z_{\mathsf{I}}\ket{\beta} =
Q_\rd \ket{\beta}.
\end{align*}

Therefore,
 \begin{align*}
&~\| Z_{\mathsf{I}}\ket{\phi} -
Q_\rd \ket{\phi}
\|^2 \\
\leq &~ \Theta(1) \cdot (\| \ket{\phi} -
 \ket{\beta }
\|^2 + \|Z_{\mathsf{I}}\ket{\beta} -
Q_\rd \ket{\beta}\|^2)\\
\leq &~ \Theta(1) \cdot
\sum |\alpha_{o,i,x,a}|^2 \cdot 1\Big(h(x) \neq i\Big)   .
\end{align*}

\end{proof}

\section{QIPCP protocol for QMA}
\label{sec:energy}

In this section, we describe our protocol for estimating the energy of a dual-basis projector Hamiltonian, and thereby distinguishing between the yes and no instances of the corresponding QMA-complete problem. This will establish our main result, restated as Theorem~\ref{thm:sqiop} below.

To estimate the energy, we sample a term according to its coefficient in the Hamiltonian (each term being a projector) and then project the witness state according to that projector by implementing the corresponding measurement.
The measurement procedures required for this have already been developed in Sections~\ref{sec:gme} and~\ref{sec:measurement}. Here we combine the results from these two sections and apply them to the specific case where the function $f$ arises from a projector in the support of a dual-basis projector Hamiltonian (Definition~\ref{def:d_ham}).

\subsection{Setup}
\label{sec:setup}

We summarize the main notation, parameters and assumptions that are made throughout this section.

Firstly, $H=\E_{i} H_i$ s a dual-basis projector Hamiltonian, as introduced in Definition~\ref{def:d_ham}. According to Theorem~\ref{thm:dual_ham}, it is $\QMA$-hard to distinguish between the case where $H$ has smallest eigenvalue exponentially close to $0$, or bounded away from $0$ by a constant. We let $B_q$ be the number of qubits on which $H$ acts.

Secondly, a quantum LTC code $C$ is chosen, which encodes\ $B_q$ qubits into $M$ qubits, has distance $d_q$ and locality $q$. We will use that  $C_q$ admits an efficient decoder $\tD_q$ for errors of weight up to $\eta_q d_q$, where $d_q$ is the quantum distance of the code. These parameters satisfy the conditions of Theorem~\ref{thm:qltc} and Fact~\ref{fact:decoder}.

The $M$ qubits are divided into $r$ groups of $m$ qubits, where $r$ is chosen sufficiently large that the condition~\eqref{ass:etad} on $m$ holds; this can be satisfied with $r=\poly\log(B_q)$.

Further notation and parameters associated with the codes is summarized in Section~\ref{sec:notation-codes}. The quantum LTC code is used to encode the witness.

Thirdly, we fix a PCPP verifier and corresponding proof generating algorithm for the \textsc{CktVal} problem. The PCPP has proximity parameter $\delta$ and soundness error $s$. (Further notation associated with the PCPP is recalled in Section~\ref{sec:notation-pcpp}.)

The main assumption relating these parameters are the following.
\begin{itemize}
\item

 The PCPP proximity parameter $\delta$ is chosen in such a way that
\begin{equation}\label{ass:cond}
\delta M < \eta_W d_{q,W}\;.
\end{equation}
Intuitively, this relation is needed for the final PCPP verifier to interact well with the quantum code's classical part $C_W$; it first arises from the use of Lemma~\ref{lem:pcpp}. A report accepted by the PCPP lies within relative distance $\delta$ of a codeword of $C_W$, hence differs from it in at most $\delta M$ coordinates; when $\delta M<\eta_W d_{q,W}$ the decoder $\texttt{D}_W$ recovers its logical value (Fact~\ref{fact:decoder}). Since $\eta_W d_{q,W}=\Theta(M/\poly\log B_q)$ (Theorem~\ref{thm:qltc}, Fact~\ref{fact:decoder}), dividing by $M$ shows that $\delta M<\eta_W d_{q,W}$ is equivalent to $\delta<\eta_W d_{q,W}/M=\Theta(1/\poly\log B_q)$, which is satisfied by choosing a sufficiently small $\delta$ with $\delta^{-1}=\Theta(\poly\log B_q)$. 
\item The number $m$ of qubits of the qLTC-encoded witness exchanged in each round is chosen such that
    \begin{equation}\label{ass:etad}
    \eta_q d_q \geq 2m\;.
    \end{equation}
    This can be achieved by choosing $r$ to be sufficiently large, $r=\polylog(B_q)$ will suffice given the quantum LTC code parameters (which are all a multiplicative $\polylog$ away from optimal). Intuitively this condition ensures that the prover manipulates too few qubits at a time to perform an undetected attack (in the basis currently being verified).
    \item The soundness error $s$ of the PCPP is a small enough constant. This is because this soundness error appears as an additive term in the final soundness error.
\end{itemize}

\subsection{Actions of the honest prover}

We define the behavior of an honest prover during the energy test, which consists of two phases: the global measurement extraction phase and the measurement phase.

\begin{mdframed}
\begin{definition}[Energy test, honest prover]
\label{def:e:hp}

The honest prover performs the following steps:
\begin{itemize}
\item The verifier
sends the basis $W\in\{X, Z\}$ to the prover.
\item The honest prover performs the actions described in Definition~\ref{def:gme:hp}.
\item The honest prover performs the actions described in Definition~\ref{def:meas_xz:hp}.
\end{itemize}
\end{definition}
\end{mdframed}

\subsection{Actions of the verifier}
We define the verifier's strategy in the energy test, which is designed to verify that the witness state has low energy.
In what follows, the case $s_c = 3$ corresponds to the energy test, whose success probability is proportional to the energy of the Hamiltonian.
The case $s_c = 1$ is intended to enforce honest behavior from the prover during the global measurement extraction phase, while the case $s_c = 2$ ensures honest behavior during the measurement phase.

\begin{mdframed}
\begin{definition}[Energy test, verifier]
\label{def:e:verifier}
Recall the notations introduced in Section~\ref{sec:notation}.
 Let $H=\E_{i} H_i$ be as  in Definition \ref{def:d_ham}, where $H_i =  {I + W_{i, f_i}}/{2}$, $W_i$ is either $X$ or $Z$, $f_i:\F_2^{B_q} \rightarrow \F_2$ is a boolean function.

The verifier performs the following steps:
\begin{itemize}
\item The verifier samples the phase of the protocol to test by choosing $s_c \in_R [3]$,
and selects a round to test in the global measurement extraction phase by choosing $s'_g \in_R [n]$ with probability $1/2$, and setting $s'_g = 0$ with probability $1/2$.
If the verifier tests the global measurement extraction phase ($s_c = 1$), it sets $s_g = e_{s'_g}$, where $e_{s'_g} \in \{0,1\}^{n+1}$ is the bit string with all entries $0$ except the $s'_g$-th bit, which is $1$.
If the verifier does not test during the global measurement extraction phase ($s_c \ne 1$), it sets $s_g = 0^{n+1}$.
\item The verifier samples a term $H_i$ by selecting the index $i$ uniformly at random. The verifier
sends the basis $W_i$ to the prover.
\item The verifier performs the actions described in Definition~\ref{def:gme:v} with $ W =W_i$, $ s = s_g$.
\item If $s_c = 1$, the verifier accepts if and only if the preceding procedure accepts.
\item If $ s_c = 2$, the verifier performs the actions described in Definition~\ref{def:meas_xz:v} with $ h(x) = f_i(KE_W\texttt{D}_W(x) )$ and $s=1$. The verifier accepts if and only if the procedure accept.
\item If $ s_c =3$, the verifier performs the actions described in Definition~\ref{def:meas_xz:v} with $ h(x) = f_i(KE_W\texttt{D}_W(x) )$ and  $s=0$. The verifier measures the register $\mathsf{I}$ and accept if and only if the measurement result is $ 0 $.
\end{itemize}
\end{definition}
\end{mdframed}

\subsection{Main results}

\begin{theorem}\label{thm:sqiop}
$\QMA\subseteq \QIPCP(O(\poly), O(\poly\log),O(\poly\log))$   .
\end{theorem}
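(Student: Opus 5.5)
We reduce from the QMA-complete problem of Corollary~\ref{cor:dual_ham_amp}, with the amplified terms written as $W$ projections $(I+W_{f_i})/2$ for efficiently computable $f_i$. We then show that the energy test of Definition~\ref{def:e:verifier} separates YES from NO instances with a constant gap. The parameters are fixed as in Section~\ref{sec:setup}:
\begin{itemize}
\item the qLTC of Theorem~\ref{thm:qltc}, with $M=\Theta(B_q)$;
\item $r=M/m=\polylog(B_q)$ so that~\eqref{ass:etad} holds;
\item a PCPP (Lemma~\ref{lem:pcpp_amp}) with $\delta^{-1}=\polylog$ so that~\eqref{ass:cond} holds, and constant small soundness $s$.
\end{itemize}
With these choices the resource bounds are direct. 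The tester of Lemma~\ref{lem:ltc_amp} makes $Q=\polylog$ queries, the PCPP makes $\polylog$ queries (since $\delta^{-1}$ is $\polylog$), and there are $2r+O(1)=\polylog$ messages. Communication is polynomial, because the PCPP proof has polynomial length and the description of $f_i$ is polynomial.

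\textbf{Completeness.} The witness is $E_q(\ket{\Gamma})$, where $\ket\Gamma$ is a ground state with energy $2^{-p(n)}$, and the prover is the honest one of Definition~\ref{def:e:hp}. In the test $s_c=1$, the reported strings $x$ together with the unmeasured blocks lie in $C_W$ (up to the $S_W$ rotation), so the tester never rejects. In $s_c=2$, the PCPP has perfect completeness. In $s_c=3$, decoding a codeword gives the logical outcome distribution of measuring $W_{f_i}$ on $\ket\Gamma$ (Part~I of Lemma~\ref{lem:DO2Dq_beta}). So the rejection probability is $\tfrac13\langle H\rangle\le 2^{-p(n)}$.

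\textbf{Soundness.} Fix an arbitrary prover and witness, and let $\varepsilon_g,\varepsilon_m$ be the rejection probabilities conditioned on $s_c=1$ and $s_c=2$ respectively. If either is larger than a small constant, the overall rejection is already constant. Otherwise we proceed in four steps.
\begin{enumerate}
\item Lemma~\ref{lem:gme}, with $\E_i\varepsilon_i\le 2\varepsilon_g$ (because $s'_g$ is uniform with probability $1/2$), bounds the distance between $U_bQ_0\ket{0}\ket\phi$ and $Q_rU_b\ket0\ket\phi$ by $O(r^2M\varepsilon_g/(\kappa(\eta_qd_q-m)))=\polylog\cdot\varepsilon_g$.
\item Lemma~\ref{lem:m_zx:pcpp} bounds the weight of the branches with $h(x)\ne i$ by $\varepsilon_m+s$. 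Lemma~\ref{lem:measxz} then shows that $Z_{\mathsf I}$ on the final state is close to $Q_r$, up to $O(\varepsilon_m+s)$.
\item $Q_0$ acts only on the initial witness register, as the observable $W''_{\tilde f}$. Applying Lemma~\ref{lem:DO2Dq} together with Lemma~\ref{lem:qltclocaltest} (the $s'_g=0$ branch, which runs the qLTC tester on the untouched witness), we replace $W''$ by the logical observable $W'_f=D_q^\dagger W_{f}D_q$. The error is the weight on $wt_{\text{cent}}\ge\eta_qd_q$, which is $O(\polylog\cdot\varepsilon_g)$.
\item Chaining these approximations, the distribution of the $\mathsf I$ outcome is close to measuring $W_{f_i}$ on the decoded logical state $\rho$. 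Hence the rejection probability in $s_c=3$ is at least $\E_i\mathrm{tr}(H_i\rho)-O(\sqrt{\polylog\cdot\varepsilon_g}+\sqrt{\varepsilon_m+s})\ge c_{\LH}-\dots$.
\end{enumerate}
Balancing the three cases gives overall rejection probability bounded below by a constant, or by $1/\polylog$.

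\textbf{Main obstacle.} The loss in Lemma~\ref{lem:gme} carries a $\polylog$ factor, $r^2M/(\kappa d_q)$. With constant $\varepsilon_g$ this factor destroys the gap, so $\varepsilon_g$ must be smaller than $1/\polylog$ for the chaining to work. I would fix this by amplifying the tester inside each validity test, taking $\kappa$ large in Lemma~\ref{lem:ltc_amp} at a $\polylog$ query cost. Alternatively, the gap is $1/\polylog$ after one run, and sequential repetition $\polylog$ times restores a constant gap while keeping query and round complexity $\polylog$. Either route stays within $\QIPCP(\poly,\polylog,\polylog)$. I expect carefully tracking this parameter balance, and checking that the measurement-extraction states $\ket{\phi_k}$ satisfy the hypotheses of each lemma, to be the delicate part.
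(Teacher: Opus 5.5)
Your architecture matches the paper's. You use the reduction from Corollary~\ref{cor:dual_ham_amp}, the parameters of Section~\ref{sec:setup}, the same resource count and the same honest-prover completeness. Your soundness chain runs through Lemmas~\ref{lem:gme}, \ref{lem:m_zx:pcpp}, \ref{lem:measxz}, \ref{lem:DO2Dq} and~\ref{lem:qltclocaltest}, as the paper's does. The two diverge exactly at the obstacle you name.

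The paper takes your first option. It sets $\kappa=\Theta(r^2M/(\eta_qd_{q,W}-m))$ in~\eqref{eq:energy-6} so that Lemma~\ref{lem:gme} yields $\Theta(\E_i\eps_i)$, asserts a constant gap for one run, and then repeats $O(1)$ times. That step does not go through:
\begin{itemize}
\item Lemma~\ref{lem:ltc_amp} only guarantees rejection probability $\min\{\kappa\,d_H(x,C)/n,\,1-\eps\}$, and Lemma~\ref{lem:val} silently drops the cap.
\item The far-from-code weight $w$ of the state tested in round $k$ can equal $1$, while the rejection probability $\eps_k$ is at most $1$. So no tester, however amplified, gives $w\le\eps_k/X$ with $X>1$.
\item The best available is $w\le\eps_k/(1-\eps)$. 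Summing the $r$ per-round errors in norm (Cauchy--Schwarz) then still leaves $\Theta(r^2\,\E_k\eps_k)$.
\item With the choice~\eqref{eq:energy-6}, the intermediate claim $w\le\eps_k/\Theta(r^2)$ fails for any prover whose reports are far from $C_W$ with certainty.
\end{itemize}
So amplifying $\kappa$ cannot absorb the $r^2$.

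Your second option is the one that actually closes the argument. With a $\polylog$-query tester, a single run rejects with probability $\Omega(1/r^2)=1/\polylog$ when $\E_k\eps_k\ge c_0/r^2$, and with constant probability otherwise. Completeness is $1-2^{-p(n)}$. So run $\polylog$ sequential instances on $\polylog$ encoded copies of the witness and accept iff all accept. This gives completeness $1-\polylog\cdot 2^{-p(n)}$ and soundness below $1/3$, costing only a $\polylog$ factor in rounds and queries. The repetition stays sound against copies entangled with each other and with the prover. To see this, condition on earlier runs accepting and absorb the untouched copies into the prover's register, which single-run soundness permits.

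Your additive $\sqrt{\cdot}$ error accounting in step 4 also avoids the paper's multiplicative $1.1$ loss. With it you drop the paper's extra requirement $c_{\LH}>0.1$, so any constant $c_{\LH}>0$ suffices.

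Relative to the paper, you give up the claimed constant single-run gap. In exchange, your argument is actually supported by the lemmas and still yields the theorem as stated.
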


\begin{proof}

Let $\texttt{V}$ be the verifier in Definition~\ref{def:e:verifier}. Assume that the codes used for the protocol have been chosen in such a way that $\delta M < \eta_W d_{q,W}$, where $\delta$ is the proximity error of the PCPP and $\eta_W d_{q,W}$ the decoding radius of the quantum code's classical part $C_W$, as in condition~\eqref{ass:cond}. This can be achieved by choosing a good enough PCPP and quantum code.

We will show that $\texttt{V}$ is a $\poly\log$-message
quantum interactive probabilistically checkable proof system for $\mathcal{DLH}(\poly, 2^{-p(n)}, c_\LH)$ (see Definition~\ref{def:dlh}), where $c_\LH > 0$ is the constant from Corollary~\ref{cor:dual_ham_amp}. It follows that
\[
\mathcal{DLH}(\poly, 2^{-p(n)}, c_\LH) \in \QIPCP(O(\poly), O(\poly\log), O(\poly\log))\;.
\]
By Corollary~\ref{cor:dual_ham_amp}, $\mathcal{DLH}(\poly, 2^{-p(n)}, c_\LH)$ is $\QMA$-complete. Therefore,
\[
\QMA \subseteq \QIPCP(O(\poly), O(\poly\log), O(\poly\log))\;.
\]
We now show the required properties of $\texttt{V}$. Let $n$ be the instance size, i.e., the description length of a $\mathcal{DLH}$ instance $H$. In particular, the number of qubits $B_q$ on which $H$ acts satisfies $B_q\leq n$. For the construction of the QIPCP, we use parameters as introduced in Section~\ref{sec:setup}.

First, by inspecting Definitions~\ref{def:e:verifier} and~\ref{def:qiop}, it is clear that $\texttt{V}$ has polynomial-time complexity.

\medskip
{\bf Rounds:}
By Definitions~\ref{def:e:verifier} and~\ref{def:qiop}, the verifier and the prover exchange a total of $2M/m + 2$ messages.

Recall that $M$ is the length of the qLTC codeword, and $m$ is chosen to be smaller than $\eta_q d_q$ by at least a constant factor.
By Theorem~\ref{thm:qltc}, $d_q = M / \poly\log(M)$, and by Fact~\ref{fact:decoder}, $\eta_q = 1 / \poly\log(M)$.
Moreover, the qLTC code has a linear code length $M=O(n)$.
Therefore,
\[
2M/m + 2 = \poly\log(n)\;.
\]

\medskip
{\bf Communication complexity:}
We compute the communication complexity according to Definitions~\ref{def:qiop} and~\ref{def:e:verifier}.
In each of the first $M/m$ rounds, the verifier sends a quantum message of length $m$ (plus $1$ classical bit for the basis choice).
 The prover's message consists of the $m$ measurement outcomes of the block (the reported block).

The communication in the last round is $l_w(\poly(M))$.
Here, the outer $l_w$ comes from the PCPP proof, while $M$ corresponds to the size of the implicit input (i.e., the input to the circuit being verified, the length-$M$ report).
This circuit decodes the report with $\texttt{D}_W$, maps it to its logical form via $K E_W$, and evaluates the logical Boolean function corresponding to the sampled projector.
Since decoding with $\texttt{D}_W$, transforming the Boolean function into its logical form, and evaluating the Boolean function itself are all polynomial-time operations, the PCPP also has an explicit input (the circuit being verified) of size $\poly(M)$.
Therefore, the total communication complexity is $\poly(n)$.

\medskip
{\bf Query complexity:}
By Definitions~\ref{def:qiop} and~\ref{def:e:verifier}, the query complexity in each of the first $M/m$ rounds arises from the local tester for the $W$-code (in the qLTC).

We analyze the query complexity of a single invocation of the local tester for the $W$-code.
Let $\varepsilon=\Theta(1)$ be a small constant that will be specified later. Note that the soundness parameter of the LTC is $\rho=\Theta(1/\poly\log(n))$. Define
 \begin{equation}\label{eq:energy-6}
\kappa = \Theta\left(\frac{\rd^2 M}{\eta_q d_{q,W} - m}\right) = \poly\log(n)\;,
\end{equation}
since $\eta_q d_{q,W} \geq \eta_q d_{q} \geq 2m$ by assumption~\eqref{ass:etad} and $M/m=\rd=\poly\log(n)$.
By setting $R = \Theta(1)$ in Lemma~\ref{lem:ltc_amp}, we obtain the query complexity of the local tester for the $W$ codes of the qLTC as
 \[{\color{revisionfinal}
Q = \Theta(1) \cdot \frac{\kappa}{\rho} \cdot \frac{\log(1/\varepsilon)}{1 - \varepsilon} = \poly\log(n)\;.
}\]
 Therefore, in each of the first $\rd = M/m$ rounds, the total query complexity is that of the local tester:
\[{\color{revisionfinal}
\poly\log(n)\;.
}\]
The query complexity in the final round is due to the PCPP verifier, which has soundness error $s=\Theta(\eps)$, and is thus at most
\[
O\left(\frac{\log(s)}{\log(1 - \Omega(\delta))}\right) = \poly\log(n)\;,
\]
which follows from the application of Lemma~\ref{lem:pcpp_amp}.
Thus, the overall query complexity of the verifier is $\poly\log(n)$.

\medskip
{\bf Completeness:}
By Definition~\ref{def:e:hp}, note that the honest prover uses a qLTC encoding with no error, so the local tester always accepts. Moreover, the PCPP statement in the measurement phase verifies a valid computation; hence, the corresponding instance is always a yes-instance. By the completeness of the PCPP verifier, the PCPP verifier in this phase succeeds with probability~$1$. Therefore, all tests except the energy test pass with probability~$1$.

Let $\ket{\psi}_{\mathsf{OIXA}}$ denote the quantum state of the system after the measurement in the computational or Hadamard basis, as defined in Lemma~\ref{lem:measxz}, and let $\ket{\phi}_{\mathsf{WXP}}$ be the quantum state of the system before the global measurement extraction, as defined in Lemma~\ref{lem:gme}.

If $s_c = 3$, then by Definition~\ref{def:e:verifier}, the energy test passes if and only if the measurement result of the register $\mathsf{I}$ is $0$ with the corresponding probability:
\begin{align*}
 \E_i \big\| \frac{I + Z_\mathsf{I}}{2} \ket{\psi} \big\|^2
=&~ \E_i \big\| \frac{I + Q_r}{2} \ket{\psi} \big\|^2 \\
=&~ \E_i \big\| \frac{I + Q_0}{2} \ket{\phi} \big\|^2 \\
=&~ \E_i \big\| (I - W_{i, f_i}) \ket{\phi} \big\|^2 \\
=&~ \E_i \big\| (I - H_i) \ket{\phi} \big\|^2 \\
=&~ \bra{\phi} (I - H) \ket{\phi}\;,
\end{align*}
where the first step follows from Lemma~\ref{lem:m_zx:pcpp}, Lemma~\ref{lem:measxz}, and the fact that the verifier never rejects in the measurement phase; the second step follows from Lemma~\ref{lem:gme} and the verifier never rejecting in the global measurement extraction phase; and the third step follows from Lemma~\ref{lem:DO2Dq} and the fact that the quantum witness is a qLTC encoding.
As a result,
\begin{align*}
\Pr[\text{\texttt{V} accepts}]
=&~ \Pr[\text{\texttt{V} accepts}~|~s_c = 1]\cdot \Pr[s_c = 1] + \Pr[\text{\texttt{V} accepts}~|~s_c = 2]\cdot \Pr[s_c = 2] \\
&~\qquad+\Pr[\text{\texttt{V} accepts}~|~s_c = 3]\cdot \Pr[s_c = 3]\\
=&~ \frac{1}{3} + \frac{1}{3} + \frac{1}{3} \bra{\phi} (I - H) \ket{\phi}\\
\geq&~ 1 - \Theta(2^{-p(n)})\;.
\end{align*}

\medskip
{\bf Soundness:}
Let $\ket{\psi}_{\mathsf{OIXA}}$ denote the quantum state of the system after the measurement in the computational or Hadamard basis, as defined in Lemma~\ref{lem:measxz}. This corresponds to the state of the quantum system after the verifier performs the procedure described in Definition~\ref{def:meas_xz:v} within Definition~\ref{def:e:verifier}, in both cases $s_c = 2$ and $s_c = 3$.
Let $\ket{\phi}_{\mathsf{WXP}}$ denote the quantum state of the system before the global measurement extraction procedure, as defined in Lemma~\ref{lem:gme}, i.e., the state of the quantum system before the verifier executes the procedure from Definition~\ref{def:gme:v} in Definition~\ref{def:e:verifier}.
Then,
\begin{align*}
\Pr[\text{energy test accepts}~|~s_c=3] =&~ \E_i \Big\|\frac{I+Z_{\mathsf{I}}}{2}\ket{\psi}\Big\|^2\;.
\end{align*}
Note that for any vectors $a,b$ and $k>0$,
\begin{align*}
\|a+b\|^2 \leq \frac{1+k^2}{k^2} \|a\|^2 + (1+k^2) \|b\|^2\;.
\end{align*}
Using this with $k=10$,
\begin{align*}
\Big\|\frac{I+Z_{\mathsf{I}}}{2}\ket{\psi}\Big\|^2
\leq&~ 1.01\Big\|\frac{I+Q_r}{2} \ket{\psi} \Big\|^2 + 101 \Big\| \frac{Z_{\mathsf{I}}\ket{\psi} - Q_r \ket{\psi}}{2} \Big\|^2\;.
\end{align*}
Using the same reasoning,
\begin{align*}
\Big\|\frac{I+Q_r}{2} \ket{\psi}\Big\|^2
\leq&~ 1.01\Big\|U_b \frac{I+Q_0}{2} (\ket{0}_{\mathsf{T}} \ket{\phi})\Big\|^2 + 101 \Big\|\frac{(U_b Q_{0} - Q_{r} U_b )(\ket{0}_{\mathsf{T}} \ket{\phi})}{2}\Big\|^2\;.
\end{align*}
Finally,
\begin{align*}
\Big\|\frac{I+Q_{0} }{2}(\ket{0}_{\mathsf{T}} \ket{\phi})\Big\|^2
\leq&~ 1.01\|(I-W_{f_i})(\ket{0}_{\mathsf{T}} \ket{\phi})\|^2 + 101 \Big\|\frac{(Q_{0}  - (I-2W_{f_i}))(\ket{0}_{\mathsf{T}} \ket{\phi})}{2}\Big\|^2\;.
\end{align*}
Summarizing the preceding equations,
\begin{align}\label{eq:energy-1}
\Pr[&\text{energy test accepts}~|~s_c=3]\notag\\
&\leq 101 \E_i\Big\| \frac{Z_{\mathsf{I}}\ket{\psi} - Q_r \ket{\psi}}{2} \Big\|^2 + 1.01\cdot 101 \Big\|\frac{(U_b Q_{0} - Q_{r} U_b )(\ket{0}_{\mathsf{T}} \ket{\phi})}{2}\Big\|^2  \notag\\
&\quad+    1.01^2\Big(  1.01\|(I-W_{f_i})(\ket{0}_{\mathsf{T}} \ket{\phi})\|^2 + 101 \Big\|\frac{(Q_{0}  - (I-2W_{f_i}))(\ket{0}_{\mathsf{T}} \ket{\phi})}{2}\Big\|^2\;\Big) \;.
\end{align}
Assume that
\begin{align}
\Pr[\text{global measurement extraction accepts}~|~ s_c=1, s'_g = 0] \geq&~ 1-\eps\;,\notag\\
\Pr[\text{global measurement extraction accepts}~|~ s_c=1, s'_g \neq 0] \geq&~ 1-\eps\;,\notag\\
\Pr[\text{measurement phase accepts}~|~ s_c=2] \geq&~ 1-\eps\;,\label{eq:energy-9}
\end{align}
where $\eps$ is the previously introduced small constant that will be specified later. By Lemma~\ref{lem:m_zx:pcpp} and Lemma~\ref{lem:measxz},
\begin{equation}\label{eq:energy-2}
\|Z_{\mathsf{I}}\ket{\psi} - Q_r \ket{\psi}\|^2 \leq \Theta(s+\eps)\;.
\end{equation}
Moreover, suppose $\Pr[\text{global measurement extraction accepts}~|~ s_c=1, s'_g = i] = 1-\eps_i$ for some $\eps_i$, then by Lemma~\ref{lem:gme},
 \begin{equation}\label{eq:energy-3}
\| U_b Q_{0} (\ket{0}_{\mathsf{T}} \ket{\phi}) - Q_{r} U_b (\ket{0}_{\mathsf{T}} \ket{\phi})\|^2
\leq {\Theta\Big(\frac{\rd^2\cdot \E_i\varepsilon_i\cdot M}{\kappa\cdot (\eta_q d_{q,W} - m)}\Big)}\;.
\end{equation}
Let
\begin{align*}
\ket{\phi} = \sum_{(a,b)\in {\cal S}} \alpha_{a,b} X(a)Z(b) \ket{\eta_{a,b}}\;.
\end{align*}
By Lemma~\ref{lem:DO2Dq},
\begin{align*}
\|Q_{0} (\ket{0}_{\mathsf{T}} \ket{\phi}) - (I-2W_{f_i})(\ket{0}_{\mathsf{T}} \ket{\phi})\|^2
\leq \Theta(1)\cdot \sum_{(a,b)\in {\cal S}} 1(wt_{\text{cent}}(X(a)Z(b)) \geq \eta_q d_q)\cdot |\alpha_{a,b}|^2\;.
\end{align*}
Let $c$ be the number of repetitions of the local testing verifier from Definition~\ref{def:qltc-local-testing}, which will be chosen below in~\eqref{eq:def-ca}. By Lemma~\ref{lem:qltclocaltest},
\begin{align*}
1-\eps \leq&~ \sum_{(a,b)\in {\cal S}} |\alpha_{a,b}|^2\cdot \Big(1  - \frac{ \rho }{2M}\cdot wt_{\text{cent}}(X(a)Z(b)) \Big)^c \\
\leq&~ \sum_{(a,b)\in {\cal S}} 1(wt_{\text{cent}}(X(a)Z(b)) \geq \eta_q d_q)\cdot |\alpha_{a,b}|^2\cdot \Big(1  - \frac{ \rho }{2M}\cdot \eta_q d_q\Big)^c \\
&~ + \sum_{(a,b)\in {\cal S}} 1(wt_{\text{cent}}(X(a)Z(b)) < \eta_q d_q)\cdot |\alpha_{a,b}|^2\;.
\end{align*}
Therefore,
\begin{equation}\label{eq:energy-4}
\|Q_{0} (\ket{0}_{\mathsf{T}} \ket{\phi}) - (I-2W_{f_i})(\ket{0}_{\mathsf{T}} \ket{\phi})\|^2
\leq \Theta\Big( \eps\Big(1  - \frac{ \rho }{2M}\cdot \eta_q d_q\Big)^c \Big)^{-1}\Big)\;.
\end{equation}
Plugging the bounds~\eqref{eq:energy-2},~\eqref{eq:energy-3} and~\eqref{eq:energy-4} in~\eqref{eq:energy-1} and using the assumptions~\eqref{eq:energy-9} we have obtained that
 \begin{align}
\Pr[\text{energy test accepts}]
\leq&~ 1.1 \cdot \bra{\phi} (I-H) \ket{\phi}
+ \Theta(s+\eps)
+ \Theta\Big(\frac{\rd^2\cdot \E_i\varepsilon_i\cdot M}{\kappa\cdot (\eta_q d_{q,W} - m)}\Big) \notag\\
&~\qquad + \Theta\Big( \eps\Big(1 - \Big(1  - \frac{ \rho }{2M}\cdot \eta_q d_q\Big)^c\Big)^{-1}\Big)\;.\label{eq:energy-8}
\end{align}
Note that
\begin{align*}
\Pr[\text{global measurement extraction accepts}~|~ s_c=1, s'_g \neq 0] = 1 - \E_i\varepsilon_i\;.
\end{align*}
Choose
\begin{align}
c :=&~ \frac{\Theta(1)}{-\log(1  - \frac{ \rho }{2M}\cdot \eta_q d_q)} = \poly\log(n)\;.\label{eq:def-ca}
\end{align}
Plugging back into~\eqref{eq:energy-8} and using also the choice of $\kappa$ in~\eqref{eq:energy-6},
\begin{align*}
\Pr[\text{energy test accepts}]
\leq&~ 1.1 \cdot \bra{\phi} (I-H) \ket{\phi} + \Theta(\eps)\\
\leq&~ 1.1  - 1.1 c_\LH + \Theta(\eps)\;.
\end{align*}
If any of the following fails:
\begin{align*}
\Pr[\text{global measurement extraction accepts}~|~ s_c=1, s'_g = 0] \geq&~ 1-\eps\;, \\
\Pr[\text{global measurement extraction accepts}~|~ s_c=1, s'_g \neq 0] \geq&~ 1-\eps\;, \\
\Pr[\text{measurement phase accepts}~|~ s_c=2] \geq&~ 1-\eps\;,
\end{align*}
then
\begin{align*}
\Pr[\text{\texttt{V} accepts}] \leq 1 - \Theta(\eps)\;.
\end{align*}
Note that we can take $c_\LH$ in Corollary~\ref{cor:dual_ham_amp} to be sufficiently large, and in particular strictly larger than $0.1$.
As a result, in all cases, we get that $\Pr[\text{\texttt{V} accepts}]$ is smaller than some constant less than $1$, as desired.

By applying sequential repetition a constant number of times (since $c_\LH = \Theta(1),\eps = \Theta(1)$), we obtain a protocol in which the verifier accepts yes-instances with probability greater than $2/3$ and accepts no-instances with probability less than $1/3$.

\end{proof}

\newpage
\onecolumn
\appendix

\bibliographystyle{alpha}
\newcommand{\etalchar}[1]{$^{#1}$}

\end{document}